\newtheorem{myDef}{Definition}[section]
\newtheorem{prop}[myDef]{Proposition}
\newtheorem{lemma}[myDef]{Lemma}
\newtheorem{remark}[myDef]{Remark}
\newtheorem{theorem}[myDef]{Theorem}
\newtheorem{coro}[myDef]{Corollary}
\newtheorem{example}[myDef]{Example}
\def\Z{\mathbb{Z}}
\def\Zp{\mathbb{Z}_{(p)}}
\def\hatZp{\widehat{\mathbb{Z}}_{(p)}}
\def\Zx{\mathbb{Z}[x ]}
\def\hatZpanglex{\hatZp\langle x\rangle}
\def\Zpx{\mathbb{Z}_{(p)}[x]}
\def\Q{\mathbb{Q}}
\def\Qx{\mathbb{Q}[x]}
\def\N{\mathbb{N}}
\def\C{\mathbb{C}}
\def\sol{{\rm Sol}}
\def\rank{{\rm rank}}
\def\height{{\rm height}}
\def\CM{{\rm CMAT}}
\def\hf{{\rm PHNF}}
\def\lc{{\bf LC}}
\def\lt{{\bf LT}}
\def\gb{Gr{\"o}bner basis}
\def\Div{{\rm Divide}}
\def\f{{\bf{f}}}
\def\g{{\bf{g}}}
\def\r{{\bf{r}}}
\def\u{{\bf{u}}}
\def\v{{\bf{v}}}
\def\e{{\bf{e}}}
\def\0{{\bf{0}}}
\def\Syz{{\bf{Syz}}}
\def\X{\mathbb{X}}
\def\fb{{\mathbbm{f}}}
\def\gHNF{{\rm\,GHNF\,}}
\def\gHNFs{{\rm\,GHNFs\,}}
\def\cref#1{(\ref{#1})}
\begin{document}
\begin{CJK*}{GBK}{times new roman}

\title{A Polynomial-time Algorithm to Compute
 Generalized Hermite Normal Forms of Matrices over $\Z[x]$}
\author{Rui-Juan Jing, Chun-Ming Yuan, Xiao-Shan Gao\thanks{Corresponding Author.}\\
 KLMM,  Academy of Mathematics and Systems Science\\
 Chinese Academy of Sciences, Beijing 100190, China\\
 rjing@amss.ac.cn, \{cmyuan,xgao\}@mmrc.iss.ac.cn}
\date{}

\maketitle

\begin{abstract}
\noindent
In this paper, we give the first polynomial time algorithm to compute the
generalized Hermite normal form for a matrix $F$ over $\Zx$,
or equivalently, the reduced Gr\"obner basis of the $\Zx$-module
generated by the column vectors of $F$.
The algorithm has polynomial bit size  computational complexities
and is also shown to be practically more efficient than existing algorithms.
The algorithm is based on three key ingredients.
First, an F4 style algorithm to compute the Gr\"obner basis
is adopted, where a novel prolongation is designed such
that the sizes of coefficient matrices under consideration are nicely controlled.
Second, the complexity bound of the algorithm is achieved
by a nice estimation for the degree and height bounds
of the polynomials in the generalized Hermite normal form.
Third, fast algorithms to compute Hermite normal forms of
matrices over $\Z$ are used as the computational tool.

\vskip10pt
\noindent
{\bf Keywords:} 
Generalized Hermite normal form,  
Gr{\"o}bner basis, 
polynomial-time algorithm,
$\Zx$ module.
\end{abstract}

%\begin{AMS}
%13P10, 15B57, 68W30, 68W40
%\end{AMS}

\section{Introduction}

The Hermite normal form (abbr. HNF) is a standard representation for matrices over principal ideal domains such as $\Z$ and $\Q[x]$, which has many applications in
algebraic group theory,  integer programming, lattices, linear Diophantine equations,
system theory, and analysis of cryptosystems \cite{cohen1993course,kannan1979polynomial,Micciancio2001}.
Efficient algorithms to compute HNF have been studied extensively until recently
\cite{cohen1993course,kannan1979polynomial,domich1987hermite,
storjohann1996asymptotically,Storjohann2013,beckermann2006normal,kaltofen1987fast,
Mulders2003377}.
Note that $\Zx$ is not a PID and
a matrix over $\Zx$ cannot be reduced to an HNF.
In \cite{gao2014binomial}, the concept of
generalized Hermite normal form (abbr. GHNF) is introduced and
it is shown that any matrix over $\Zx$ can be reduced to a GHNF.
Furthermore, a matrix $F=[\f_1,\ldots,\f_s]\in\Zx^{n\times s}$
is a GHNF if and only if the set of its column vectors $\fb=\{\f_1,\ldots,\f_s\}$
forms a reduced Gr\"obner basis of the $\Zx$-module generated by
$\fb$  in $\Z[x]^n$ under certain monomial order.
Similar to the concept of lattice \cite{cohen1993course},
a $\Zx$-module in $\Z[x]^n$ is called a $\Zx$-lattice which
plays the same role as lattice does in the study of
binomial ideals and toric varieties \cite{cox-toric}.
For instance, the decision algorithms for
some of the major properties of Laurent binomial difference ideals
and toric difference varieties are based on the computation of GHNFs
of the exponent matrices of the difference ideals \cite{gao2014binomial}.
This motivates the study of efficient algorithms to compute the
GHNFs.

The reduced Gr\"obner basis for a $\Zx$-lattice can be computed
with the Gr\"obner basis methods for modules over rings \cite{cox2005using,kandri1988computing,Lichtblau2003}.
However, such general algorithms do not take advatage of the special
properties of $\Zx$-modules and do not have a complexity analysis.
Also note that the worst case complexity of computing Gr\"obner bases in
$\Q[x_1,\ldots,x_n]$ is double exponential \cite{MM}.

The main contribution of this paper is to give an
algorithm to compute the GHNF of a matrix
$F\in\Zx^{n\times s}$ or the reduced Gr\"obner basis of the $\Zx$-lattice
generated by the column vectors of $F$,
which is both practically efficient and has
polynomial bit size computational complexity.
The algorithm consists of three main ingredients.

The first ingredient comes from the powerful idea in Faug{\`e}re's F4
algorithm~\cite{faugere1999new} and the XL algorithm~\cite{xl1} of Courtois et al.
To compute the Gr\"obner basis
of the ideal generated by $p_1,\ldots,p_m\in\Q[x_1,\ldots,x_n]$,
these algorithms apply efficient elimination algorithms from linear
algebra to the coefficient matrix  of $x_j^kp_i$ for certain $k$.
Although the F4 algorithm can not improve the worst case complexity,
it is generally faster than the classical Buchberger algorithm \cite{buchberger2006bruno}.
In this paper, to compute the GHNF of $F=[\f_1,\ldots,\f_s]\in\Zx^{n\times s}$ with columns $\f_i$,
due to the special structure of the Gr\"obner bases
in $\Zx$, we design a novel method to do certain prolongations $x^k\f_i$
such that the sizes of the coefficient matrices of those $x^k\f_i$
are nicely controlled.

The second ingredient is a nice estimation for the degree and height bounds
of the polynomials in the GHNF $G\in\Zx^{n\times s}$ of $F\in\Zx^{n\times m}$.
We show that the degrees and the heights of the key elements of $G$ are bounded by $nd$
and $6n^3d^2(h+1+\log(n^2d))$, respectively,
where $d$ and $h$ are the maximal degree and maximal height of the polynomials in $F$, respectively.
Furthermore, we show that $G=FU$ for a matrix $U\in\Zx^{m\times s}$
and the degrees of the polynomials in $U$ are bounded by a polynomial
in $n,d,h$, which is a key factor in the complexity  analysis of our algorithm.
Note that the degree bound also depends on the coefficients of $F$.
The bounds about the GHNF are obtained based on the powerful methods introduced
by Aschenbrenner in~\cite{aschenbrenner2004ideal}, where the first  double
exponential algorithm for the ideal membership problem in $\Z[x_1,\ldots, x_n]$
is given.
In order to find the degree and height bounds for the GHNF,
we need to find solutions
of linear equations over $\Zx$, whose degree and height are bounded.
Due to the special structure of the Gr\"obner basis in $\Zx$,
we give better bounds than those in~\cite{aschenbrenner2004ideal}.

The third ingredient is to use efficient algorithms to compute the HNF
for matrices over $\Z$.
The computationally dominant step of our algorithm is to compute the
HNF of the  coefficient matrices of those prolongations $x^k\f_i$
obtained in the first ingredient.
The first polynomial-time algorithm to compute HNF was given by Kannan and Bachem \cite{kannan1979polynomial}
and there exist many efficient algorithms to compute
HNFs for matrices over $\Z$ \cite{domich1987hermite,storjohann1996asymptotically,Storjohann2013,cohen1993course}
and matrices over $\Q[x]$ \cite{beckermann2006normal,kaltofen1987fast,Mulders2003377}.
Note that the GHNF for a matrix over $\Zx$
cannot be recovered from its HNF over $\Q[x]$ directly.
In the complexity analysis of our algorithm,
we use the HNF algorithm with the best bit size complexity bound \cite{Storjohann2013}.

The algorithm is implemented in Magma and Maple and their default HNF commands are used in our implementation.
In the case of $\Zx$,
our algorithm is shown to be more efficient than the Gr\"obner basis
algorithm in Magma and Maple.
In the general case, the proposed algorithm
is also very efficient in  that quite large problems can be solved.

The rest of this paper is organized as follows.
In Section 2, we introduce several notations for Gr{\"o}bner bases of $\Zx$ lattices.
In Section 3, we give degree and height bounds for the GHNF of a matrix over $\Zx$.
In Section 4, we give the algorithm  to compute the GHNF and analyze its complexity.
Experimental results are shown in Section 5.
Finally, conclusions are presented in Section 6.

\section{Preliminaries}
In this section, some basic notations and properties about Gr{\"o}bner bases for $\Zx$ lattices will be given. For more details, please refer to \cite{cox2005using,gao2014binomial,kandri1988computing}.

For brevity, a $\Zx$ module in $\Zx^n$ is called a {\em $\Zx$ lattice}.
Any $\Zx$ lattice $L$ has a finite set of generators
$\{{\bf f}_1,\ldots,{\bf f}_s\}\subset \Zx^n$ and this fact is denoted as
%\begin{center}
$L=({\bf f}_1,\ldots,{\bf f}_s)_{\Zx}$.
%\end{center}
%
If ${\bf f}_i = [f_{1,i},\ldots,f_{n,i}]^\tau$, then we call $M=[\f_1,\ldots,\f_s]=[f_{i,j}]_{n\times s}$
a \emph{matrix representation} of $L=({\bf f}_1,\ldots,{\bf f}_s)_{\Zx}$.
%Note that ${\bf f}_i$ is the $i$-th column of $M$.
If $n=1$, $M$ is called a \emph{polynomial vector}.

A monomial ${\bf m}$ in $\Zx^n$ is an element of the form $x^k{\bf e}_i\in \Zx^n$,
where  $k\in \mathbb{N}$, and ${\bf e}_i$ is the canonical $i$-th unit vector in $\Zx^n$.
A term  in $\Zx^n$ is a multiplication of an integer $a\in \Z$ and a monomial ${\bf m}$, that is $a{\bf m}$.
The admissible order $\prec$ on monomials in $\Zx^n$ can be defined naturally: $x^{\alpha}{\bf e}_i\prec x^{\beta}{\bf e}_j$ if
$i<j$ or $i=j$ and $\alpha < \beta$.
The order $\prec$ can be naturally extended to terms: $ax^{\alpha}{\bf e}_i\prec bx^{\beta}{\bf e}_j$ if and only if $x^{\alpha}{\bf e}_i\prec x^{\beta}{\bf e}_j$ or $i=j ,~\alpha = \beta~{\rm and} ~|a|<|b|$.

With the admissible order $\prec$, any ${\bf f}\in \Zx^n$ can be written in a unique way as a $\Z$-linear combination of monomials,
\begin{center}
  ${\bf f}=\sum_{i=1}^{s}c_i{\bf m}_i$,
\end{center}
where $c_i \neq 0$ and ${\bf m}_1\prec {\bf m}_2\prec\cdots\prec {\bf m}_s$. We define the \emph{leading coefficient, leading monomial}, and \emph{leading term} of ${\bf f}$ as
${\bf LC}({\bf f})=c_s$, ${\bf LM}({\bf f})={\bf m}_s$, and ${\bf LT}({\bf f})=c_s{\bf m}_s$, respectively.

The order $\prec$ can be extended to elements of $\Zx^n$ in a natural way: for ${\bf f,g}\in \Zx^n,{\bf f\prec g}$ if and only if ${\bf LT(f)\prec LT(g)}$.
We will use the order $\prec $  throughout this paper.

For two terms $ax^{\alpha}{\bf e}_i$ and $bx^{\beta}{\bf e}_j$ in $\Zx^n$ with $b\ne0$,
$ax^{\alpha}{\bf e}_i$ is called $\{bx^{\beta}{\bf e}_j\}$\emph{-reduced}
if one of the following conditions is valid: $i\ne j$; $i=j$ and $\alpha < \beta$;
or $i=j$, $\alpha \ge \beta$, and $0\le a < |b|$.
For any $\f,\g\in \Zx^n$ with $\g\ne0$, $\f$ is called $\g$-reduced if any term of $\f$ is $\lt(\g)$-reduced.
If $\f$ is not $\g$-reduced, then by the reduction algorithm for the polynomials in $\Z[x]$~\cite{Lichtblau2003},
one can compute a unique $\r$ and a quotient $q\in \Zx$  such that
$\r=\f- q\g$ is $\g$-reduced and is denoted as $\r=\overline{\f}^{\g}$.
If $\f$ is $\g$-reduced, then set $\overline{\f}^{\g}$ to be $\f$.
For ${\bf f}\in \Zx^n$ and $G=[{\bf g}_1,\ldots,{\bf g}_m]\in \Zx^{n\times m}$ with ${\bf g}_1\prec \ldots \prec {\bf g}_m$, $\f$ is called $G$-reduced if any term of $\f$ is $\lt(\g_i)$-reduced for $i=1,\ldots,m$.
Let ${\bf r}_{m+1}=\f$ and for $i=m,m-1,\ldots,1$, set
$\r_i = \overline{\r_{i+1}}^{\g_i}.$
Denote $\r_1=\overline{\f}^{G}$ and say $\f$ is reduced to $\r_1$ by $G$.

\begin{myDef}
Let ${\bf f,g}\in \Zx^n,~{\bf LT(f)}=ax^k{\bf e}_i,~{\bf LT(g)}=bx^s{\bf e}_j$, $s\leq k$. Then the S-vector of ${\bf f}$ and ${\bf g}$ is defined as follows: if $i \neq j$ then $S({\bf f,g})=\0$; otherwise

\begin{equation}\label{eq-svec}
S({\bf f,g})=\left\{
  \begin{array}{ll}
    {\bf f}-\frac{a}{b}x^{k-s}{\bf g}, & \hbox{if $b|a$;} \\
    \frac{b}{a}{\bf f}-x^{k-s}{\bf g}, & \hbox{if $a|b$;} \\
    u{\bf f}+vx^{k-s}{\bf g}, & \hbox{if $a\nmid b\hbox{ and }~b\nmid a,~\emph{where}~ \gcd(a,b)=ua+vb$.}
  \end{array}
\right.
\end{equation}
If $n=1$, the S-vector is called S-polynomial, which is the same with the definition in \cite{kandri1988computing}.
\end{myDef}
\begin{myDef}
\label{GBdef}
A finite set $G\subset \Zx^n$ is called a Gr{\"o}bner basis for the $\Zx$ lattice $L$ generated by $G$ if for any $\f\in L$, there exists $\g\in G$, such that  $\lt(\g)|\lt(\f)$. A Gr{\"o}bner basis $G$ is called reduced if for any ${\bf g}\in G,~{\bf g}$ is $G\setminus \{{\bf g}\}$-reduced.
A Gr{\"o}bner basis $G$ is called minimal if for any ${\bf g}\in G,~\lt(\g)$ is $G\setminus \{{\bf g}\}$-reduced.
%By Theorem 3.5 of \cite{gao2014binomial},
\end{myDef}
It is easy to see that $G$ is a Gr{\"o}bner basis if and only if  $\overline{\g}^G=0$ for any
$\g\in(G)_{\Zx}$.
The Buchberger criterion for Gr\"obner basis is still true:
$G$ is a Gr{\"o}bner basis if and only if $\overline{S({\bf f},{\bf g})}^{G}={\bf 0}$ for all ${\bf f},{\bf g}\in G$.
Gr{\"o}bner bases in this paper are assumed to be ranked in an increasing order with respect to the admissible order $\prec $. That is, if $G=\{{\bf g}_1,\ldots,{\bf g}_s\}$ is a Gr{\"o}bner basis, then $\g_1\prec \ldots\prec \g_s$. To make the reduced Gr\"obner basis unique, we further assume that
$\lc(\g_i)>0$ for any   $\g_i\in G$.

We need the following property for  Gr{\"o}bner bases in $\Zx$.
\begin{prop}[\cite{gao2014binomial}]
\label{GB form}
Let $B=\{b_1,\ldots,b_k\}$ be the reduced Gr{\"o}bner basis of a $\Zx$ module in $\Zx$, $b_1\prec \cdots\prec b_k$, and ${\bf LT}(b_i)=c_ix^{d_i}\in \mathbb{N}[x]$. Then
\begin{enumerate}
  \item $0\leq d_1<\cdots<d_k$.
  \item $c_k|\cdots|c_1$ and $c_i\neq c_{i+1}$ for $1\leq i \leq k-1$.
  \item $\frac{c_i}{c_k}|b_i$ for $1\leq i<k$. Moreover, if $\widetilde{b_1}$ is the primitive part of $b_1$, then $\widetilde{b_1}|b_i$, for $1<i\leq k$.
\end{enumerate}
\end{prop}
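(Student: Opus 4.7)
To prove (1), I argue by contradiction from the minimality built into reducedness: if $d_i = d_{i+1}$ then the order on leading terms forces $c_i < c_{i+1}$, so $\lt(b_{i+1})$ fails to be $\lt(b_i)$-reduced, contradicting minimality. The same reducedness argument (now with $d_{i+1} > d_i$) gives $c_{i+1} < c_i$, settling $c_i \ne c_{i+1}$ in (2) and establishing the strict chain $c_1 > c_2 > \cdots > c_k$. For the divisibility $c_{i+1} \mid c_i$, the plan is a Bezout argument: choose $u,v \in \Z$ with $u c_i + v c_{i+1} = \gcd(c_i,c_{i+1})$ and form $f := u\, x^{d_{i+1}-d_i} b_i + v\, b_{i+1} \in L$, whose leading term is $\gcd(c_i,c_{i+1}) \cdot x^{d_{i+1}}$. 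Since $B$ is a Gr\"obner basis, some $\lt(b_j)$ must divide this; the strict chain together with the degree constraint $d_j \le d_{i+1}$ rules out every $j$ except $j = i+1$ (for $j \le i$ we have $c_j > c_{i+1} \ge \gcd(c_i,c_{i+1})$, and for $j \ge i+2$ the degree is too large), forcing $c_{i+1} \mid \gcd(c_i,c_{i+1})$, i.e., $c_{i+1} \mid c_i$.

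For the second assertion of (3), I would use that $b_1$ realizes the minimum degree in $L$: any $f \in L$ has $\lt(b_j) \mid \lt(f)$ for some $j$, forcing $\deg f \ge d_1$. Hence the principal $\Qx$-ideal $L \cdot \Qx$ is generated by $b_1$, so each $b_i = b_1 h_i$ for some $h_i \in \Qx$. Writing $b_1 = c \cdot \widetilde{b_1}$ where $c$ is the content of $b_1$ and $\widetilde{b_1}$ is primitive, one has $b_i = \widetilde{b_1}(c h_i)$; since $b_i \in \Zx$ and $\widetilde{b_1}$ is primitive, Gauss's lemma gives $c h_i \in \Zx$, so $\widetilde{b_1} \mid b_i$ in $\Zx$.

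The first assertion of (3), $p_i := c_i/c_k \mid b_i$, is the delicate part. My plan is first to reduce to the sub-case $d_1 = 0$: using the factorization $b_i = \widetilde{b_1} q_i$ from the previous paragraph, pass to the $\Zx$-ideal $L' := \{f/\widetilde{b_1} : f \in L\}$, whose Gr\"obner basis $\{q_i\}$ shares the same $p_i$-invariants and whose smallest element $q_1$ is a nonzero constant. Proving the claim for the reduced Gr\"obner basis of $L'$ then lifts back to $L$: any adjustment to bring $\{q_i\}$ into reduced form introduces only combinations of lower-index $q_j$, which are themselves divisible by $p_i$ thanks to the chain $p_i \mid p_j$ for $j < i$. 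Once $d_1 = 0$, so $b_1 = c_1$, induct on $i$ via the S-polynomial $S(b_i,b_1) = (c_1/c_i) \widehat{b_i}$, where $\widehat{b_i} := b_i - c_i x^{d_i}$; this lies in $L$ with degree below $d_i$ and hence reduces via $b_1,\ldots,b_{i-1}$ to a combination $\sum_{j<i} r_j b_j$. The inductive hypothesis together with $p_i \mid p_j$ for $j<i$ gives $p_i \mid b_j$ for all $j<i$, hence $p_i \mid (c_1/c_i)\widehat{b_i}$. The main obstacle is extracting $p_i \mid \widehat{b_i}$ from this: the naive division fails whenever $\gcd(p_i, c_1/c_i) > 1$. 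I expect this final gap to be closed by combining the congruences obtained from the whole family $\{S(b_i,b_j)\}_{j<i}$ (and possibly $S(b_k,b_i)$ as well), exploiting the explicit divisibility structure among the $c_j$'s.
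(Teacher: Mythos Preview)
The paper does not prove this proposition; it is quoted from \cite{gao2014binomial} without argument, so there is nothing to compare your approach against. I will therefore assess your proof on its own.

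Your arguments for (1), for (2), and for the second assertion of (3) are correct as written.

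For the first assertion of (3) you correctly identify a gap, but your proposed patch does not close it. From $S(b_i,b_j)$ with $j<i$ you obtain only $p_i \mid (c_j/c_i)\widehat{b_i}$, and the best you can squeeze from the whole family $j<i$ is $p_i \mid (c_{i-1}/c_i)\widehat{b_i}$, since $c_{i-1}/c_i$ is the gcd of the multipliers $c_j/c_i$. This is still not enough: take $c_{i-1}=8$, $c_i=4$, $c_k=2$, where $p_i=2$ and $c_{i-1}/c_i=2$ share a common factor. Bringing in $S(b_k,b_i)$ does not help under your weak inductive hypothesis either, because it entangles $b_i$ with $b_j$ for $j>i$.

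The fix is to \emph{strengthen the induction hypothesis}. After your reduction to $d_1=0$ (so $q_1=c_1'\in\Z$), prove instead that $c_j' \mid q_j$ for all $j$, where $c_j'=\lc(q_j)$. The base case $c_1'\mid q_1$ is trivial. For the step, use $S(q_i,q_{i-1}) = (c_{i-1}'/c_i')\,q_i - x^{d_i-d_{i-1}}q_{i-1}$, which lies in $L'$ with degree $<d_i$ and hence in $(q_1,\ldots,q_{i-1})_{\Zx}$. By the strengthened hypothesis $c_j'\mid q_j$ for $j\le i-1$, and since $c_{i-1}'\mid c_j'$ for all such $j$, every element of $(q_1,\ldots,q_{i-1})_{\Zx}$ is divisible by $c_{i-1}'$. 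Thus $c_{i-1}'\mid (c_{i-1}'/c_i')\,q_i$, i.e.\ $c_i'\mid q_i$. Since $p_i = c_i/c_k = c_i'/c_k'$ divides $c_i'$, you get $p_i\mid q_i$, and hence $p_i\mid \widetilde{b_1}\,q_i = b_i$. Note that the reduction to $d_1=0$ is essential here: the stronger statement $c_i\mid b_i$ is false in general (e.g.\ $9\nmid 9x+3$ in the paper's own example).
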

This proposition also applies to the minimal Gr{\"o}bner bases.
Here are three Gr\"{o}bner bases in $\Z[x]$\hbox{\rm:} $\{ 2, x \}$,
$ \{ 12, 6x+6, 3x^2+3x, x^3+x^2 \}$, $ \{ 9x+3, 3x^2+4x+1 \}$.

For a polynomial set $F=\{f_1,\ldots,f_m\}$ in $\Zx$, we denote by ${\rm Content}(F)$ the GCD of the contents of $f_i$ and ${\rm Primpart}(F)=\gcd(F)/{\rm Content}(F)$ the primitive part of $F$.
Now, we give a refined description of Gr{\"o}bner bases for ideals in $\Zx$.
\begin{prop}[\cite{lazard1985grobner}]
\label{GB form'}
$G=\{g_1,\ldots, g_n\}$ with $\deg(g_1)<\cdots<\deg(g_n)$ is a minimal Gr{\"o}bner basis of $(f_1,\ldots, f_m)$ in $\Zx$ if and only if
$g_1=ab_1\cdots b_{n-1}\tilde{g_1}$, $g_n = a h_{n}\tilde{g_1}$,
and $g_i=a b_i\cdots b_{n-1}h_i\tilde{g_1}, 2\leq i\leq n-1$,
such that
 \begin{enumerate}
 \item[{\rm \romannumeral 1)}] $a={\rm Content}(f_1,\ldots,f_m)$;
 \item[{\rm \romannumeral 2)}]$\tilde{g}_1={\rm Primpart}(f_1,\ldots,f_m)$;
 \item[{\rm \romannumeral 3)}]  $h_i \in \Zx$ is monic with degree $d_i$, and $0<d_2<\cdots<d_n$;
 \item[{\rm \romannumeral 4)}] $b_i\in \Z, b_i\neq \pm1$, and $h_{i+1}\in (h_i, b_{i-1}h_{i-1},\ldots,b_2\ldots b_{i-1}h_2, b_1\ldots b_{i-1})$, for $1\leq i\leq n-1$, where $h_1=1$.
\end{enumerate}
\end{prop}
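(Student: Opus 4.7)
The plan is to derive the factored form of each $g_i$ from Proposition~\ref{GB form} (extended to minimal Gr\"obner bases by the remark immediately preceding this statement), to extract condition~(iv) from the $S$-polynomial identity forced by the Buchberger criterion, and to reverse these steps for the converse.

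For the forward direction, Proposition~\ref{GB form} applied to $G$ supplies (i), the divisibility chain $c_n\mid\cdots\mid c_1$ where $c_i:=\lc(g_i)$, and $\tilde g_1\mid g_i$ for every $i$, with $\tilde g_1:=\text{Primpart}(g_1)$. Minimality forces each consecutive quotient $b_i:=c_i/c_{i+1}$ to lie in $\Z\setminus\{\pm 1\}$. Setting $a:=c_n/\lc(\tilde g_1)$ (an integer by $\tilde g_1\mid g_n$ and a leading-coefficient count), the equations $g_i=ab_i\cdots b_{n-1}\tilde g_1 h_i$ for $i<n$ and $g_n=a\tilde g_1 h_n$ uniquely define $h_i\in\Zx$; a degree and leading-coefficient check then forces each $h_i$ to be monic of degree $d_i$, with $h_1=1$. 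This settles (iii) and the shape of the factorization.

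For condition~(iv), I would invoke the Buchberger criterion. The second case of \cref{eq-svec}, applied to $c_{i+1}\mid c_i$ with quotient $b_i$, gives $S(g_{i+1},g_i)=b_ig_{i+1}-x^{d_{i+1}-d_i}g_i$, which has degree strictly less than $d_{i+1}$ and therefore reduces to $\mathbf{0}$ through $g_1,\ldots,g_i$:
\[
b_ig_{i+1}-x^{d_{i+1}-d_i}g_i\in (g_1,\ldots,g_i)_{\Zx}.
\]
Substituting $g_j=a\tilde g_1 b_j\cdots b_{n-1}h_j$ on both sides, splitting $b_j\cdots b_{n-1}=(b_j\cdots b_{i-1})(b_i\cdots b_{n-1})$ for $j\le i-1$, and cancelling the nonzero common factor $a\tilde g_1 b_i\cdots b_{n-1}$ in the domain $\Zx$ yields
\[
h_{i+1}-x^{d_{i+1}-d_i}h_i\in (b_1\cdots b_{i-1},\,b_2\cdots b_{i-1}h_2,\,\ldots,\,b_{i-1}h_{i-1},\,h_i)_{\Zx};
\]
absorbing the $x^{d_{i+1}-d_i}h_i$ term into the generator $h_i$ gives exactly (iv). The identifications in (i)--(ii) then follow because $(g_1,\ldots,g_n)=(f_1,\ldots,f_m)$ and both Content and $\gcd$ are ideal invariants: once (iv) is in hand, the structural generators $b_1\cdots b_{n-1}$ (a nonzero integer) and $h_n$ (monic) force their $\gcd$ in $\Zx$ to equal $1$, so $\gcd(g_1,\ldots,g_n)=\pm a\tilde g_1=\pm\gcd(f_1,\ldots,f_m)$, and uniqueness of the content-primpart decomposition identifies $a$ and $\tilde g_1$ with $\text{Content}(F)$ and $\text{Primpart}(F)$.

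The converse runs the argument in reverse: items (i)--(iii) provide the leading-term data required by Proposition~\ref{GB form}, while (iv), multiplied through by $a\tilde g_1 b_i\cdots b_{n-1}$ and re-expressed via the factorizations, displays each $S(g_{i+1},g_i)$ as an explicit $\Zx$-combination of $g_1,\ldots,g_i$ whose leading term cancels, i.e.\ a Buchberger reduction to $\mathbf{0}$; non-consecutive pairs are handled by induction along the chain. Minimality is immediate from $b_i\ne\pm1$ together with $d_1<\cdots<d_n$. The main obstacle is bookkeeping rather than mathematics: tracking the cascade of $b_j$-factors when cancelling the common factor in the $S$-polynomial identity, and making sure the rightmost generator $b_1\cdots b_{i-1}$ in (iv) (with no $h_1$-factor) correctly encodes the convention $h_1=1$.
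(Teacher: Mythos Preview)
The paper does not prove this proposition; it is quoted from Lazard~\cite{lazard1985grobner} without argument. So there is no in-paper proof to compare against, and your outline is the natural reconstruction.

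One step needs more than you give it. You write that the equations $g_i=ab_i\cdots b_{n-1}\tilde g_1\,h_i$ ``uniquely define $h_i\in\Zx$'', but Proposition~\ref{GB form} together with Gauss's lemma only yields $b_i\cdots b_{n-1}\,\tilde g_1\mid g_i$ in $\Zx$; divisibility by the extra integer factor $a=c_n/\lc(\tilde g_1)$ is not automatic from those hypotheses. For instance with $n=2$ and $\tilde g_1=1$, the pair $g_1=4$, $g_2=2x+1$ satisfies every conclusion of Proposition~\ref{GB form} (degrees increase, $2\mid 4$, $2\mid g_1$, $\tilde g_1\mid g_2$) yet gives $h_2=g_2/a=x+\tfrac12\notin\Zx$. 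What rules this pair out is precisely the Buchberger criterion: $S(g_2,g_1)=2g_2-xg_1=2$ does not reduce to $0$. In general the reduction $S(g_{i+1},g_i)=b_ig_{i+1}-x^{d_{i+1}-d_i}g_i\to 0$ through $g_1,\ldots,g_i$ is what forces, inductively, $a\mid g_{i+1}/(b_{i+1}\cdots b_{n-1}\tilde g_1)$. So the integrality of the $h_i$ should be established \emph{alongside} your derivation of~(iv), not asserted prior to it; once you reorder the argument this way, your identification of $a$ and $\tilde g_1$ with $\mathrm{Content}(F)$ and $\mathrm{Primpart}(F)$ at the end is non-circular and the whole plan goes through.
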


Next, we introduce the concept of generalized Hermite normal form. Let
\begin{equation}\label{reducedGB}
  \mathcal{C}=\left(
     \begin{smallmatrix}
       c_{11} & \ldots & c_{1,l_1} & c_{1,l_1+1} & \ldots & \ldots & \ldots & \ldots & \ldots & \ldots & \ldots \\
       \ldots & \ldots & \ldots & \ldots & \ldots & \ldots & \ldots & \ldots & \ldots & \ldots & \ldots \\
       c_{r_1,1} & \ldots & c_{r_1,l_1} & c_{r_1,l_1+1} & \ldots & \ldots & \ldots & \ldots & \ldots & \ldots & \ldots \\
       0 & \ldots & 0 & c_{r_1+1,1} & \ldots & c_{r_1+1,l_2} & \ldots & \ldots & \ldots & \ldots & \ldots \\
      \ldots & \ldots & \ldots & \ldots & \ldots & \ldots & \ldots & \ldots & \ldots & \ldots & \ldots \\
       0 & \ldots & 0 & c_{r_2,1} & \ldots & c_{r_2,l_2} & \ldots & \ldots & \ldots & \ldots & \ldots \\
      \ldots & \ldots & \ldots & \ldots & \ldots & \ldots & \ldots & \ldots & \ldots & \ldots & \ldots \\
       0 & \ldots & 0 & 0 & \ldots & 0 & \ldots & 0 & c_{r_{t-1}+1,1} & \ldots & c_{r_{t-1}+1,l_t} \\
       \ldots & \ldots & \ldots & \ldots & \ldots & \ldots & \ldots & \ldots & \ldots & \ldots & \ldots \\
       0 & \ldots & 0 & 0 & \ldots & 0 & \ldots & 0 & c_{r_t,1} & \ldots & c_{r_t,l_t} \\
     \end{smallmatrix}
   \right)_{n\times m}
   \end{equation}
whose elements are in $\Zx$. It is clear that $n=r_t$ and $m=\sum_{i=1}^{t}l_i$.
Assume
\begin{center}
  $c_{i,j}=c_{i,j,0}x^{d_{ij}}+\cdots+c_{i,j,d_{ij}}$,
\end{center}
and assume  $c_{i,j,0}\ge0$.
Then the leading term of ${\bf c}_{r_i,j}$ is $c_{r_i,j,0}x^{d_{r_i,j}}{\bf e}_{r_i}$,
where ${\bf c}_{r_i,j}$ is the $(l_1+\cdots+l_{i-1}+j)$-th column of $\mathcal{C}$.

\begin{myDef}\label{definition1}
The matrix $\mathcal{C}$ is called a generalized Hermite normal form (abbr. \gHNF) if it satisfies the following conditions:
\begin{description}
  \item[1)] $0\leq d_{r_i,1}<d_{r_i,2}<\cdots<d_{r_i,l_i}$ for any $i$.
  \item[2)] $c_{r_i,l_i,0}|\ldots |c_{r_i,2,0}|c_{r_i,1,0}$.
  \item[3)] $S({\bf c}_{r_i,j_1},{\bf c}_{r_i,j_2})=x^{d_{r_i,j_2}-d_{r_i,j_1}}{\bf c}_{r_i,j_1}-\dfrac{c_{r_i,j_1,0}}{c_{r_i,j_2,0}}{\bf c}_{r_i,j_2}$ can be reduced to zero by the column vectors of the matrix for any $1\leq i\leq t,~1\leq j_1<j_2 \leq l_i$.
  \item[4)] ${\bf c}_{r_i,j}$ is reduced with respect to the column vectors of the matrix other than ${\bf c}_{r_i,j}$, for any $1 \leq i\leq t, 1\leq j \leq l_i$.
\end{description}
\end{myDef}

\begin{theorem}[\cite{gao2014binomial}]\label{th-gg}
$\{{\bf f}_1,\ldots,{\bf f}_s\}\subset \Zx^n$ is a reduced Gr{\"o}bner basis under
the monomial order $\prec$ and ${\bf f}_1\prec {\bf f}_2\prec \ldots\prec {\bf f}_s$ if and only if the polynomial matrix $[{\bf f}_1,\ldots,{\bf f}_s]$ is a \gHNF.
\end{theorem}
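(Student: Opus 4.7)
The plan is to establish the equivalence by combining the Buchberger criterion stated just above the theorem with Proposition~\ref{GB form} applied block-by-block, and by unpacking how the admissible order forces the staircase pattern displayed in \cref{reducedGB}. For the direction ``reduced Gr\"obner basis $\Rightarrow$ \gHNF'', I would start from $G = \{\f_1, \ldots, \f_s\}$ with $\f_1 \prec \cdots \prec \f_s$ and note that the admissible order makes the component index of $\lt(\f_j) = c_j x^{\alpha_j}{\bf e}_{i_j}$ non-decreasing in $j$. Grouping consecutive generators by the value of $i_j$ produces the block sizes $l_1,\ldots,l_t$ and distinguished row indices $r_1<\cdots<r_t$ that appear in \cref{reducedGB}; the zeros below each leading entry follow because any nonzero entry in a row of index $>r_i$ would outrank the assumed leading term. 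Within a single block, the $r_i$-th row of the columns forms a set of univariate polynomials in $\Zx$ whose leading terms match those of the original columns, and the reducedness of $G$ transfers to this projection; Proposition~\ref{GB form} then yields Conditions~(1) and (2) of Definition~\ref{definition1}. Condition~(3) is the Buchberger criterion restricted to pairs within a block (pairs across blocks give $S(\f,\g)=\0$ automatically by \cref{eq-svec}), and Condition~(4) is by definition the reducedness of $G$.

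For the converse direction, I would assume $\mathcal{C}$ is in \gHNF and verify the Buchberger criterion directly on its columns. Two columns whose leading terms lie in different components give $S=\0$ by \cref{eq-svec}, so nothing is needed there. For two columns ${\bf c}_{r_i,j_1},{\bf c}_{r_i,j_2}$ in the same block with $j_1<j_2$, Condition~(2) gives $c_{r_i,j_2,0}\mid c_{r_i,j_1,0}$, so the first branch of \cref{eq-svec} applies and the S-vector takes exactly the form specified in Condition~(3), which by hypothesis reduces to zero modulo the columns of $\mathcal{C}$. Hence the columns form a Gr\"obner basis, and Condition~(4) together with the strict degree increase in Condition~(1) upgrades this to a reduced Gr\"obner basis ordered as $\prec$ requires.

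The main obstacle I anticipate is the forward direction's claim that projecting a block onto its $r_i$-row yields a reduced Gr\"obner basis in $\Zx$ suitable for Proposition~\ref{GB form}. One must argue that within a block the leading degrees cannot coincide---otherwise the higher-indexed element would be $\lt$-reducible by the lower-indexed one, contradicting reducedness of $G$ in $\Zx^n$---and that the admissible order restricted to the projection is consistent with the natural order on $\Zx$. A secondary bookkeeping point is that the sign normalization $c_{r_i,j,0}\ge 0$ built into Definition~\ref{definition1} must be reconciled with the convention $\lc(\g_i)>0$ imposed on reduced Gr\"obner bases; both express the same normalization.
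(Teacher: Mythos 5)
The paper does not prove this statement at all: it is imported verbatim from \cite{gao2014binomial} (Theorem~\ref{th-gg} carries that citation), so there is no in-paper proof to compare yours against. Judged on its own, your outline is essentially the right argument and the one the cited source uses: the admissible order $\prec$ is a position-over-term order with lower rows smaller, so the leading components are non-decreasing along $\f_1\prec\cdots\prec\f_s$, the block/staircase shape of \cref{reducedGB} falls out, cross-block S-vectors vanish by \cref{eq-svec}, within-block S-vectors are exactly Condition~(3), and Condition~(4) is reducedness; the converse is Buchberger's criterion plus the ordering forced by Condition~(1).

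The one step you flag as the ``main obstacle'' is indeed where the content lies, but your proposed resolution addresses the wrong half of it. Showing that leading degrees within a block are distinct only gives minimality of the projected set; what Proposition~\ref{GB form} actually needs is that $B_i=\{c_{r_i,1},\ldots,c_{r_i,l_i}\}$ is a (minimal) Gr\"obner basis \emph{of the ideal it generates} in $\Zx$. The clean way to get this is via the elimination property of $\prec$: let $I_{r_i}\subset\Zx$ be the set of $r_i$-th components of those lattice elements vanishing in all rows $>r_i$. For $f\in I_{r_i}$ with preimage $\f$, one has $\lt(\f)=\lt(f)\,\mathbf{e}_{r_i}$, and a divisor of this leading term supplied by the Gr\"obner basis must itself have leading component $r_i$, i.e.\ come from block $i$; hence the $B_i$ cut out the leading-term ideal of $I_{r_i}$, so $B_i$ is a Gr\"obner basis of $I_{r_i}$ and in particular generates it. With that lemma in place, Proposition~\ref{GB form} delivers Conditions~(1)--(2) and the rest of your argument goes through. (Your secondary point about the two sign normalizations agreeing is correct and harmless.)
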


\section{Degree and height bounds for the GHNF}\label{boundsanalysis}
%In this section, we give the degree and height bounds for the \gHNF.

We first give some notations.
Let $f\in R[x]$, where $R$ is a subring of $\C$.
Denote by $|f|$ the maximal absolute value of the coefficients of $f$.
Let $\height(f)=\log|f|$, with $\height(0)=0$.
For $F=\{f_1,\ldots,f_m\}\subset R[x]$,
let $\deg(F)=\max_{1\le i\le m}\deg(f_i)$ and
$\height(F)=\max_{1\le i\le m}\height(f_i)$.

For a prime $p\in \Z$,  let $\Z_{(p)}$ be the local ring of $\Z$ at $(p)$.  For $a = u p^t\in \Z$ where $u$ is a unit in $\Z_{(p)}$, let $v_p(a)=t$ be the $p$-adic valuation.
Let $\hatZp$ be the completion\cite{aschenbrenner2004ideal,eisenbud1995} of $\Zp$
 and $\hatZp[x]$ the polynomial ring with coefficients in $\hatZp$.
 Denote by $\hatZpanglex$ the completion of $\hatZp[x]$\cite{aschenbrenner2004ideal,eisenbud1995}.

For any subring $R$ of $\C$ or $\hatZp$ and $\f_1,\ldots,\f_s$ in $R[x]^n$,
let $(\f_1,\ldots,\f_s)_{R[x]}$ be the $R[x]$ module generated by
$\f_1,\ldots,\f_s$ in $R[x]^n$.

\subsection{Degree and height bounds in $\Z[x]$}
In this section, we give several basic degree and height bounds in
$\Z[x]$.
By the extended Euclidean algorithm, we have
\begin{lemma}\label{gcd<d}
Let $k$ be a field, $f_1,\ldots,f_m\in k[x]$, and $d=\max_{1\le i\le m}\deg(f_i)$. Then there exist $g_1,\ldots,g_m\in k[x]$ with $\deg(g_i)<d$ for any $i$, satisfying
  $\gcd(f_1,\ldots,f_m)=f_1g_1+\cdots+f_mg_m$.
\end{lemma}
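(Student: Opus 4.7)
The plan is to prove the lemma by induction on the number $m$ of polynomials. The base case $m=2$ is exactly the classical degree bound coming from running the extended Euclidean algorithm on two polynomials: there exist $u,v\in k[x]$ with $\gcd(f_1,f_2)=uf_1+vf_2$, $\deg(u)<\deg(f_2)\le d$, and $\deg(v)<\deg(f_1)\le d$. This already gives the conclusion, and also fixes the template I want to iterate.

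For the inductive step $m\ge 3$, I would first reorder the polynomials so that $\deg(f_m)=d$. Let $h'=\gcd(f_1,\ldots,f_{m-1})$. Applying the inductive hypothesis to $f_1,\ldots,f_{m-1}$ produces $a_1,\ldots,a_{m-1}\in k[x]$ with $h'=\sum_{i<m}a_if_i$ and $\deg(a_i)<\max_{i<m}\deg(f_i)\le d$. Since $\gcd(h',f_m)=\gcd(f_1,\ldots,f_m)$, a second application of the extended Euclidean algorithm yields $u,v\in k[x]$ with $\gcd(f_1,\ldots,f_m)=uh'+vf_m$, $\deg(u)<\deg(f_m)$ and $\deg(v)<\deg(h')$. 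Substituting the expression for $h'$ gives a preliminary representation $\gcd=\sum_{i<m}(ua_i)f_i+vf_m$, which is the right identity but with coefficients $ua_i$ potentially of degree up to $2d$.

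The main obstacle, and the place where the argument is non-routine, is recovering the sharp degree bound $\deg(g_i)<d$ simultaneously for every index. The trick is a reduction step using polynomial division by $f_m$: for each $i<m$ write $ua_i=q_if_m+r_i$ with $\deg(r_i)<\deg(f_m)=d$, and then collect all the quotients $q_if_i$ into the coefficient of $f_m$. Setting $g_i:=r_i$ for $i<m$ and $g_m:=v+\sum_{i<m}q_if_i$, I obtain $\gcd=\sum_{i=1}^{m}g_if_i$. By construction $\deg(g_i)<d$ for $i<m$. For the last coefficient I would use the identity $g_mf_m=\gcd-\sum_{i<m}r_if_i$: the right-hand side has degree strictly less than $2d$ (each $\deg(r_if_i)<d+d=2d$ and $\deg(\gcd)\le d$), so $\deg(g_m)<2d-\deg(f_m)=d$. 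The crucial point is that reordering to put the degree-$d$ polynomial in position $m$ is exactly what makes the degree lost in substitution be exactly absorbed by the division step, closing the induction.
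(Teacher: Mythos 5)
Your proof is correct and is exactly the argument the paper intends: the paper offers no proof beyond the phrase ``by the extended Euclidean algorithm,'' and your induction together with the division-with-remainder redistribution ($ua_i=q_if_m+r_i$, absorbing the quotients into $g_m$ and bounding $\deg(g_m)$ via $g_mf_m=\gcd-\sum_{i<m}r_if_i$, after reordering so that $\deg(f_m)=d$) is the standard way to make that precise. The only loose end is the degenerate case $\max_{i<m}\deg(f_i)=0$ with some $f_i$ a nonzero constant, where the inductive hypothesis (which would demand $\deg(a_i)<0$) cannot be invoked and the representation must instead be written down directly; the same degeneracy makes the lemma itself fail when $d=0$, so this is harmless but deserves a sentence.
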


In this section, we assume $f_1,\ldots,f_m\in \Z[x],~d=\max_{1\le i\le m}\deg(f_i)$, and  $h=\height(f_1,\ldots,f_m)$, unless specified otherwise explicitly.

\begin{lemma}\label{heightboundQx}
If $1\in (f_1,\ldots,f_m)_{\Q[x]}$,
  then $\delta=f_1g_1+\cdots+f_mg_m$~ for
  some $\delta\in \Z\setminus\{0\}$ with $\height(\delta)\le d(2h+\log (d+1))$ and some $g_1,\ldots,g_m\in \Z[x]$ with degree $< d$ .
  In this case, the height of the \gHNF of $[f_1,\ldots,f_m]$ is $\le  d(2h+\log (d+1))$.
 \end{lemma}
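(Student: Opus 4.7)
The plan is to reduce the existence of $\delta$ and the $g_i$ to a linear algebra problem over $\Q$, and then bound the common denominator via Cramer's rule together with a Hadamard estimate that exploits the sparsity of the relevant coefficient matrix. Since $1\in(f_1,\ldots,f_m)_{\Q[x]}$, Lemma~\ref{gcd<d} applied over $k=\Q$ yields $g_1',\ldots,g_m'\in\Q[x]$ with $\deg g_i'<d$ and $\sum_i f_i g_i'=1$. Writing $g_i'=\sum_{j=0}^{d-1}a_{i,j}x^j$ and matching coefficients of $x^0,x^1,\ldots,x^{2d-1}$ on both sides produces a linear system $M\mathbf{a}=\mathbf{e}_0$, where $M$ is a $2d\times md$ integer matrix whose entries come from the coefficients of the $f_i$ (hence bounded in absolute value by $2^h$), and whose column indexed by $(i,j)$ is a shifted copy of the coefficient vector of $f_i$, so carries at most $d+1$ nonzero entries.

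Since the system has a solution over $\Q$, I can select rows and columns of $M$ producing a nonsingular submatrix $M'\in\Z^{r\times r}$ with $r\le 2d$ such that $\mathbf{e}_0$ still lies in the span of the chosen columns. Setting $\delta:=\det M'$ and applying Cramer's rule produces a rational solution whose coordinates all have the form (integer)$/\delta$; multiplying through gives $g_i:=\delta g_i'\in\Z[x]$ with $\deg g_i<d$ and $\sum_i f_i g_i=\delta$. Each column of $M'$ has Euclidean norm at most $\sqrt{d+1}\cdot 2^h$, so Hadamard's inequality yields
\[
|\delta|\;\le\;\bigl(\sqrt{d+1}\cdot 2^h\bigr)^{r}\;\le\;\bigl(\sqrt{d+1}\cdot 2^h\bigr)^{2d},
\]
whence $\height(\delta)\le 2dh+d\log(d+1)=d(2h+\log(d+1))$, as claimed.

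For the height bound on the \gHNF\ of $[f_1,\ldots,f_m]$, write the reduced Gr\"obner basis as $G=\{G_1,\ldots,G_k\}$. Because $\gcd_{\Q[x]}(f_1,\ldots,f_m)=1$, the primitive part $\tilde g_1$ in Proposition~\ref{GB form'} equals $1$, so $G_1$ is a positive integer; by reducing an arbitrary integer in the ideal, $G_1$ must generate $(f_1,\ldots,f_m)_{\Zx}\cap\Z$, so the containment $\delta\in(f_1,\ldots,f_m)_{\Zx}\cap\Z$ forces $G_1\mid\delta$, and thus $\height(G_1)\le\height(\delta)$. For $i\ge 2$, the reducedness condition in Definition~\ref{definition1}(4) requires $G_i$ to be $\lt(G_1)=G_1$-reduced, so every coefficient of $G_i$ lies in $[0,G_1)$; hence $\height(G_i)\le\log G_1\le\height(\delta)$. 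The main obstacle will be keeping the Hadamard estimate sharp enough by exploiting the $d+1$ sparsity per column rather than the naive $2d$: without this refinement one would obtain only a $\log(2d)$ factor, whereas the bound in the statement requires the $\log(d+1)$ factor, so the accounting of nonzero entries and the choice of $r\le 2d$ must be executed carefully in tandem.
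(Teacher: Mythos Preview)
Your proposal is correct and follows essentially the same route as the paper: both reduce the identity $1=\sum f_i u_i$ with $\deg u_i<d$ to a $2d\times md$ integer linear system, bound a common denominator $\delta$ via Cramer's rule and Hadamard's inequality using the $(d{+}1)$-sparsity of each column, and then observe that $\delta\in(f_1,\ldots,f_m)_{\Zx}$ forces the \gHNF\ height to be at most $\height(\delta)$. Your treatment of the second claim is more explicit than the paper's (you invoke Proposition~\ref{GB form'} and the reducedness condition to argue each coefficient lies in $[0,G_1)$), but the argument is the same in substance.
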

\begin{proof}
By Lemma \ref{gcd<d}, we have $1=f_1u_1+\cdots+f_mu_m$, where $u_i\in \Q[x]$ of degree $< d$. Assume $f_i=a_{i0}+\cdots+a_{id}x^d$,~$u_j=b_{j0}+\cdots+b_{j,d-1}x^{d-1}$.
Then we have the matrix equation $Ab=[1,0,\ldots,0]^\tau\in\Z^{2d}$, where
 $A=[A_1,\ldots,A_m]$ with
% \begin{equation}\label{resultantmatrix}
$$ A_i=\left(
           \begin{array}{cccc}
             a_{i0} &  &  &  \\
             a_{i1} & a_{i0} &  &  \\
             \vdots &  & \ddots &  \\
             a_{i,d} &  &  & a_{i0} \\
              & \ddots &  & \vdots \\
              & &  & a_{i,d} \\
           \end{array}
         \right)_{2d\times d}
         $$
% \end{equation}
 for $i=1,\ldots,m$, and $b=[b_{1,0},\ldots,b_{1,d-1},\ldots,b_{m,0},\ldots,b_{m,d-1}]^\tau\in \Q^{md}$. Let $t=\rank(A)\le 2d$.
By the Cramer's rule, $\delta$ can be bounded by the nonzero $t\times t$ minors of $A$.
 By the Hadamard's inequality, we have $0<\delta \le ((d+1)a^2)^d$, where $a=\max_{i,j}|a_{ij}|$. So $\height(\delta)\le d(2h+\log (d+1))$.
In this case, $\delta\in(f_1,\ldots, f_m)_{\Zx}$. Hence, the height of \gHNF of $[f_1,\ldots,f_m]$ is $\le \height(\delta)$.
\end{proof}

The following lemma is given by Gel'fond \cite{gelfond1960} and a simpler proof can be found in \cite[p178]{zippel1993effective}.
\begin{lemma}\label{lm-gcdheight}
Let $P_1$ and $P_2$ be two monic polynomials in $\mathbb{C}[x]$, such that $\deg(P_1) + \deg(P_2) = d $. Then
$ |P_1||P_2| \le (d+1)^{1/2}2^d|P_1P_2|.$
\end{lemma}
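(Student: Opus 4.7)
The plan is to route both inequalities through the \emph{Mahler measure}. For $P(x)=a\prod_{i=1}^{n}(x-\alpha_i)\in\C[x]$, set $M(P)=|a|\prod_{i=1}^{n}\max(1,|\alpha_i|)$. The point of introducing $M$ is that the given statement compares coefficient norms of $P_1$, $P_2$ and $P_1P_2$ on two sides, which is awkward because coefficients of a product do not factor. Mahler measure, by contrast, is exactly multiplicative: $M(P_1P_2)=M(P_1)M(P_2)$, directly from the definition, since the multiset of roots of $P_1P_2$ is the union of those of $P_1$ and $P_2$ and the leading coefficients multiply.

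The next step is to relate the coefficient norm $|P|$ to $M(P)$ in both directions for a monic polynomial of degree $n$. On one side I would use the expression of each coefficient of a monic polynomial as an elementary symmetric function of its roots: the $k$-th coefficient is, up to sign, $e_k(\alpha_1,\ldots,\alpha_n)$, which has at most $\binom{n}{k}$ summands each of absolute value $\le \prod_i\max(1,|\alpha_i|)=M(P)$; hence
\[
|P|\le \binom{n}{\lfloor n/2\rfloor}M(P)\le 2^{n}M(P).
\]
On the other side I would invoke Landau's inequality $M(P)\le \|P\|_2$, proved by applying Jensen's formula to $\log|P|$ on the unit circle, combined with the trivial bound $\|P\|_2\le\sqrt{n+1}\,|P|$, giving $M(P)\le\sqrt{n+1}\,|P|$.

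Putting the pieces together, for monic $P_1$ and $P_2$ with $\deg(P_1)+\deg(P_2)=d$,
\[
|P_1|\,|P_2|\le 2^{\deg P_1}M(P_1)\cdot 2^{\deg P_2}M(P_2)=2^{d}M(P_1P_2)\le 2^{d}\sqrt{d+1}\,|P_1P_2|,
\]
which is exactly the claimed inequality. The main obstacle in this route is not the multiplicativity or the symmetric-function bound, which are essentially combinatorial, but Landau's inequality: it is the only analytic input and requires Jensen's formula (or an equivalent Hardy-space argument). Once that lemma is quoted, the rest of the proof is a two-line chain of inequalities that makes the constants $2^{d}$ and $\sqrt{d+1}$ transparent (the $2^{d}$ comes from the coefficient-versus-Mahler-measure bound on each factor, and the $\sqrt{d+1}$ comes from the single application of Landau on the product).
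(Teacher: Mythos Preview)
Your argument is correct. The paper does not give its own proof of this lemma; it simply attributes the result to Gel'fond and points to Zippel's book for a proof. Your route via the Mahler measure --- multiplicativity of $M$, the coefficient bound $|P|\le 2^{\deg P}M(P)$ from elementary symmetric functions, and Landau's inequality $M(P)\le\|P\|_2\le\sqrt{\deg P+1}\,|P|$ --- is precisely the standard proof found in those references, so there is nothing to contrast.
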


The following lemma gives a height bound for the gcd in $\Zx$.
\begin{lemma}\label{gcdheightbound}
Let $f_1,\ldots,f_m\in\Z[x]$ and $g=\gcd(f_1,\ldots,f_m)$ in $\Z[x]$. Then the height of $g$ is bounded by $\frac{1}{2}\log(d+1)+d\log 2+h$.
%, where $d=\max_{1\le i\le m}\deg(f_i)$ and $h=\height(f_1,\ldots,f_m)$.
\end{lemma}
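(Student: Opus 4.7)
The plan is to apply Gel'fond's inequality (Lemma~3.3) to a single factorization $f_i = g\cdot q$ for a well-chosen $i$, after first rescaling $g$ and $q$ to be monic. If all $f_i$ vanish then $g=0$ and there is nothing to prove, so I may assume some $f_{i_0}\neq 0$; for notational convenience I take $i_0 = 1$. Since $g=\gcd(f_1,\ldots,f_m)$ in $\Z[x]$, we have $g\mid f_1$ in $\Z[x]$, i.e.\ $f_1=g\cdot q$ for some $q\in\Z[x]\setminus\{0\}$. Let $d_1 = \deg(f_1) \le d$, $a=\lc(g)$, $b=\lc(q)$, so that $ab = \lc(f_1)$.

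Next I would normalize. Set $\tilde g = g/a$ and $\tilde q = q/b$ in $\Q[x]$; both are monic with $\deg(\tilde g)+\deg(\tilde q)=d_1$, and $\tilde g \tilde q = f_1/(ab)$. Lemma~3.3 applied to $\tilde g,\tilde q$ yields
\[
|\tilde g|\,|\tilde q| \;\le\; (d_1+1)^{1/2}\,2^{d_1}\,|\tilde g\tilde q|.
\]
Multiplying both sides by $|a|\,|b|$ and noting that $|g|=|a|\,|\tilde g|$, $|q|=|b|\,|\tilde q|$, and $|\tilde g \tilde q|\,|ab| = |f_1|$ (because the leading coefficient $ab$ of $f_1$ satisfies $|ab|\le|f_1|$, and actually equality of products here is tight), we obtain
\[
|g|\,|q| \;\le\; (d_1+1)^{1/2}\,2^{d_1}\,|f_1|.
\]
Since $q\in\Z[x]\setminus\{0\}$, some coefficient of $q$ is a nonzero integer, so $|q|\ge 1$. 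Combined with $d_1\le d$ and $|f_1|\le 2^h$, this gives
\[
|g| \;\le\; (d+1)^{1/2}\,2^d\,|f_1| \;\le\; (d+1)^{1/2}\,2^d\,2^h,
\]
and taking $\log$ yields the claimed bound $\height(g)\le \tfrac12\log(d+1)+d\log 2+h$.

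The only nontrivial step is the bookkeeping in going from the monic version of Gel'fond's inequality to the integer-polynomial version; the key observation making this painless is that the denominators $|a|\,|b|$ introduced by normalization exactly cancel against the leading coefficient of $f_1$ hidden in $|\tilde g \tilde q| = |f_1|/|ab|$, and the remaining factor $|q|\ge 1$ can simply be dropped. I do not anticipate any other obstacle.
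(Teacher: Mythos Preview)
Your proof is correct and follows essentially the same approach as the paper: factor a nonzero $f_i$ as $g\cdot q$, normalize both factors to be monic, apply Gel'fond's inequality (Lemma~3.3), and undo the normalization using $|g|=|a|\,|\tilde g|$, $|q|=|b|\,|\tilde q|$, $|\tilde g\tilde q|\,|ab|=|f_i|$. The paper carries this out for every $i$ (which also yields the companion bound $\height(f_i/g)\le \tfrac12\log(d+1)+d\log 2+h$ recorded in Remark~3.6), whereas you only need a single nonzero $f_i$; both are fine for the lemma as stated. One cosmetic point: writing $|f_1|\le 2^h$ presumes $\log$ is base~$2$; it is cleaner to pass directly from $|g|\le (d+1)^{1/2}2^d|f_1|$ to $\height(g)\le \tfrac12\log(d+1)+d\log 2+\height(f_1)\le \tfrac12\log(d+1)+d\log 2+h$, which is base-independent.
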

\begin{proof}
  Since $g=\gcd(f_1,\ldots,f_m)$ is in $\Z[x]$, for each $i=1,\ldots,m$, there exists a $g_i\in \Z[x]$ such that $gg_i=f_i.$
  Let $g'=g/{\bf LC}(g)$ and $g_i'=g_i/{\bf LC}(g_i)$. Then $f_i'=f_i/{\bf LC}(f_i)=f_i/{\bf LC}(g){\bf LC}(g_i)$ and $|f_i| = |f_i'||{\bf LC}(f_i)|$.
  Let $d_i=\deg(f_i)$.
  By Lemma \ref{lm-gcdheight}, we have $|g'||g_i'| \le (d_i+1)^{1/2}2^{d_i} |f_i'|$ for each $1\le i\le m$, where $d_i=\deg(f_i)$.
 Then $|g||g_i| = |{\bf LC}(g){\bf LC}(g_i) ||g'||g_i'| \le  (d_i+1)^{1/2}2^{d_i} | {\bf LC}(g){\bf LC}(g_i)||f_i'| =  (d_i+1)^{1/2}2^{d_i} |f_i|$.
  We have
  \begin{align}\label{heightamplify}
  \height(g)&\le \height(g)+\height(g_i)\notag\\ &\le \frac{1}{2}\log(d_i+1)+d_i\log 2+\height(f_i)~~\emph{{\rm for any $i$}}\\ &\le \frac{1}{2}\log(d+1)+d\log 2+h.\notag
 \end{align}
\end{proof}
\begin{remark}\label{factorheight}
 By equation \cref{heightamplify}, we have $\height(f_i/g)\le \frac{1}{2}\log(d+1)+d\log 2+h$ for any $i$.
\end{remark}

We now give the degree and height bounds for the \gHNF in $\Zx$.
\begin{lemma}  \label{heightgeneral}
Let $f_1,\ldots,f_m\in \Z[x]$ and $[g_1,\ldots,g_s]$ the {\rm GHNF} of $[f_1,\ldots,f_m]$.
Then  $\deg(g_i)\le d$ and $\height(g_i)\le (2d+1)(h+d\log 2+\log(d+1))$.
\end{lemma}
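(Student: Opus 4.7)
The plan is to establish the degree and height bounds separately, using Lazard's structural description (Proposition~\ref{GB form'}) together with the gcd reductions of Lemma~\ref{gcdheightbound} and Remark~\ref{factorheight}. For the degree bound I use that Buchberger's algorithm for $\Zx$-modules preserves the property of having degree $\le d$: the $S$-vector of two polynomials of degree $\le d$ has degree $\le d$ in all three cases of \cref{eq-svec} (after the designed cancellation), and reduction never raises degree. Since the algorithm starts with $\{f_1,\ldots,f_m\}$ of degree $\le d$, every element of the final reduced GB $\{g_1,\ldots,g_s\}$ also has degree $\le d$.

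For the height bound I reduce to the gcd-one case. Let $g=\gcd(f_1,\ldots,f_m)\in\Zx$, write $g=a\tilde g_1$ with $a=\text{Content}(g)$ and $\tilde g_1=\text{Primpart}(g)$ (matching the notation of Proposition~\ref{GB form'}), and put $\hat f_j=f_j/g\in\Zx$. Lemma~\ref{gcdheightbound} gives $\height(g)\le\tfrac12\log(d+1)+d\log 2+h$, and Remark~\ref{factorheight} gives $\height(\hat f_j)\le\hat h:=\tfrac12\log(d+1)+d\log 2+h$. The ideal $(\hat f_1,\ldots,\hat f_m)_{\Zx}$ has gcd and content both $1$; by Proposition~\ref{GB form'} its reduced GB $\{\hat g_1,\ldots,\hat g_s\}$ satisfies $\hat g_1=b_1\cdots b_{s-1}\in\Z^{+}$, $\hat g_s=h_s$ monic, and $\hat g_i=b_i\cdots b_{s-1}h_i$ with $h_i$ monic of degree $d_i$. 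Applying Lemma~\ref{heightboundQx} to $(\hat f_1,\ldots,\hat f_m)_{\Zx}$ (valid because $1\in(\hat f_j)_{\Q[x]}$) produces $\delta\in\Z$ in this ideal with $\height(\delta)\le d(2\hat h+\log(d+1))\le 2d(h+d\log 2+\log(d+1))$. Since $\hat g_1$ is the smallest positive integer in the ideal we have $\hat g_1\mid\delta$, so $\log\hat g_1\le\height(\delta)$. The GHNF reducedness condition then forces every non-leading coefficient of $\hat g_i$ to lie in $[0,\hat g_1)$ (reduction by the constant $\hat g_1$), while the leading coefficient $b_i\cdots b_{s-1}$ divides $\hat g_1$; hence $|\hat g_i|_\infty\le\hat g_1$ and $\height(\hat g_i)\le 2d(h+d\log 2+\log(d+1))$.

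To finish, since $(f_1,\ldots,f_m)_{\Zx}=g\cdot(\hat f_1,\ldots,\hat f_m)_{\Zx}$, the reduced GB elements of the original ideal are obtained from $\{g\hat g_i\}$, and combining the elementary product bound $\height(gp)\le\height(g)+\height(p)+\log(d+1)$ with the estimates above yields $\height(g_i)\le(2d+1)(h+d\log 2+\log(d+1))$ after simplification. The main obstacle I anticipate is this final combining step: one must verify that the GHNF of $(f_1,\ldots,f_m)$ has heights bounded by those of $\{g\hat g_i\}$, i.e., that any further reduction needed to reach the reduced form of the GHNF definition does not inflate heights (this uses Proposition~\ref{GB form}(3) together with the GHNF reducedness), and that the $\log(d+1)$ overhead from polynomial multiplication does not push the total past the stated $(2d+1)$ coefficient.
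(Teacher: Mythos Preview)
Your approach is the paper's approach: divide out $g=\gcd(f_1,\ldots,f_m)$, apply Lemma~\ref{heightboundQx} to the quotient to bound the height of the GHNF of $[\hat f_1,\ldots,\hat f_m]$, then recombine with the height bound on $g$. The detour through Proposition~\ref{GB form'} and the explicit argument that every coefficient of $\hat g_i$ is bounded by $\hat g_1$ is unnecessary, since the second sentence of Lemma~\ref{heightboundQx} already gives $\height(\hat g_i)\le d(2\hat h+\log(d+1))$ directly; you may simply quote that.

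Your flagged concern about the combining step is well placed. The paper writes $\height(g_i)\le\height(g_i/g)+\height(g)$ with no product correction, implicitly asserting that $[g_1/g,\ldots,g_s/g]$ is the GHNF of $[\hat f_1,\ldots,\hat f_m]$. That assertion is not literally true (take $g=x+2$, $\hat g_1=3$, $\hat g_2=x+1$: then $g\hat g_2=x^2+3x+2$ is not reduced with respect to $g\hat g_1=3x+6$; the actual GHNF element is $x^2-4$), so the quotient $g_i/g$ need not be $\hat g_i$. What \emph{is} true is that $g_1=\hat g_1 g$ exactly (so $\height(g_1)=\log\hat g_1+\height(g)$), and for $i\ge 2$ every coefficient of $g_i$ at degree $\ge\deg g$ lies in $[0,\hat g_1\,\lc(g))$ by reducedness with respect to $g_1$; since $g\mid g_i$ the low-degree coefficients are controlled too. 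Working this through gives a bound that fits under $(2d+1)(h+d\log 2+\log(d+1))$, but neither your sketch nor the paper's proof actually carries out that verification; both stop at the additive combination. If you want a clean finish, bound $\height(g_i)$ by $\log(\hat g_1\,\lc(g))+\height(g)$ via the observation that $g_i/g$ has all coefficients in $(-\hat g_1\lc(g),\hat g_1\lc(g))$, which avoids the polynomial-product overhead entirely.
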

\begin{proof}
Obviously, the degree bound of the \gHNF in $\Z[x]$ is $d$ by the procedure of the \gb\, computation.
  Let $g=\gcd(f_1,\ldots,f_m)$ in $\Z[x]$,  then, $[g_1/g,\ldots,g_s/g]$ is the \gHNF of $[f_1/g,\ldots,f_m/g]$.
  By  Lemmas \ref{gcdheightbound} and \ref{factorheight},
   $\height(g)$ and $\height(f_i/g)$  are both $\le \frac{1}{2}\log(d+1)+d\log 2+h$.
  Moreover, $1\in (f_1/g,\ldots,f_m/g)\Q[x]$.
  By Lemma \ref{heightboundQx}, $\height(g_i/g)\le d(2(\frac{1}{2}\log(d+1)+d\log 2+h)+\log (d+1))= 2d(h+d\log 2+\log(d+1))$.
  So, $\height(g_i)\le 2d(h+d\log 2+\log(d+1))+\frac{1}{2}\log(d+1)+d\log 2+h\le (2d+1)(h+d\log 2+\log(d+1))$.
\end{proof}

Finally, we consider an effective Nullstellensatz in $\Zpx$,
whose proof follows that of Lemma 6.4 in \cite{aschenbrenner2004ideal}.
\begin{lemma}\label{degreeboundzpx}
If $1\in (f_1,\ldots,f_m)_{\Zpx}$, then there exist $h_1,\ldots,h_n\in \Zpx$ of degree at most $3d^2(2h+\log(d+1))/\log p$ such that
$1=f_1h_1+\cdots+f_mh_m$.
\end{lemma}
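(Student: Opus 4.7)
The plan is to merge two auxiliary relations for $1$: a height-bounded $\Zx$-combination producing an integer multiple $\delta$ of $1$, and a degree-bounded $\mathbb{F}_p[x]$-combination that, lifted back, trades a factor of $p$ for a genuine ideal element. A binomial identity will then glue them together by matching the $p$-power produced by each side. This is essentially the route taken by Aschenbrenner for Lemma~6.4 of \cite{aschenbrenner2004ideal}, specialized to the univariate setting.

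For the first ingredient, I would invoke Lemma~\ref{heightboundQx}, noting $1\in(f_1,\ldots,f_m)_{\Zpx}\subseteq (f_1,\ldots,f_m)_{\Qx}$, to obtain $g_1,\ldots,g_m\in\Zx$ of degree $<d$ and a nonzero $\delta\in\Z$ with $\height(\delta)\le d(2h+\log(d+1))$ satisfying $\delta=\sum_i f_ig_i$. Writing $\delta=p^t\delta'$ with $p\nmid\delta'$ makes $\delta'$ a unit in $\Zp$, and dividing gives
\begin{equation*}
p^t=\sum_{i=1}^m f_i\,(g_i/\delta')\ \text{in }\Zpx,\qquad t\le v_p(\delta)\le \frac{d(2h+\log(d+1))}{\log p}.
\end{equation*}
For the second, reducing modulo $p$ through the surjection $\Zpx\to\mathbb{F}_p[x]$ sends $1$ to $1$, so $\gcd(\bar f_1,\ldots,\bar f_m)=1$ in $\mathbb{F}_p[x]$, and Lemma~\ref{gcd<d} supplies $\bar u_i\in\mathbb{F}_p[x]$ of degree $<d$ with $\sum_i\bar f_i\bar u_i=1$. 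Lifting these coefficientwise to $u_i\in\Zpx$ of the same degree, the element $a:=\sum_i f_iu_i$ satisfies $1=a+p\rho$ for some $\rho\in\Zpx$, and comparing degrees on both sides of $1-a=p\rho$ gives $\deg a,\deg\rho\le 2d-1$.

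For the gluing step, I would raise $1=a+p\rho$ to the $t$-th power and split off the pure $(p\rho)^t$ term from the terms that carry a factor of $a$:
\begin{equation*}
1=(a+p\rho)^t=a\cdot A+p^t\rho^t,\qquad A:=\sum_{j=1}^{t}\binom{t}{j}a^{j-1}(p\rho)^{t-j}\in\Zpx.
\end{equation*}
Substituting $a=\sum_i f_iu_i$ and $p^t=\sum_i f_i(g_i/\delta')$ then yields $1=\sum_i f_i(Au_i+\rho^t g_i/\delta')$, so I take $h_i:=Au_i+\rho^t g_i/\delta'$. Tracking degrees, $\deg A\le (t-1)(2d-1)$ and $\deg\rho^t\le t(2d-1)$, whence $\deg h_i\le t(2d-1)+d-1$; inserting the bound on $t$ and absorbing the lower-order $+d$ into the leading term produces the claimed $3d^2(2h+\log(d+1))/\log p$. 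The main difficulty is controlling the degree during the lifting: a naive Hensel-style iteration (lifting mod $p^2, p^3, \ldots$) would multiply degrees at each stage, whereas the binomial trick packages the entire mod-$p$ error into a single, well-controlled factor $p^t\rho^t$ that the height-bounded $\Z$-relation eliminates in one stroke.
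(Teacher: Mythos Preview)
Your proof is correct and follows essentially the same route as the paper's; in fact you make explicit the binomial expansion $1=(a+p\rho)^t=aA+p^t\rho^t$ that the paper suppresses when it simply asserts the existence of $s_1,\ldots,s_m$ with $\deg(s_j)\le\mu(2d-1)-d$. Your $h_i=Au_i+\rho^t g_i/\delta'$ is literally the paper's $h_j=s_j+(s/\delta)g_j$ once one unwinds $s=p^\mu\rho^\mu$ and $\delta=p^\mu\delta'$.
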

\begin{proof}
Suppose $1\in (f_1,\ldots,f_m)_{\Zpx}$, then   $1\in (f_1,\ldots,f_m)_{\Qx}$. By Lemma \ref{heightboundQx},
there exist $\delta\in \Z\setminus \{0\}$ with height $\le d(2h+\log (d+1))$
 and $g_1, \ldots,g_m\in \Zx$ with degrees $<d$ satisfying
 \begin{equation}\label{eq-delta}
   \delta=f_1g_1+\cdots+f_mg_m.
 \end{equation}
If $\delta$ is a unit in $\Zp$, then $$1=f_1(g_1/\delta)+\cdots+f_m(g_m/\delta).$$
Let $h_i=g_i/\delta$ for $i=1,\ldots,m$. Then we have the required properties. Suppose that $\delta$ is not a unit. Let $\mu=v_p(\delta)\ge 1.$
Clearly we have $1\in (f_1,\ldots,f_m)(\Zp/p\Zp)[x]$. Then by the Extended Euclidean Algorithm, there exist $r_1,\ldots,r_m\in \Z[x]$ with
$$1-(r_1f_1+\cdots+r_mf_m)\in (p)\Zpx$$ and $\deg(r_j)< d$ for all $j=1,\ldots,m.$
So there exists $s_1,\ldots,s_m\in \Zpx$ and $s\in (p^\mu)\Zpx$ such that
 \begin{equation}\label{eq-s}
  1-(f_1s_1+\cdots+f_ms_m)=s.
\end{equation}
We have $\deg(s_j)\le \mu(2d-1)-d$ for all $j$; hence $\deg(s)\le \mu(2d-1)$. By equations \cref{eq-delta} and \cref{eq-s}, we have
$$1=f_1s_1+\cdots+f_ms_m+s=f_1h_1+\cdots+f_mh_m$$ with $h_j=s_j+(s/\delta)g_j\in \Zpx$. We have
$$\deg(sg_j)\le \mu(2d-1)+d\le 3\mu d.$$
Since $\mu \log p\le \height(\delta)\le d(2h+\log (d+1))$, it follows that $\deg(h_j)$ is bounded by $3d^2(2h+\log (d+1))/\log p$.
\end{proof}

Then we can give the degree bound for the global case.
  \begin{lemma}
  \label{degreeboundforU}
  If $1\in (f_1,\ldots,f_m)_{\Z[x]}$, then there exist $h_1,\ldots,h_m\in \Z[x]$ such that $1=f_1h_1+\cdots+f_mh_m$, with
  $\deg(h_i)\le 3d^2(2h+\log (d+1))$ for $i=1,\ldots,m$.
\end{lemma}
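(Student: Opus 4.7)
Set $D := 3d^2(2h+\log(d+1))$. The goal is to exhibit $h_1,\ldots,h_m\in \Z[x]$ with $\deg h_i\le D$ and $\sum f_i h_i = 1$. I would recast this as a linear system: let $A$ be the Sylvester-type integer matrix whose columns are the coefficient vectors of $x^j f_i$ for $0\le j\le D$, $1\le i\le m$, and let $\mathbf{b}=(1,0,\ldots,0)^\tau$ encode the constant polynomial $1$. A solution $\mathbf{x}\in\Z^{m(D+1)}$ to $A\mathbf{x}=\mathbf{b}$ is exactly what we want, namely the coefficients of $h_1,\ldots,h_m$.

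The plan is to verify the system is solvable locally everywhere and then invoke the local--global principle for $\Z$-linear systems. First, for every prime $p$, the inclusion $1\in (f_1,\ldots,f_m)_{\Z[x]}\subseteq (f_1,\ldots,f_m)_{\Zpx}$ lets us apply Lemma \ref{degreeboundzpx} to obtain $h_i^{(p)}\in \Zpx$ of degree at most $3d^2(2h+\log(d+1))/\log p$ with $\sum f_i h_i^{(p)}=1$. Interpreting $\log$ as $\log_2$ (as in the local lemma), for every prime $p\ge 2$ we have $1/\log p\le 1$, so $\deg h_i^{(p)}\le D$; padding with zeros, this gives a solution to $A\mathbf{x}=\mathbf{b}$ in $\Z_{(p)}^{m(D+1)}$. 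Second, Lemma \ref{gcd<d} (applied to the field $\Q$) yields a rational solution of degree $<d\le D$, so the system is also solvable over $\Q$.

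At this point I would invoke the local--global principle: for any integer matrix $A$ and integer vector $\mathbf{b}$, the equation $A\mathbf{x}=\mathbf{b}$ has a solution in $\Z^M$ iff it has a solution in $\Q^M$ and in $\Z_{(p)}^M$ for every prime $p$. This is elementary: the obstruction to integer solvability lies in the finitely generated abelian group $\mathrm{coker}(A)=\Z^N/A\Z^M$, and the image of $\mathbf{b}$ in this cokernel vanishes iff it vanishes in every localization $\mathrm{coker}(A)_{(p)}=\Z^N_{(p)}/A\Z^M_{(p)}$ (a finitely generated abelian group is zero iff all its $p$-localizations are zero). Combining both hypotheses yields an integer solution, i.e.\ polynomials $h_i\in \Z[x]$ of degree $\le D$ satisfying $\sum f_i h_i = 1$.

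The only genuine content is the uniform degree bound across primes; the local--global step is formal. The obstacle is really concentrated in Lemma \ref{degreeboundzpx}, which has already been handled: the global bound $D$ is precisely the worst case $p=2$ of the local bound, and for $p\ge 3$ the local solutions automatically have even smaller degree. Thus no new estimates are needed, and the proof reduces to assembling Lemma \ref{degreeboundzpx}, Lemma \ref{gcd<d}, and the standard local--global assertion.
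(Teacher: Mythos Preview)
Your argument is correct. Both your approach and the paper's are local--global in spirit, but the executions differ. The paper first uses Lemma~\ref{heightboundQx} to produce a single integer $\delta=\sum f_j g_j$ with $\deg g_j<d$, then applies Lemma~\ref{degreeboundzpx} only at the finitely many prime divisors $p_1,\ldots,p_k$ of $\delta$: clearing denominators yields $\delta^{(p_i)}=\sum f_j h_j^{(p_i)}$ with $\delta^{(p_i)}$ a $p_i$-unit and $\deg h_j^{(p_i)}\le D/\log p_i$. Since $\gcd(\delta,\delta^{(p_1)},\ldots,\delta^{(p_k)})=1$, a B\'ezout identity $1=a\delta+\sum a_i\delta^{(p_i)}$ gives the explicit combination $h_j=ag_j+\sum a_i h_j^{(p_i)}$.

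Your route instead encodes the problem as a fixed integer linear system, checks solvability in $\Z_{(p)}$ for \emph{every} prime via the uniform bound $D/\log p\le D$, and appeals to the abstract local--global principle for finitely generated abelian groups. This is cleaner and makes transparent why the global degree bound is exactly the $p=2$ case of the local one; the check over~$\Q$ is in fact redundant, since solvability in any $\Z_{(p)}$ already gives it. The paper's version, by contrast, is fully constructive and touches only finitely many primes, so it could in principle be pushed further to control the heights of the~$h_j$ (which the abstract principle does not immediately yield).
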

\begin{proof}
  By Lemma \ref{heightboundQx}, we have $g_1,\ldots,g_m\in \Zx$ with degrees $<d$ and $\delta\in \Z$ satisfying
  $$\delta=f_1g_1+\cdots+f_mg_m.$$
  Let $p_1,\ldots,p_k$ be all the prime factors of $\delta$.
  Since $1\in (f_1,\ldots,f_m)_{\Zx}$,
  we have $1\in (f_1,\ldots,f_m)_{\Z_{(p_i)}[x]}$.
  By Lemma \ref{degreeboundzpx}, there exist $h_1^{(p_i)},\ldots,h_m^{(p_i)}\in \Zx$ with degrees $\le 3d^2(2h+\log (d+1))/\log p_i$
  and $\delta^{(p_i)}\in \Z\setminus(p)\Z$ satisfying $\delta^{(p_i)}=f_1h_1^{(p_i)}+\cdots+f_mh_m^{(p_i)}$.
  Then there exist $a,a_1,\ldots,a_k\in \Z$ satisfying $$1=a\delta+a_1\delta^{(p_1)}+\cdots+a_k\delta^{(p_k)}.$$
  Hence letting $h_j=ag_j+a_1h_j^{(p_1)}+\cdots+a_kh_j^{(p_k)}\in \Zx$ for $j=1,\ldots,m$, we get
  $1=f_1h_1+\cdots+f_mh_m.$
  From this, we can easily get $\deg(h_i)\le 3d^2(2h+\log (d+1))$ for $i=1,\ldots,m$.
\end{proof}

\subsection{Degree and height bounds for solutions to linear equations over $\Zx$}
Throughout this section, let $F=(f_{ij})\in \Zx^{n\times m}$,
$d=\deg(F)$  the maximal degree of elements in $F$,
and $h=\height(F)$ the maximal height of elements in $F$.
For anysubring $R$ of $\C$, let
$$\sol_{R[x]}(F) =\{Y\in R[x]^m\,|\, FY=0\}$$
which is an $R[x]$-module in $\Zx^m$.
Let $r$ be the rank of $F$ and $F_1$ the matrix consisting of $r$ linear
independent rows of $F$. Then, $\sol_{R[x]}(F) =\sol_{R[x]}(F_1)$.
So, we may assume $r=n$ unless mentioned otherwise.
In this section, we will show that $\sol_{R[x]}(F)$
has a set of generators whose degrees and heights can be nicely bounded.

For a prime $p$, $f=\sum_{v=0}^{\infty}f_vx^v\in \hatZpanglex$ is called \emph{regular of degree $s$ with respect to $p$}, or simply,
\emph{regular of degree $s$} when there is no confusion,
if its reduction $\overline{f}\in \hatZpanglex/p \hatZpanglex$ is unit-monic of degree $s$, that is,
$\overline{f_s}\neq 0$, and
$v_p(f_i)>0$ for all $i>s$, where $v_p$ is the $p$-valuation.
Now we describe the Weierstrass Division Theorem for $\hatZpanglex$:
\begin{theorem}[\cite{aschenbrenner2004ideal,bosch1984non}]\label{Weiestrasstheorem}
Let $g\in \hatZpanglex$ be regular of degree $s$. Then for each $f\in \hatZpanglex$, there are uniquely determined elements $q\in \hatZpanglex$ and $r\in \hatZp[x]$ with $\deg(r)<s$ such that $f=qg+r$.
\end{theorem}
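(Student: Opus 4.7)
The plan is to prove the theorem by the classical Weierstrass lifting-and-completion argument, exploiting that $\hatZpanglex$ is $p$-adically complete with residue ring $\hatZpanglex/p\hatZpanglex\cong\mathbb{F}_p[x]$: the restricted-series condition forces the coefficients of any $f\in\hatZpanglex$ to tend $p$-adically to $0$, so only finitely many survive the reduction modulo $p$ and the result is a genuine polynomial. The regularity hypothesis translates on the residue level to $\bar g\in\mathbb{F}_p[x]$ having degree exactly $s$ with unit leading coefficient, so ordinary Euclidean division is available there, and this is the only structural fact that needs to be pushed through the completion.

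For existence I would set $f_0=f$ and construct sequences $q_i,r_i\in\hatZp[x]$ and $f_i\in\hatZpanglex$ by induction. At stage $i$, perform the Euclidean division $\bar{f_i}=\bar{q_i}\bar g+\bar{r_i}$ in $\mathbb{F}_p[x]$ with $\deg\bar{r_i}<s$; lift $\bar{q_i}$ and $\bar{r_i}$ to polynomials $q_i\in\hatZp[x]$ and $r_i\in\hatZp[x]$ with $\deg r_i<s$; and solve $f_i-q_ig-r_i=pf_{i+1}$ for $f_{i+1}\in\hatZpanglex$, which is possible because the difference reduces to zero modulo $p$. Setting $q=\sum_{i\ge 0}p^iq_i$ and $r=\sum_{i\ge 0}p^ir_i$, the series $r$ converges coefficient-wise in $\hatZp[x]$ to a polynomial of degree $<s$, and the partial sums of $q$ lie in $\hatZpanglex$ with Gauss norms shrinking by $p^{-1}$ at each step, so by completeness $q\in\hatZpanglex$. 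A telescoping computation then yields $f-qg-r=\lim_n p^nf_n=0$. For uniqueness, if $f=qg+r=q'g+r'$ with $\deg r,\deg r'<s$, then $(q-q')g=r'-r$; reducing modulo $p$ and using that $\bar g$ has degree $s$ with unit leading coefficient while $\overline{r'-r}$ has degree $<s$ forces $\overline{q-q'}=0$, so $q-q'=p\tilde q$ and $r'-r=p\tilde r$ for some $\tilde q\in\hatZpanglex$ and $\tilde r\in\hatZp[x]$ with $\deg\tilde r<s$; iterating gives $q-q'\in\bigcap_{n\ge 1}p^n\hatZpanglex=0$ and hence $r=r'$.

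The main technical delicacy will be verifying that the iteration actually converges inside $\hatZpanglex$ rather than in some larger completion: the degrees of the intermediate $q_i$ are not a priori bounded, so one cannot argue coefficient-wise as for $r$. The decisive point is the Gauss-norm Banach structure on $\hatZpanglex$, under which $|p^iq_i|\le p^{-i}\to 0$ forces the partial sums to be Cauchy in this Banach $\hatZp$-algebra and therefore to converge to a genuine restricted power series. Maintaining the distinction between $\hatZpanglex$ and $\hatZp[[x]]$---in particular invoking the restricted-series condition precisely when reducing modulo $p$, so that $\bar g$ really is a polynomial and $\mathbb{F}_p[x]$-division applies---is what ultimately makes the argument go through.
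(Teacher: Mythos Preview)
The paper does not give its own proof of this theorem: it is quoted as a known result with citations to \cite{aschenbrenner2004ideal,bosch1984non} and used as a black box in the proof of Lemma~\ref{homogeneousZpxcompletion}. Your argument is the standard successive-approximation proof of Weierstrass division in the Tate algebra, exactly the one found in Bosch--G\"untzer--Remmert, and it is correct as written; in particular your care about convergence of $\sum p^iq_i$ in the Gauss-norm Banach structure (rather than merely in $\hatZp[[x]]$) is the right point to stress, and your uniqueness argument by repeated reduction modulo $p$ is clean.
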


 \begin{lemma}\label{homogeneousZpxcompletion}
$\sol_{\hatZpanglex}(F)$ has a set of generators in $\Zx^m$ with degrees $\leq nd$.
\end{lemma}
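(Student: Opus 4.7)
The plan is to induct on $n$, with the Weierstrass Division Theorem (Theorem~\ref{Weiestrasstheorem}) as the workhorse: it lets me split any $\hatZpanglex$-solution into Koszul-type syzygies (in $\Zx^m$ of degree $\leq d$) plus a polynomial residue of bounded degree. First a preparation: since $F$ has rank $n$, after an integer translation $x \mapsto x+c$ (which preserves $\Zx$ and the bound $d$) I may assume that some entry of $F$, WLOG $f_{11}$, is regular of degree $s_1 \leq d$ with respect to $p$ --- the standard prerequisite for Weierstrass arguments.

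For the base case $n=1$, $F=(f_1, \ldots, f_m)$ with $f_1$ regular of degree $s_1 \leq d$. For $Y=(y_1, \ldots, y_m) \in \sol_{\hatZpanglex}(F)$, I apply Weierstrass division of each $y_j$ ($j \geq 2$) by $f_1$, obtaining $y_j = q_j f_1 + r_j$ with $r_j \in \hatZp[x]$ of degree $< s_1 \leq d$. The Koszul syzygies $K_j := f_j \e_1 - f_1 \e_j \in \Zx^m$ have degree $\leq d$; subtracting $\sum_{j \geq 2} q_j K_j$ from $Y$ produces a residual solution $(y_1', r_2, \ldots, r_m)$ in which the compatibility $f_1 y_1' + \sum_{j\geq 2} f_j r_j = 0$, combined with Weierstrass preparation of $f_1$, forces $y_1' \in \hatZp[x]$ of degree $\leq d$. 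The residual solution space is a finitely generated $\hatZp$-module (sitting inside a finite-dimensional $\hatZp$-vector space); its generators can be taken in $\Zx^m$ with degree $\leq d$ by approximating $\hatZp$-entries by $\Z$-entries modulo a suitable power of $p$.

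For the inductive step, assume the lemma for $n-1$ rows. Applying the base case to the first row $F^{(1)}$ yields generators $S_1, \ldots, S_t \in \Zx^m$ of degree $\leq d$ for $\sol_{\hatZpanglex}(F^{(1)})$. Any $Y \in \sol_{\hatZpanglex}(F)$ writes as $\sum_l c_l S_l$ with $c_l \in \hatZpanglex$, subject to $Gc = 0$ for $G := [\widehat F S_1 \mid \cdots \mid \widehat F S_t]$, an $(n-1) \times t$ matrix over $\Zx$ (with $\widehat F$ denoting rows $2,\ldots,n$ of $F$). A naive induction on $G$, whose entries have degree up to $2d$, would yield the loose bound $(2n-1)d$; achieving the sharp $nd$ requires exploiting that $G$ is not arbitrary but arises from $\widehat F$ multiplying degree-$d$ syzygies of $F^{(1)}$. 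Choosing the $S_l$'s as Koszul-type syzygies and carefully reorganizing, the inductive hypothesis yields $\sol_{\hatZpanglex}(G)$ generators in $\Zx^t$ of degree $\leq (n-1)d$, whence $Y$'s generators have degree $\leq d + (n-1)d = nd$.

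The hard part will be the degree accounting in the inductive step: the naive substitute-and-iterate strategy doubles the degree at each level and yields an exponential bound. The sharp $nd$ bound requires either the careful Cramer-/Koszul-type structuring of intermediate syzygies sketched above, or equivalently a direct triangulation of $F$ over $\hatZpanglex$ via iterated Weierstrass preparation from which the degree-$nd$ generators can be read off. Both routes rest on Weierstrass division converting regular elements of $\hatZpanglex$ into polynomial divisors, so that degree bounds over the completion descend to $\Zx$.
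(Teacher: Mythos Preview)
Your approach has a genuine gap at the inductive step, and it is precisely the point you flag as ``the hard part.''  The matrix $G=[\widehat F S_1\mid\cdots\mid\widehat F S_t]$ has entries of degree up to $2d$, so the inductive hypothesis applied to $G$ (an $(n-1)$-row matrix of degree $2d$) only yields generators of $\sol_{\hatZpanglex}(G)$ of degree $\le (n-1)\cdot 2d$, and hence generators of $\sol_{\hatZpanglex}(F)$ of degree $\le (2n-1)d$.  This does \emph{not} close the induction for the target bound $nd$; in fact, if you try to iterate honestly, the recursion $B(n,d)=d+B(n-1,2d)$ with $B(1,d)=d$ solves to $B(n,d)=(2^n-1)d$, which is exponential.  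Your proposed fixes (``choosing the $S_l$'s as Koszul-type syzygies and carefully reorganizing'' or ``direct triangulation via iterated Weierstrass preparation'') are assertions, not arguments: you never exhibit the mechanism by which the special structure of $G$ beats the generic degree bound, and it is not at all clear that Koszul syzygies alone suffice as the $S_l$'s (you also need the polynomial residual solutions from the base case, and those destroy the clean Koszul structure).

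The paper avoids this trap entirely by abandoning induction on $n$.  Instead it picks an $n\times n$ minor $\triangle$ of $F$ whose determinant $\delta$ has minimal $p$-valuation, multiplies by the adjugate of $\triangle$ to put the system in the form $\delta y_i+\sum_j c_{ij}y_{n+j}=0$, and then divides through by $p^{v_p(\delta)}$ to obtain a single element $\varepsilon\in\Zp[x]$ of degree $\le nd$ that is regular (this is where the minimality of $v_p(\delta)$ is used).  The ``Cramer'' vectors $u^{(1)},\ldots,u^{(m-n)}$ of degree $\le nd$ fall out immediately, and one Weierstrass division by $\varepsilon$ in each of the last $m-n$ coordinates reduces an arbitrary solution to a polynomial solution of degree $<nd$, which is then captured by solving a linear system over $\hatZp$.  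The $nd$ bound appears once, as $\deg\delta$, rather than accumulating through $n$ levels of recursion.  If you want to salvage an inductive argument, you would need a much more refined inductive hypothesis that tracks how the degree of the auxiliary matrix interacts with the remaining number of rows; the cleaner route is the paper's adjugate construction.

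Two smaller points: your reduction ``after an integer translation $x\mapsto x+c$ some $f_{11}$ is regular'' tacitly assumes some entry of $F$ has a coefficient that is a $p$-unit, which needs a preliminary division by a power of $p$; and your claim that $y_1'\in\hatZp[x]$ of degree $\le d$ in the base case (from $f_1y_1'=-\sum f_jr_j$) is not immediate when $\deg f_1>s_1$, since the leading coefficient of $f_1$ need not be a unit in $\hatZp$.
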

\begin{proof}
 Let $\triangle$ be an $n\times n$-submatrix of $F$ with $\delta=\det(\triangle)\neq 0$ having the least $p$-valuation among all the nonzero $n\times n$ minors of $F$.
 After permutating the unknowns of $y_1,\cdots,y_m$ in $Fy=0$, we may assume $\triangle =(f_{ij})_{1\leq i,j\leq n}$.
 Multiplying both sides of $Fy=0$ on the left by the adjoint of $\triangle$, the system $Fy=0$ becomes
\begin{equation}\label{By=0Zphatx}
\left(
  \begin{array}{cccccc}
    \delta & & & c_{1,n+1} & \cdots & c_{1,m} \\
     & \ddots& & \vdots &  & \vdots \\
    & & \delta & c_{n,n+1} & \cdots & c_{n,m} \\
  \end{array}
\right)
\left(
  \begin{array}{c}
    y_1 \\
    \vdots \\
    y_m \\
  \end{array}
\right)=\left(
  \begin{array}{c}
    0 \\
    \vdots \\
    0 \\
  \end{array}
\right)
\end{equation} where $\delta$ and all the $c_{ij}$ are in $\Zx$ with degrees $\le nd$. Note that, $v_p(c_{ij}) \ge v_p(\delta)$ for all $i, j$, by the choice of $\triangle$.
Let
\begin{equation}\label{equation3}
v^{(1)}=\left(
          \begin{array}{c}
            -c_{1,n+1} \\
            \vdots \\
            -c_{n,n+1} \\
            \delta \\
            0 \\
            \vdots \\
            0 \\
          \end{array}
        \right),\ldots,v^{(m-n)}=\left(
          \begin{array}{c}
            -c_{1,m} \\
            \vdots \\
            -c_{n,m} \\
            0 \\
            \vdots \\
            0 \\
            \delta \\
          \end{array}
        \right).
\end{equation} Then, $Fv^{(i)}=0$ for $i=1,\ldots,m-n$ and $v^{(1)},\ldots,v^{(m-n)}$ are in the $\hatZpanglex$-module $\sol_{\hatZpanglex}(F)$. Let $\mu =v_{p}(\delta), u^{(i)}=p^{-\mu}v^{(i)}$ for $i=1,\ldots, m-n$.
Then $u^{(1)},\ldots,u^{(m-n)}$ are also in $\sol_{\hatZpanglex}(F)$.
Multiplying the equation~\cref{By=0Zphatx} by $p^{-\mu}$, we have $By=0$, where $B=
\left(\begin{array}{cccccc}
   \varepsilon & & & d_{1,n+1} & \cdots &d_{1,m} \\
     & \ddots& & \vdots &  & \vdots \\
    & & \varepsilon & d_{n,n+1} & \cdots & d_{n,m} \\
  \end{array}\right)$ and $\varepsilon$ is regular of degree $s$ for some integer $s\leq nd$.
   Clearly, the $(n+i)$-th element of $u^{(i)}$ is $\varepsilon$.
   Moreover, $\varepsilon$ and all the $d_{ij}$ are in $\Zx$ with degrees $\le nd$

In the system $Fy=0$, let
$$f_{ij}=f_{ij0}+\cdots+f_{ijd}x^d,\quad y_j=y_{j0}+\cdots+y_{j,nd-1}x^{nd-1}$$
 for $1\leq i\leq n,~1\leq j\leq m$, where $f_{ijk}\in \Zp$ and $y_{jk}$ are the new unknowns in $\hatZpanglex$.
 The $i$-th equation in $Fy=0$ may then be written as
 $$ \sum_{l=0}^k\sum_{j=1}^m f_{ijl}y_{j,k-l} = 0 ,  \hskip 2.0cm 0\le k < (n+1)d, $$
 where we put $f_{ijl}=0$ for $l > d$ and $y_{jl}=0$ for $l\ge nd$.
  Then we obtain a new system $F'y'=0$, where $F'\in \Zp^{(nd(n+1))\times (mnd)}$, $y'=[y_{10},\ldots,y_{1,nd-1},\ldots,y_{m0},\ldots,y_{m,nd-1}]^\tau$,
  whose solutions in $\hatZp$ are in a one to one correspondence with the solutions of $Fy=0$ in $\hatZp[x]$ of degrees $<nd$.
  We have a set of finite generators for $F'y'=0$, thus we have finitely many solutions $y^{(1)},\ldots,y^{(M')}\in \Zpx^m$ of $Fy=0$ such that each solution to $Fy=0$ of degree $< nd$ is a $\hatZp$ linear combination of $y^{(1)},\ldots,y^{(M')}$.

We claim that the above
$ u^{(1)},\ldots, u^{(m-n)},y^{(1)},\ldots,y^{(M')}$
  generate the $\hatZpanglex$-module $\sol_{\hatZpanglex}(F)$. So $\sol_{\hatZpanglex}(F)$ can be generated by elements in $\Zpx^m$ of degrees $\leq nd$.

Now we prove the claim.
  Let $w=[w_1,\ldots,w_m]^\tau\in \hatZpanglex^m$ be any solution to $Fy=0$. Since $\varepsilon$ is regular of degree $s$ for some integer $s\leq nd$,
  by Theorem \ref{Weiestrasstheorem}, there exist $Q_{n+1},\ldots, Q_m\in \hatZpanglex$ and $R_{n+1},\ldots,R_m\in \hatZp[x]$ whose degrees are less than $ s$
  such that $R_j=w_j-Q_j\varepsilon$ for $j=n+1,\ldots,m$.
  Let $z=w-Q_{n+1}u^{(1)}-\cdots-Q_mu^{(m-n)}=[h_1,\ldots,h_n,R_{n+1},\ldots,R_m]$, which is obvious a solution to $By=0$.
  So we have $\varepsilon h_i=-d_{i,n+1}R_{n+1}-\cdots-d_{i,m}R_{m}$ for $i=1,\ldots,n$.
  Since $\varepsilon, d_{ij}$ are in $\hatZp[x]$ with degrees $\le nd$ and $R_j\in \hatZp[x]$ are of degrees $< s$,
  we have $\deg(h_i)<nd$ for $i=1,\ldots,n$.
  Hence $\deg(z)< nd$, therefore it can be expressed as the $\hatZp[x]$ combination of $y^{(1)},\ldots,y^{(M')}$.
  Now it is clear that $w$ is the $\hatZp[x]$ combination of $u^{(1)},\ldots, u^{(m-n)},y^{(1)},,\ldots,y^{(M')}$.
  Hence $\sol_{\hatZpanglex}(F)$ as a $\hatZpanglex$-module can be generated by $u^{(1)},\ldots, u^{(m-n)},$ $y^{(1)},,\ldots,y^{(M')}$.
\end{proof}

In the proof of Lemma \ref{homogeneousZpxcompletion}, if we choose $\triangle$ to be any $n\times n$-submatrix of $F$ whose determinant is nonzero, let $\mu=0$ and do the computations in $\Qx$, we can easily give the following lemma:
\begin{lemma}
\label{homogeneousQx}
$\sol_{\Qx}(F)$ can be generated by elements in $\Zx^m$ of degrees $\le nd$.
\end{lemma}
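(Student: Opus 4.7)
The plan is to mirror the argument of Lemma \ref{homogeneousZpxcompletion}, with two simplifications afforded by working over $\Qx$ instead of $\hatZpanglex$: I can drop the $p^{-\mu}$ rescaling (i.e.\ set $\mu=0$) and replace the Weierstrass Division Theorem by ordinary polynomial division. First I would choose any $n\times n$ submatrix $\triangle$ of $F$ with $\delta=\det(\triangle)\ne 0$, possible because $\rank(F)=n$. After permuting columns, assume $\triangle=(f_{ij})_{1\le i,j\le n}$. Multiplying $Fy=0$ on the left by the adjoint of $\triangle$ produces the system shown in \cref{By=0Zphatx}, all of whose entries $\delta,c_{ij}$ lie in $\Zx$ and have degree $\le nd$. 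The vectors $v^{(1)},\ldots,v^{(m-n)}$ of \cref{equation3} therefore lie in $\Zx^m\cap \sol_{\Qx}(F)$ and have degree $\le nd$.

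Next I would generate the ``low-degree'' solutions exactly as in the previous proof, but over $\Q$ rather than $\hatZp$. Writing $y_j=\sum_{k=0}^{nd-1}y_{jk}x^k$ and equating coefficients of $x^0,\ldots,x^{(n+1)d-1}$ in $Fy=0$ turns the search for solutions of degree $<nd$ into a finite homogeneous $\Q$-linear system in the unknowns $y_{jk}$. Its solution space is a finite-dimensional $\Q$-vector space; picking a $\Q$-basis and clearing denominators yields finitely many $y^{(1)},\ldots,y^{(M)}\in\Zx^m$ of degree $<nd$ whose $\Q$-span is precisely $\{y\in\Qx^m:Fy=0,\deg(y_j)<nd\}$.

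It remains to show that $v^{(1)},\ldots,v^{(m-n)},y^{(1)},\ldots,y^{(M)}$ generate $\sol_{\Qx}(F)$. Given any $w\in\sol_{\Qx}(F)$, I would use Euclidean division in $\Qx$ to write $w_{n+i}=Q_i\delta+R_{n+i}$ with $\deg R_{n+i}<\deg\delta\le nd$ for $i=1,\ldots,m-n$, and set $z=w-\sum_i Q_i v^{(i)}$. By construction the last $m-n$ entries of $z$ are $R_{n+1},\ldots,R_m$, of degree $<nd$. Writing the first $n$ entries as $h_1,\ldots,h_n$, the fact that $Bz=0$ forces $\delta h_i=-\sum_{j=n+1}^{m}c_{ij}R_j$, so each $h_i$ is the quotient (in $\Qx$) of a polynomial of degree $\le 2nd$ by $\delta$, giving $\deg h_i\le nd$. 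Hence $z$ is a low-degree solution and therefore a $\Q$-linear combination of the $y^{(k)}$, yielding the required expression for $w$. The only delicate point is verifying that the quotient $h_i=-(\sum c_{ij}R_j)/\delta$ really lies in $\Qx$ rather than in $\Q(x)$; this is automatic because $z$ is already known to solve $Bz=0$, so the divisibility holds identically and no obstruction arises. This is essentially the same mechanism used in the $\hatZpanglex$ case, only with Euclidean division standing in for Weierstrass division.
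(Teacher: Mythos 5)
Your proposal is correct and is essentially the paper's own proof: the paper obtains this lemma from Lemma \ref{homogeneousZpxcompletion} precisely by taking an arbitrary nonsingular $n\times n$ submatrix, setting $\mu=0$, and redoing the computation in $\Q[x]$ with Euclidean division in place of Weierstrass division. One small bookkeeping point: the bound on $h_i$ should be derived as $\deg h_i\le\bigl(nd+\deg\delta-1\bigr)-\deg\delta<nd$, using $\deg R_j<\deg\delta$ (not merely $\deg R_j<nd$), since you need $\deg z<nd$ strictly for $z$ to lie in the span of the $y^{(k)}$.
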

Now we describe Corollary 2.7 of \cite{aschenbrenner2004ideal} in our notations:
\begin{lemma}[\cite{aschenbrenner2004ideal}]\label{hattononhat}
  Let $F$ be an $n\times m$ matrix over $\Zpx$. If $y^{(1)},\ldots,y^{(L)}\in \Zpx^m$ generate
   the $\Q[x]$-module $\sol_{\Qx}(F)$
    and $z^{(1)},\ldots,z^{(M)}\in \Zpx^m$ generate the $\hatZpanglex$-module $\sol_{\hatZpanglex}(F)$.
    Then $y^{(1)},\ldots,y^{(L)},$ $z^{(1)},\ldots,$ $z^{(M)}$ generate the $\Zpx$-module $\sol_{\Zpx}(F)$.
\end{lemma}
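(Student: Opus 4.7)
The plan is to recast the conclusion as the vanishing of a certain quotient module and invoke flatness of the two ring extensions. Write $N\subseteq\sol_{\Zpx}(F)$ for the $\Zpx$-submodule generated by the given $y^{(i)}$ and $z^{(j)}$ and set $P=\sol_{\Zpx}(F)/N$; the goal becomes showing $P=0$.

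First I would check that $\Zpx\hookrightarrow\Qx$ and $\Zpx\hookrightarrow\hatZpanglex$ are both flat. The first is a localization (inverting $p$). For the second, factor as $\Zpx\hookrightarrow\hatZp[x]\hookrightarrow\hatZpanglex$: the first arrow is flat because $\Zp\to\hatZp$ is the completion of a Noetherian local ring (hence flat), and the second is the $p$-adic completion of the Noetherian ring $\hatZp[x]$ (hence flat). Since $F$ defines a $\Zpx$-linear map $\Zpx^m\to\Zpx^n$, flat base change gives
\[
\sol_{\Qx}(F)=\sol_{\Zpx}(F)\otimes_{\Zpx}\Qx,\qquad \sol_{\hatZpanglex}(F)=\sol_{\Zpx}(F)\otimes_{\Zpx}\hatZpanglex.
\]
The two generation hypotheses then translate into $P\otimes_{\Zpx}\Qx=0$ and $P\otimes_{\Zpx}\hatZpanglex=0$.

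To finish I would use a finite-generation argument. Since $\Zpx$ is Noetherian and $\sol_{\Zpx}(F)\subseteq\Zpx^m$ is a submodule of a finitely generated module, $\sol_{\Zpx}(F)$, and therefore $P$, is finitely generated. The vanishing $P\otimes_{\Zpx}\Qx=0$ forces every element of $P$ to be $p^\infty$-torsion, so finite generation yields some integer $K\ge 0$ with $p^K P=0$. On the other hand the natural identifications $\Zpx/p\Zpx\cong\mathbb{F}_p[x]\cong\hatZpanglex/p\hatZpanglex$ (the second because, in the $p$-adic completion, the Tate-convergence condition collapses mod $p$ to ordinary polynomials) give
\[
P/pP=P\otimes_{\Zpx}(\Zpx/p)=P\otimes_{\Zpx}(\hatZpanglex/p)=(P\otimes_{\Zpx}\hatZpanglex)\otimes_{\hatZpanglex}(\hatZpanglex/p)=0,
\]
hence $P=pP$, and iterating gives $P=p^K P=0$.

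The main obstacle is really the ring-theoretic bookkeeping: the flatness of $\Zpx\to\hatZpanglex$ and the identification $\Zpx/p=\hatZpanglex/p$ both depend on reading ``completion of $\hatZp[x]$'' as the $p$-adic completion, which is the sense implicit in the Weierstrass-division setup used earlier. Once those are cleanly in place, the rest of the argument reduces to the standard Noetherian observation that a finitely generated module $P$ satisfying $p^K P=0$ and $P=pP$ must vanish. If one wished to avoid flatness altogether, an alternative more hands-on route is: given $w\in\sol_{\Zpx}(F)$, write $w=\sum\beta_iy^{(i)}$ over $\Qx$, clear denominators to place $p^\mu w$ in $\langle y^{(i)}\rangle_{\Zpx}$, then approximate the $\hatZpanglex$-expansion $w=\sum\alpha_jz^{(j)}$ modulo $p^\mu$ by elements of $\Zpx$ and use $\Zpx=\Qx\cap\hatZpanglex$ to correct the remainder into $p^\mu\sol_{\Zpx}(F)$; the flatness proof above is shorter and cleaner.
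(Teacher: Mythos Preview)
The paper does not give its own proof of this lemma; it simply quotes it as Corollary~2.7 of Aschenbrenner's paper \cite{aschenbrenner2004ideal}. So there is no in-paper argument to compare against.

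Your flatness argument is correct and is in fact the standard way this kind of local--global statement is established (and is essentially how Aschenbrenner sets things up, in greater generality). The key verifications you identify are all fine: $\Zpx\to\Qx$ is a localization; $\Zpx\to\hatZp[x]$ is a flat base change along the faithfully flat completion $\Zp\to\hatZp$; $\hatZp[x]\to\hatZpanglex$ is the $p$-adic completion of a Noetherian ring, hence flat; flatness commutes kernels with base change; and the identification $\Zpx/p\cong\hatZpanglex/p\cong\mathbb{F}_p[x]$ holds because the Tate convergence condition forces all but finitely many coefficients to vanish modulo $p$. The finish via $p^KP=0$ together with $P=pP$ is clean.

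One small remark on your ``hands-on'' alternative: as written, the step ``clear denominators to place $p^\mu w$ in $\langle y^{(i)}\rangle_{\Zpx}$'' gives a $\mu$ depending on $w$, whereas to conclude you need $p^\mu\sol_{\Zpx}(F)\subseteq\langle y^{(i)}\rangle_{\Zpx}$ with a \emph{uniform} $\mu$. This is easily repaired by invoking finite generation of $\sol_{\Zpx}(F)$ to take a single $\mu$ working for a finite generating set, but it is worth making explicit. Your main flatness proof does not have this issue.
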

By Lemmas \ref{homogeneousZpxcompletion}, \ref{homogeneousQx}, and \ref{hattononhat}, we have the following corollary:
\begin{coro}\label{Zpxhomogeneous}
  $\sol_{\Zpx}(F)$ can be generated by elements in $\Zx^m$ of degrees $\le nd$.
\end{coro}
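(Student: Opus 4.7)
The plan is to derive Corollary \ref{Zpxhomogeneous} as an immediate synthesis of the three preceding results: Lemma \ref{homogeneousZpxcompletion}, Lemma \ref{homogeneousQx}, and Lemma \ref{hattononhat}. Since all three results are in hand, the work is essentially a bookkeeping step, so the proposal is a short combination argument rather than an independent construction.

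First, I would invoke Lemma \ref{homogeneousQx} to obtain a finite set $y^{(1)},\ldots,y^{(L)}\in\Zx^m$ whose degrees are at most $nd$ and which generate $\sol_{\Qx}(F)$ as a $\Qx$-module. Second, I would invoke Lemma \ref{homogeneousZpxcompletion} to obtain a finite set $z^{(1)},\ldots,z^{(M)}\in\Zx^m$ whose degrees are at most $nd$ and which generate $\sol_{\hatZpanglex}(F)$ as a $\hatZpanglex$-module. At this stage I would note that since $\Zx\subset\Zpx$, both families automatically lie in $\Zpx^m$, which is the hypothesis needed to apply the bridging lemma.

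Third, I would apply Lemma \ref{hattononhat} to the concatenated family $y^{(1)},\ldots,y^{(L)},z^{(1)},\ldots,z^{(M)}$, concluding that they generate $\sol_{\Zpx}(F)$ as a $\Zpx$-module. Since by construction each $y^{(i)}$ and each $z^{(j)}$ lives in $\Zx^m$ and has degree at most $nd$, the desired generating set is produced.

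I do not anticipate any real obstacle here; the only subtlety worth flagging is to be explicit that the common degree bound $nd$ furnished by Lemmas \ref{homogeneousZpxcompletion} and \ref{homogeneousQx} matches, so that the union inherits the same bound, and that the inclusion $\Zx\hookrightarrow\Zpx$ makes the generators admissible as input to Lemma \ref{hattononhat}. Beyond this, the statement follows directly.
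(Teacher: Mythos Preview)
Your proposal is correct and is exactly the approach the paper takes: the paper simply states that the corollary follows by combining Lemmas \ref{homogeneousZpxcompletion}, \ref{homogeneousQx}, and \ref{hattononhat}, and your write-up spells out this combination cleanly.
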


We describe Lemma 4.2 of \cite{aschenbrenner2004ideal} in our notations as follows:
\begin{lemma}\label{Lemma4.2of[2]}
Let $M$ be a $\Zx$-submodule of $\Zx^m$. For each maximal ideal $(p)$ of $\Z$, let $u_p^{(1)},\ldots,u_p^{(K_p)}\in M$ generate the $\Zpx$-submodule $(M)_{\Zpx}$ of $\Zpx^m$.
   Then $u_p^{(1)},\ldots,u_p^{(K_p)}$, where $(p)$ ranges over all maximal ideals of $\Z$, generate the $\Zx$-module $M$.
\end{lemma}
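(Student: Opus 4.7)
The plan is to reduce the statement to the classical local-global principle for modules over $\Z$. Let $M'\subseteq M$ be the $\Zx$-submodule of $M$ generated by the family $\{u_p^{(i)}\}_{p,i}$ as $(p)$ ranges over maximal ideals of $\Z$ and $1\le i\le K_p$. Since each $u_p^{(i)}$ lies in $M$, clearly $M'\subseteq M$, and I would aim to show that $N:=M/M'$ is zero.

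The first step is to observe that $(M')_{\Zpx}=(M)_{\Zpx}$ for every prime $p$. The inclusion $(M')_{\Zpx}\subseteq (M)_{\Zpx}$ is immediate, while the reverse follows from the hypothesis that $u_p^{(1)},\ldots,u_p^{(K_p)}\in M'$ generate $(M)_{\Zpx}$ over $\Zpx$. Since $\Zp$ is a localization of $\Z$, hence $\Z$-flat, $\Zpx=\Zx\otimes_\Z\Zp$ is flat over $\Zx$. Applying $-\otimes_\Zx\Zpx$ to the short exact sequence $0\to M'\to M\to N\to 0$ produces
$$0\to (M')_{\Zpx}\to (M)_{\Zpx}\to N\otimes_\Zx\Zpx\to 0,$$
whose first map is an isomorphism, forcing $N\otimes_\Zx\Zpx=0$ for every maximal $(p)\subset\Z$.

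The second step is to use the identification $N\otimes_\Zx\Zpx = N\otimes_\Zx(\Zx\otimes_\Z\Zp) = N\otimes_\Z\Zp$, which is precisely the localization $N_{(p)}$ of $N$ viewed as a $\Z$-module at the prime $(p)$. So $N_{(p)}=0$ for every maximal $(p)\subset\Z$. Given any $v\in N$, the vanishing of $v$ in each $N_{(p)}$ means there exists $s_p\in\Z\setminus(p)$ with $s_p v=0$, so the annihilator ideal $\mathrm{Ann}_\Z(v)\subseteq\Z$ is not contained in any maximal ideal of $\Z$. Hence $\mathrm{Ann}_\Z(v)=\Z$, so $v=0$. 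As $v$ was arbitrary, $N=0$, i.e., $M=M'$.

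The main subtle point is the tensor-product identification $N\otimes_\Zx\Zpx\cong N_{(p)}$, which converts the hypothesis about $\Zpx$-modules into a statement about ordinary $\Z$-localizations; once this bridge is in hand, everything reduces to the elementary fact that a proper ideal of $\Z$ must lie inside some maximal ideal. Note that the argument is pointwise in $v\in N$, so no finite-generation hypothesis on $M$ (or implicit appeal to Noetherianity of $\Zx$) is needed.
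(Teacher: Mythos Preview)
Your argument is correct. The paper does not supply its own proof of this lemma: it is quoted verbatim (in adapted notation) from Aschenbrenner~\cite{aschenbrenner2004ideal}, Lemma~4.2, so there is no in-paper proof to compare against. Your approach---forming $M'=\langle u_p^{(i)}\rangle_{\Zx}$, using flatness of $\Zpx$ over $\Zx$ to pass the equality $(M')_{\Zpx}=(M)_{\Zpx}$ to the quotient, and then invoking the elementary local-global principle over $\Z$ via the identification $N\otimes_{\Zx}\Zpx\cong N\otimes_\Z\Zp$---is the standard way to prove such statements and is essentially how Aschenbrenner argues (his Lemma~4.2 is stated for an arbitrary Noetherian integral domain $R$ of Krull dimension at most one in place of $\Z$, but the mechanism is the same). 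One small remark: you could shortcut the flatness discussion by noting directly that $\Zpx$ is the localization of $\Zx$ at the multiplicative set $\Z\setminus(p)$, so $(-)\otimes_{\Zx}\Zpx$ is literally localization at that set and hence exact; this also makes the identification $N\otimes_{\Zx}\Zpx\cong N_{(p)}$ immediate.
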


We now give a degree bound for the solutions of linear equations over $\Zx$.
\begin{coro}\label{Zxcase}
Let $F=(f_{ij})\in \Zx^{n\times m}$ and $d=\deg(F)$.
Then $\sol_{\Zx}(F)$ can be generated by a finite set of elements whose degrees are $\le nd$.
\end{coro}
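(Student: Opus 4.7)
The plan is to combine the local-to-global principle in Lemma \ref{Lemma4.2of[2]} with Corollary \ref{Zpxhomogeneous}, and then use Noetherianity of $\Z$ to cut an a priori infinite generating set down to a finite one without raising degrees above $nd$.

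First I would invoke Corollary \ref{Zpxhomogeneous}: for every prime $p$, there exist $u_p^{(1)},\ldots,u_p^{(K_p)}\in \Zx^m$ of degrees $\le nd$ that generate the $\Zpx$-module $\sol_{\Zpx}(F)$. Since each $u_p^{(i)}$ lies in $\Zx^m$ and satisfies $Fu_p^{(i)}=0$ in $\Zpx^m$ (hence in $\Zx^m$, via the inclusion $\Zx\hookrightarrow \Zpx$), each $u_p^{(i)}$ belongs to $M:=\sol_{\Zx}(F)$. Because the $u_p^{(i)}$ generate the larger module $\sol_{\Zpx}(F)\supseteq (M)_{\Zpx}$ and themselves sit in $(M)_{\Zpx}$, they generate $(M)_{\Zpx}$. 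Applying Lemma \ref{Lemma4.2of[2]} to $M$ then shows that the (possibly infinite) union $U:=\bigcup_p\{u_p^{(1)},\ldots,u_p^{(K_p)}\}$ generates $M$ as a $\Zx$-module, and every element of $U$ has degree $\le nd$.

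Next I would truncate to finitely many generators. Let $M_{\le nd}:=\{v\in M:\deg(v)\le nd\}$. This is a $\Z$-submodule of the set of all elements of $\Zx^m$ of degree $\le nd$, which is a free $\Z$-module of rank $m(nd+1)$. Since $\Z$ is Noetherian, $M_{\le nd}$ is finitely generated as a $\Z$-module; pick a finite generating set $v^{(1)},\ldots,v^{(N)}$. By construction $U\subseteq M_{\le nd}$, so every $u_p^{(i)}$ is a $\Z$-linear combination of the $v^{(j)}$, and therefore a $\Zx$-linear combination of them. Since $U$ already generates $M$ over $\Zx$, so does the finite set $\{v^{(1)},\ldots,v^{(N)}\}$, and all of its members have degree $\le nd$, which is exactly the claim.

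The only real subtlety I anticipate is the passage from the infinite family indexed by primes to a finite generating set with the same degree bound; the key observation that rescues us is that bounded-degree elements of $M$ form a $\Z$-submodule of a finite-rank free $\Z$-module, so Noetherianity of $\Z$ (rather than of $\Zx$) is what delivers finiteness while preserving the degree bound of $nd$. Nothing else in the argument is more than a direct assembly of the preceding lemmas.
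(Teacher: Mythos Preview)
Your proof is correct and follows essentially the same route as the paper: combine Corollary~\ref{Zpxhomogeneous} with Lemma~\ref{Lemma4.2of[2]} to get a (possibly infinite) generating set of degree $\le nd$, then invoke Noetherianity for finiteness. The only cosmetic difference is that the paper appeals to Noetherianity of $\Zx^m$ (so a finite subset of $U$ already generates), whereas you use Noetherianity of $\Z$ on the bounded-degree slice $M_{\le nd}$; both arguments preserve the degree bound $nd$.
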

\begin{proof}
  By Lemmas \ref{Zpxhomogeneous} and \ref{Lemma4.2of[2]}, we  know that $\sol_{\Zx}(F)$ can be generated by elements whose degrees are $\le nd$.
  Since $\sol_{\Zx}(F)\subset \Zx^m$ and $\Zx^m$ is Noetherian, the set of generators must be finite.
\end{proof}

\begin{remark}
In results \ref{homogeneousZpxcompletion}, \ref{homogeneousQx}, and \ref{Zpxhomogeneous}, \ref{Zxcase}, if $F$ is of rank $r$, then the generators can be bounded by $rd$.
\end{remark}

In the rest of this section, we give height bounds for $\sol_{\Z[x]}(F)$.
By Remarks of Corollary 1.5 and Lemma 5.1 in \cite{aschenbrenner2004ideal},
we have the following result.
\begin{lemma}[\cite{aschenbrenner2004ideal}]\label{heightboundZ}
Let $A\in \Z^{n\times m}$, $r=\rank(A)$, and $h=\height(A)$. Then $\sol_{\Z}(A)$ can be generated by $m-r$  vectors whose heights are bounded by $2r(h+\log r+1)$.
\end{lemma}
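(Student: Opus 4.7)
The strategy is to first reduce to the case that $A$ itself has full row rank: picking $r$ linearly independent rows of $A$ to form $A'\in\Z^{r\times m}$, we have $\sol_{\Z}(A)=\sol_{\Z}(A')$ with $\height(A')\le h$, so I may assume $A\in\Z^{r\times m}$ has rank $r$. After permuting columns (which changes neither heights nor the existence of a generating set of the stated type) I may further assume that the first $r$ columns of $A$ form a nonsingular submatrix $\triangle$ with $\delta:=\det(\triangle)\ne 0$, so that $A=[\triangle\mid B]$ with $B\in\Z^{r\times(m-r)}$.

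Next, I would build $m-r$ Cramer-type solutions in $\Z^m$:
\[
  v^{(i)} := \begin{bmatrix} -\,\mathrm{adj}(\triangle)\,B\,e_i \\[2pt] \delta\, e_i \end{bmatrix},\qquad i=1,\dots,m-r.
\]
A direct computation using $\triangle\,\mathrm{adj}(\triangle)=\delta I_r$ gives $Av^{(i)}=0$, and the bottom blocks $\delta e_1,\dots,\delta e_{m-r}$ force these to be $\Q$-linearly independent. By Hadamard's inequality, $|\delta|\le r^{r/2}2^{rh}$, and every entry of $\mathrm{adj}(\triangle)$, being up to sign an $(r-1)\times(r-1)$ minor of $\triangle$, is bounded in absolute value by $r^{(r-1)/2}2^{(r-1)h}$. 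Hence each entry of $\mathrm{adj}(\triangle)\,Be_i$ is at most $r\cdot r^{(r-1)/2}2^{rh}=r^{(r+1)/2}2^{rh}$, so $\height(v^{(i)})\le rh+\tfrac{r+1}{2}\log r$, which is already about half of the target bound.

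The main obstacle is that the $v^{(i)}$ generally generate only a finite-index sublattice of $\sol_{\Z}(A)$ (of index dividing $|\delta|$), not the full integer kernel lattice. To upgrade them to a true generating set I would pass to the Smith normal form $UAV=\mathrm{diag}(s_1,\dots,s_r,0,\dots,0)$ with $U\in\mathrm{GL}_r(\Z)$ and $V\in\mathrm{GL}_m(\Z)$: the last $m-r$ columns of $V$ form a $\Z$-basis of $\sol_{\Z}(A)$. Equivalently, one applies the Hermite normal form to the $m\times(m-r)$ integer matrix $[v^{(1)},\dots,v^{(m-r)}]$ and combines with the saturation of the sublattice inside the full kernel. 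The extra factor $2$ appearing in the bound $2r(h+\log r+1)$ reflects the worst-case height growth in this refinement step, and the packaging provided by Corollary~1.5 together with Lemma~5.1 of \cite{aschenbrenner2004ideal} delivers the stated constant in one shot. The technically delicate part—and the step I expect to be the main obstacle—is precisely controlling that height blow-up, since a careless HNF-style bound could lose substantially more than the claimed factor of two.
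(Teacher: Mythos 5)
The paper itself offers no proof of this lemma: it is imported directly from Aschenbrenner's paper (the remarks to Corollary~1.5 together with Lemma~5.1 there), so there is no in-house argument to compare yours against. Judged on its own terms, your proposal is correct and well executed up to a point --- the reduction to full row rank, the Cramer vectors $v^{(i)}$, the verification $Av^{(i)}=0$ via $\triangle\,\mathrm{adj}(\triangle)=\delta I_r$, and the Hadamard estimate $\height(v^{(i)})\le rh+\tfrac{r+1}{2}\log r$ are all fine, and these $m-r$ vectors are indeed a $\Q$-basis of the kernel satisfying the claimed height bound. (Small correction: the index of the sublattice they generate divides $|\delta|^{m-r}$, not $|\delta|$; projection onto the last $m-r$ coordinates embeds $\sol_{\Z}(A)$ into $\Z^{m-r}$ and maps your sublattice onto $\delta\Z^{m-r}$.)

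The genuine gap is the final step, which is precisely the content of the lemma. Passing to the Smith normal form $UAV=\mathrm{diag}(s_1,\dots,s_r,0,\dots,0)$ does produce a $\Z$-basis of $\sol_{\Z}(A)$ from the last $m-r$ columns of $V$, but nothing in that construction bounds the entries of $V$: unimodular transformation matrices arising from Smith/Hermite reduction carry no a priori height control, and the ``factor of two'' you attribute to this refinement is asserted, not derived --- you explicitly defer ``controlling that height blow-up'' to Corollary~1.5 and Lemma~5.1 of the very reference from which the lemma is quoted, which makes the argument circular as a proof. To actually close it one needs a lattice-geometric ingredient rather than a normal-form one: for instance, bound the determinant of the saturated kernel lattice $\Lambda=\sol_{\Z}(A)$ (via Cauchy--Binet/Hadamard in terms of the maximal minors of $A$) and then invoke the existence of a basis $b_1,\dots,b_{m-r}$ of $\Lambda$ whose norms have product bounded by a constant power times $d(\Lambda)$ (Minkowski/Hermite reduced bases); since each nonzero integer vector has norm at least $1$, every individual $\|b_i\|$ is then bounded by that product, which is where the extra factor in $2r(h+\log r+1)$ comes from. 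Without some argument of this kind, your construction only establishes the bound for a finite-index sublattice, not for a generating set of $\sol_{\Z}(A)$.
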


Let $F=(f_{ij})\in \Z[x]^{n\times m}$, $d=\deg(F)$, $h=\height(F)$, and $F$ is of full rank.
Then, we have
\begin{theorem}\label{homogeneousZx}
$\sol_{\Z[x]}(F)$ can be generated by vectors whose degrees are bounded by $nd$ and heights are bounded by $2(n(n+1)d+n)(h+\log (n(n+1)d+n)+1)$.
\end{theorem}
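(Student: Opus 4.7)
The degree part is already in hand: by Corollary \ref{Zxcase}, $\sol_{\Z[x]}(F)$ admits a generating set (as a $\Z[x]$-module) all of whose elements have degree $\le nd$. So the entire task is to replace that set by one whose heights are controlled. The plan is to linearize: once we restrict to solutions of degree $\le nd$, the polynomial system $Fy=0$ becomes an ordinary integer linear system to which Lemma \ref{heightboundZ} applies directly.

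First, I would expand the unknowns as $y_j=y_{j,0}+y_{j,1}x+\cdots+y_{j,nd}x^{nd}$ with $y_{j,k}\in\Z$ and likewise write out the entries of $F$ in degree $\le d$. Comparing coefficients of $x^0,x^1,\ldots,x^{(n+1)d}$ in each of the $n$ rows of $Fy=0$ produces an integer matrix equation $F'y'=0$, where $F'\in\Z^{N\times M}$ with $N=n((n+1)d+1)=n(n+1)d+n$ rows and $M=m(nd+1)$ columns, and $\height(F')\le h$ because every entry of $F'$ is a coefficient of some $f_{ij}$. The correspondence $y\leftrightarrow y'$ identifies the $\Z$-module of degree-$\le nd$ solutions of $Fy=0$ with $\sol_{\Z}(F')$, and the height of a polynomial vector equals the height of its flattened coefficient vector.

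Next, I would invoke Lemma \ref{heightboundZ} on $F'$: writing $r=\rank(F')\le N=n(n+1)d+n$, it produces generators of $\sol_{\Z}(F')$ with heights at most $2r(h+\log r+1)$. Since the bound is monotone in $r$, this is at most
\[
2\bigl(n(n+1)d+n\bigr)\bigl(h+\log(n(n+1)d+n)+1\bigr).
\]
Re-interpreting these integer generators as polynomial vectors of degree $\le nd$ gives a finite set $Y\subset\sol_{\Z[x]}(F)$ meeting both the degree and the height bounds claimed.

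The one point that requires care, and which I regard as the main subtlety rather than a genuine obstacle, is that $Y$ only $\Z$-spans the degree-$\le nd$ part of $\sol_{\Z[x]}(F)$, whereas we want it to $\Z[x]$-generate the whole module. This is exactly where Corollary \ref{Zxcase} is used: it provides some $\Z[x]$-generating set $\{y^{(1)},\ldots,y^{(N)}\}$ of $\sol_{\Z[x]}(F)$ with every $y^{(i)}$ of degree $\le nd$, hence every $y^{(i)}$ lies in the $\Z$-span of $Y$. Consequently $Y$ $\Z[x]$-generates $\sol_{\Z[x]}(F)$, and the theorem follows. (As the remark after Corollary \ref{Zxcase} notes, if $\rank(F)=r<n$ the bounds can be sharpened by using $r$ in place of $n$, but the argument is otherwise identical.)
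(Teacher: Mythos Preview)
Your proposal is correct and follows essentially the same approach as the paper: both invoke Corollary~\ref{Zxcase} for the degree bound, linearize $Fy=0$ into an integer system of size $(n(n+1)d+n)\times m(nd+1)$ by expanding $y$ to degree $nd$, and apply Lemma~\ref{heightboundZ}. Your version is in fact slightly more careful than the paper's, since you explicitly spell out why the $\Z$-generators of the degree-$\le nd$ solutions form a $\Z[x]$-generating set for $\sol_{\Z[x]}(F)$.
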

\begin{proof}
By  Corollary \ref{Zxcase}, $\sol_{\Z[x]}(F)$ can be generated by elements of degrees $\le nd$.
Let $[y_1,\ldots,y_m]^\tau\in \sol_{\Zx}(F)$.
Assume $f_{ij}=a_{ij0}+a_{ij1}x+\cdots+a_{ijd}x^d$, $y_j=y_{j0}+y_{j1}x+\cdots+y_{j,nd}x^{nd}$, where $a_{ijk}\in \Z$, $y_{jk}$ are indeterminants taking values in $\Z$.
Then, $Fy=0$ can be written as the following matrix equation
\begin{equation}\label{equationZy'}
\left(
\begin{array}{c}
 A_1 \\
 \vdots \\
  A_n \\
  \end{array}
  \right)y'=0,
  \end{equation}  $y'=[y_{10},\ldots,y_{1,nd},\ldots,y_{m0},\ldots,y_{m,nd}]^\tau$, $A_i=[A_{i1},\ldots,A_{im}]_{((n+1)d+1)\times (m(nd+1))}$, and
$$
A_{ij}=\left(
           \begin{array}{cccc}
             a_{ij0} &  &  &  \\
             a_{ij1} & a_{ij0} &  &  \\
             \vdots &  & \ddots &  \\
             a_{ijd} &  &  & a_{ij0} \\
              & \ddots &  & \vdots \\
              & &  & a_{ijd} \\
           \end{array}
         \right)_{((n+1)d+1)\times (nd+1)}
$$
for $i=1,\ldots,n$. So
$\left(
 \begin{array}{c}
   A_1 \\
     \vdots \\
       A_n \\
       \end{array}
       \right)\in \Z^{(n(n+1)d+n)\times (m(nd+1))}$.
By Lemma \ref{heightboundZ}, we have the equation system \cref{equationZy'} can be generated by vectors whose heights are bounded by $2(n(n+1)d+n)(h+\log (n(n+1)d+n)+1)$.
\end{proof}

\begin{remark}
Let $D=\Z[x_1,\ldots,x_N]$ and $A\in D^{n\times m}$.
In \cite{aschenbrenner2004ideal},  Aschenbrenner proved that
$\sol_{D}(A)$ has a set of  generators whose
degrees and heights are bounded by $(2nd)^{2((N+1)^N-1)}$ and $C_2(2n(d+1))^{(N+1)^{O(N)}}(h+1)$,
respectively,
where $C_2$ is a constant only depending on $A$, $d=\deg(A)$, $h=\height(A)$.
Setting $N=1$ in these bounds, we obtain the degree and height bounds $(2nd)^2$ and $C_2(2n(d+1))^{2^{O(1)}}(h+1)$, respectively. Due to the special structure
of the Gr\"obner basis in $\Zx$, our results are much better than that of \cite{aschenbrenner2004ideal} in the $\Zx$ case.
\end{remark}

Let $F\in \Z[x]^{n\times m}$, $b\in \Z[x]^m$. Denote  $d=\deg(F,b) = \max(\deg(F), \deg(b)),~h=\max(\height(F), $ $ \height(b)).$
Similar to Theorem 6.5 in \cite{aschenbrenner2004ideal}, we have the following degree bound.
\begin{theorem}\label{degreeboundZx}
If the system $Fy=b$ has a solution in $\Z[x]^m$,
then it has such a solution of degree $\le 3n^2d^2(h_2+\log (nd+1))+nd$,
where $h_2=2(n(n+1)d+n)(h+\log(n(n+1)d+n)+1)$.
\end{theorem}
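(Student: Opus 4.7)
The plan is to reduce the inhomogeneous system $Fy=b$ to two problems we already control: the degree bound for homogeneous solutions (Theorem~\ref{homogeneousZx}) and the effective Nullstellensatz over $\Zx$ (Lemma~\ref{degreeboundforU}). First I would form the augmented matrix $\tilde F := [F \mid -b] \in \Zx^{n\times(m+1)}$, so that a solution $y\in\Zx^m$ of $Fy=b$ is exactly a homogeneous solution $w = (y_1,\ldots,y_m,1)^{\tau}$ of $\tilde F w = \mathbf{0}$ whose last coordinate equals $1$. Since $\deg(\tilde F)\le d$ and $\height(\tilde F)\le h$, Theorem~\ref{homogeneousZx} (applied, if necessary, to a maximal linearly independent set of rows of $\tilde F$) supplies generators $w^{(1)},\ldots,w^{(L)}$ of $\sol_{\Zx}(\tilde F)$ with $\deg(w^{(i)})\le nd$ and $\height(w^{(i)})\le h_2$.

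Next, every homogeneous solution has the form $\sum_{i=1}^{L} c_i\, w^{(i)}$ with $c_i\in \Zx$, so the condition ``last coordinate equals $1$'' becomes the ideal-membership statement
\begin{equation*}
1 \;=\; \sum_{i=1}^{L} c_i\cdot \bigl(w^{(i)}\bigr)_{m+1} \quad \text{in } \Zx.
\end{equation*}
The hypothesis that $Fy=b$ has \emph{some} solution in $\Zx^m$ guarantees that this membership does hold. The polynomials $(w^{(i)})_{m+1}$ have degree $\le nd$ and height $\le h_2$, so Lemma~\ref{degreeboundforU} produces cofactors $c_i\in\Zx$ realizing the relation with $\deg(c_i) \le 3(nd)^2\bigl(2h_2 + \log(nd+1)\bigr)$. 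Taking $y$ to be the vector of the first $m$ coordinates of $\sum_i c_i w^{(i)}$ then yields $Fy=b$ and
\begin{equation*}
\deg(y) \;\le\; \max_i\deg(c_i) + \max_i\deg(w^{(i)}) \;\le\; 3n^2d^2\bigl(2h_2+\log(nd+1)\bigr) + nd,
\end{equation*}
which is the stated bound, up to the factor $2$ appearing in front of $h_2$.

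The main obstacle I anticipate is purely bookkeeping of constants: invoking Lemma~\ref{degreeboundforU} as a black box on polynomials of height $h_2$ naturally yields a factor of $2$ in front of $h_2$ (inherited from Lemma~\ref{heightboundQx} used inside it), whereas the statement is written as ``$h_2+\log(nd+1)$'' without that $2$. Matching the advertised constants precisely would require either a small sharpening of Lemma~\ref{degreeboundforU} in the high-height regime (where $h_2$ already dominates), or else folding the factor $2$ into the logarithmic term. Beyond this numerical matter, the conceptual skeleton---linearize via the augmented matrix $[F\mid-b]$, invoke the homogeneous degree bound to control the generators of $\sol_{\Zx}(\tilde F)$, and use the effective Nullstellensatz to bound the degrees of the cofactors $c_i$---is a direct synthesis of the lemmas already developed in this section.
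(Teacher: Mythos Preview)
Your approach is essentially identical to the paper's: form the augmented matrix $[F\mid -b]$, apply Theorem~\ref{homogeneousZx} to bound the generators of its kernel, and then invoke Lemma~\ref{degreeboundforU} on the last coordinates to bound the cofactors. Your bookkeeping concern is well-founded---the paper's own proof also arrives at $\deg(h_k)\le 3n^2d^2(2h_2+\log(nd+1))$ with the same factor~$2$ in front of $h_2$, so the discrepancy with the theorem statement as written is present in the original as well.
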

\begin{proof}
By Theorem \ref{homogeneousZx}, there exist generators $z^{(1)},\ldots,z^{(K)}$ for the $\Zx$-module of solutions to the system of $(F,-b)z=0$, where $z^{(k)}=[z_1^{(k)},\ldots,z_{m+1}^{(k)}]^\tau$ is a vector of $m+1$ unknowns, with $\deg(z^{(k)})\le nd$  and
  \begin{align}
    \height(z^{(k)})&\le 2(n(n+1)d+n)(h+\log (n(n+1)d+n)+1)=h_2\notag
  \end{align}
  for all $k=1,\ldots,K$.
  For each $k$, let $z_{m+1}^{(k)}\in \Z[x]$ be the last component of $z^{(k)}$.
  Clearly, $Fy=b$ is solvable in $\Zx$ if and only if $1\in (z_{m+1}^{(1)},\ldots,z_{m+1}^{(K)})$.
  Moreover, if $h_1,\ldots,h_K$ are elements of $\Zx$ such that $1=h_1z_{m+1}^{(1)}+\cdots+h_Kz_{m+1}^{(K)}$, then $[y,1]^\tau=h_1z^{(1)}+\cdots+h_Kz^{(K)}$ is a solution to $Fy=b$.
  By Lemma \ref{degreeboundforU}, we have $$\deg(h_k)\le 3n^2d^2(2h_2+\log (nd+1)),$$ where $h_2=2(n(n+1)d+n)(h+\log (n(n+1)d+n)+1)$.
  It follows that $\deg(y)\le 3n^2d^2(2h_2+\log (nd+1))+nd$.
\end{proof}

\subsection{Degree and height bounds in $\Zx^n$}
In this section, we assume $F=(f_{ij})\in \Z[x]^{n\times m}$, $d=\deg(F)$,
$h=\height(F)$, and $F$ is of full rank.
Let  $\mathcal{C}$ in~\cref{reducedGB} be the \gHNF of $F$.
We will give degree and height bounds for $\mathcal{C}$.

In our analysis of the complexity, only the degree and height bounds of
$c_{r_i,k_i}$ in the $r_i$-th rows of $\mathcal{C}$ will be used.
So, we define $\deg(\mathcal{C}) = \max_{i,k_i} \deg(c_{r_i,k_i})$
 and
$\height(\mathcal{C}) = \max_{i,k_i} \height(c_{r_i,l_i})$.
The following theorem gives the degree and height bounds for the \gHNF of $F$.
\begin{theorem}\label{degreeheightboundforGHNF}
We have $\deg(c_{r_i,l_i})\leq (n-r_i+1)d$ and $\height(c_{r_i,j})\le 6(n-r_i+1)^3d^2(h+1+\log((n-r_i+1)^2 d))$ for any $1\le i\le t,~1\le j\le l_i$.
\end{theorem}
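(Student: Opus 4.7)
The plan is to recognize the row-$r_i$ entries $c_{r_i,1},\ldots,c_{r_i,l_i}$ as the GHNF (equivalently, reduced Gr\"obner basis) of a certain $\Zx$-ideal $J_i\subset\Zx$, then bound a generating set of $J_i$ by combining $F$ with solutions of a linear homogeneous system via Theorem~\ref{homogeneousZx}, and finally apply Lemma~\ref{heightgeneral} to $J_i$.

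Let $F^{(i)}\in\Zx^{(n-r_i)\times m}$ denote the submatrix of $F$ consisting of rows $r_i+1,\ldots,n$, and set
\[
J_i \,:=\, \{\,(F\v)_{r_i} \,:\, \v\in\sol_{\Zx}(F^{(i)})\,\}\;\subseteq\;\Zx,
\]
which is a $\Zx$-ideal. I would first verify that $J_i$ is generated by $c_{r_i,1},\ldots,c_{r_i,l_i}$ and that this set is the GHNF of $J_i$. Each column $\mathbf{c}_{r_i,j}$ of the $i$-th block lies in $L=(F)_{\Zx}$ and has zero entries in rows $r_i+1,\ldots,n$, so $c_{r_i,j}\in J_i$. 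Conversely, for any $\mathbf{w}=F\v$ with $F^{(i)}\v=0$, the vector $\mathbf{w}$ has zeros in rows $r_i+1,\ldots,n$, and since $\mathcal{C}$ is a Gr\"obner basis of $L$, a straightforward induction on the reduction steps shows that $\mathbf{w}$ can be reduced to zero modulo $\mathcal{C}$ using only columns from blocks $1,\ldots,i$; taking the $r_i$-th component then expresses $(\mathbf{w})_{r_i}$ as a $\Zx$-linear combination of $c_{r_i,1},\ldots,c_{r_i,l_i}$. The four conditions of Definition~\ref{definition1} applied to $\mathcal{C}$ translate, when restricted to row $r_i$, into the GHNF conditions for the single-row matrix $[c_{r_i,1},\ldots,c_{r_i,l_i}]$, confirming that this is the GHNF of $J_i$.

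Next I would bound the degrees and heights of a generating set of $J_i$. By Theorem~\ref{homogeneousZx} and its subsequent remark, applied to $F^{(i)}$ (which has $n-r_i$ rows, hence rank $\le n-r_i$), $\sol_{\Zx}(F^{(i)})$ admits generators $\v^{(1)},\ldots,\v^{(K)}\in\Zx^m$ with $\deg(\v^{(k)})\le(n-r_i)d$ and
\[
\height(\v^{(k)})\;\le\;H \,:=\, 2\bigl((n-r_i)(n-r_i+1)d+(n-r_i)\bigr)\bigl(h+\log((n-r_i)(n-r_i+1)d+(n-r_i))+1\bigr).
\]
The corresponding generators of $J_i$ are $g_k:=(F\v^{(k)})_{r_i}=\sum_{j=1}^m f_{r_i,j}v_j^{(k)}$, and standard sum-of-product estimates give $\deg(g_k)\le(n-r_i+1)d$ and $\height(g_k)\le h+H+\log(d+1)+\log m$.

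Writing $N:=n-r_i+1$ and applying Lemma~\ref{heightgeneral} to the generating set $\{g_k\}$ of $J_i$ immediately yields $\deg(c_{r_i,l_i})\le Nd$ (since $c_{r_i,l_i}$ is the largest element of the GHNF of $J_i$), establishing the degree claim. Moreover Lemma~\ref{heightgeneral} gives a height bound of the form $(2Nd+1)\bigl(h+H+\log(d+1)+\log m+Nd\log 2+\log(Nd+1)\bigr)$; using $H\lesssim 2N^2 d(h+\log(N^2 d)+1)$ after absorbing lower-order terms, and multiplying by the $2Nd$ factor in front, one obtains the claimed $6N^3 d^2(h+1+\log(N^2 d))$. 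The main obstacle is this final simplification, which requires careful bookkeeping of the $\log$ and constant terms to pin down the leading constant as $6$; a secondary technical point is the inductive verification that the reduction of $\mathbf{w}=F\v$ modulo $\mathcal{C}$ uses only columns from blocks $1,\ldots,i$, which holds because the zero pattern in rows $r_i+1,\ldots,n$ is preserved throughout the reduction.
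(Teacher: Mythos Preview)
Your proposal is correct and follows essentially the same route as the paper: both reduce to showing that $[c_{r_i,1},\ldots,c_{r_i,l_i}]$ is the GHNF in $\Zx$ of the ideal $\{(F\v)_{r_i}:\v\in\sol_{\Zx}(F^{(i)})\}$, bound a generating set of that ideal by applying Theorem~\ref{homogeneousZx} to the lower rows of $F$ and then multiplying by the $r_i$-th row, and finally invoke Lemma~\ref{heightgeneral}. Your version is slightly more careful in justifying that the $c_{r_i,j}$ really form the GHNF of $J_i$ and in tracking the extra $\log(d+1)+\log m$ term in the product height (the paper simply writes $\height(Fv^{(j)})\le h+h_1$), but these are refinements of the same argument rather than a different method.
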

\begin{proof}
Without loss of generality, we need only to prove the theorem for $r_1=1$,
in which case $\deg(c_{1j})\le nd$ and $\height(c_{1j})\le 6n^3d^2(h+1+\log(n^2 d))$ for $1\le j\le l_1$.

For any $[a,0,\cdots,0]^\tau\in (F)$, which is the $\Zx$ lattice generated by the columns of $F$,
there exists a $\u\in \Z[x]^m$, such that $[a,0,\cdots,0]^\tau=Fu$ and hence $\u\in \sol_{\Zx}(F_{n-1})$,
where $F_{n-1}$ is the last $n-1$ rows of $F$.
By Theorem \ref{homogeneousZx},
$\sol_{\Zx}(F_{n-1})$ can be generated by polynomials of degrees $\leq (n-1)d$ and heights $\le h_1=2(n(n-1)d+(n-1))(h+\log (n(n-1)d+(n-1))+1)$,
say $\{v^{(1)},\ldots,v^{(s)}\}$.
Then, $[a,0,\ldots,0]^\tau\in (F)$ can be generated by $\{Fv^{(1)},\ldots,Fv^{(s)}\}$ and $\deg(Fv^{(j)})\le nd$ and $\height(Fv^{(j)})\le h+h_1$.
Let $\deg(Fv^{(j)})=[t_j,0,\ldots,0]^\tau$ for some $t_j\in \Zx$, $1\le j\le s$.
Then, $[c_{1,1},\ldots,c_{1,l_1}]$ is the \gHNF of $[t_1,\ldots,t_s]$, and $\deg(t_j)\le nd$, $\height(t_j)\le h+h_1$.
By Lemma \ref{heightgeneral}, we have $\deg(c_{1,j})\leq nd$, $i.e.~\deg(c_{1,j})\leq nd$ for $1\le j\le l_1$.
Moreover, \begin{align*}
&\height(c_{1j}) \\
&\le (2nd+1)(h+h_1+nd\log 2+\log(nd+1))\\
&=(2nd+1)(h+2(n(n-1)d+(n-1))(h+\log (n(n-1)d+(n-1))+1)\\
&~~~~+nd\log 2+\log(nd+1))\\
&\le (2nd+1)(h+2n^2d(h+\log(n^2d)+1)+nd\log 2+\log(n^2d))\\
&\le 6n^3d^2(h+1+\log(n^2 d)).
\end{align*}
\end{proof}

\begin{remark}\label{betterheightboundforGHNF}
Note that, since the last $n-r_i+1$ rows of $F$ have rank $t-i+1$, by the above proof, we have
$\deg(c_{r_i,j})\le (t-i+1)d$ and
$\height(c_{r_i,j})\le 6(t-i+1)^3d^2(h+1+\log((t-i+1)^2 d))$ where $h=\height(F)$, for $1\le i\le t$, $1\le j\le l_i$.
\end{remark}

We have the following degree bound for the transformation matrix $U$, which satisfying $\mathcal{C}=FU$.
\begin{theorem}\label{degreeboundtranmat}
  Let $F\in \Z[x]^{n\times m}$, $\mathcal{C}$ its \gHNF, and
  $U\in \Z[x]^{m\times s}$ the transformation matrix satisfying $\mathcal{C}=FU$.
   Then, $\deg(U) \le D$, where $D=73n^8d^5(h+1+\log(n^2d))$.
    \end{theorem}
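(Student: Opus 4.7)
The plan is to bound $\deg(\u_j)$ one column at a time, where $\u_j\in\Zx^m$ denotes the $j$-th column of $U$ and $\g_j\in\Zx^n$ the $j$-th column of $\mathcal{C}$, so that $F\u_j=\g_j$. Since $\mathcal{C}$ is the GHNF of $F$, each $\g_j$ lies in the $\Zx$-lattice generated by the columns of $F$ and hence such a $\u_j$ exists in $\Zx^m$; the task is to exhibit one of controlled degree and then take the maximum over $j$.

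The first step is to use the already-established degree and height bounds for $\mathcal{C}$. Theorem \ref{degreeheightboundforGHNF} gives $\deg(\g_j)\le nd$ and $\height(\g_j)\le 6n^3d^2(h+1+\log(n^2d))$. Combined with $\deg(F)=d$ and $\height(F)=h$, one obtains the input parameters
\[
 d_1:=\max\!\bigl(\deg(F),\deg(\g_j)\bigr)\le nd,\qquad h_1:=\max\!\bigl(\height(F),\height(\g_j)\bigr)\le 6n^3d^2\bigl(h+1+\log(n^2d)\bigr)
\]
for the solvable inhomogeneous system $Fy=\g_j$ over $\Zx$.

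The second step is to apply Theorem \ref{degreeboundZx} to this system. The theorem furnishes a solution $\u_j\in\Zx^m$ with
\[
 \deg(\u_j)\le 3n^2d_1^{\,2}\bigl(\widetilde h_2+\log(nd_1+1)\bigr)+nd_1,
\]
where $\widetilde h_2=2\bigl(n(n+1)d_1+n\bigr)\bigl(h_1+\log(n(n+1)d_1+n)+1\bigr)$. Substituting $d_1\le nd$ and $h_1\le 6n^3d^2(h+1+\log(n^2d))$ and absorbing the sub-dominant logarithmic and constant terms into the factor $(h+1+\log(n^2d))$, the dominant product $3n^2d_1^{\,2}\widetilde h_2$ collapses to the announced bound $D=73n^8d^5(h+1+\log(n^2d))$. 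Taking the maximum over $j=1,\ldots,s$ then gives $\deg(U)\le D$.

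The main obstacle is the arithmetic bookkeeping in this final simplification: the layered composition (GHNF bounds feeding into $d_1,h_1$, then into $\widetilde h_2$, then into the outer degree formula of Theorem \ref{degreeboundZx}) generates many cross-terms and nested logarithms, all of which must be absorbed into the single factor $(h+1+\log(n^2d))$ while the exponents of $n$ and $d$ are tracked layer by layer so that the claimed constants and powers actually emerge. As an alternative route, one can work with the augmented homogeneous system $[F,-\g_j]y=\0$: by Theorem \ref{homogeneousZx} its $\Zx$-solutions are generated by vectors of bounded degree and height, and then Lemma \ref{degreeboundforU} is applied to the last components of these generators to produce a $\Zx$-combination having $1$ in the last coordinate; the leading $m$ coordinates of that combination give the desired $\u_j$, and the estimates feed in exactly as above.
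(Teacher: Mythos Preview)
Your plan is exactly the paper's: feed the GHNF bounds from Theorem~\ref{degreeheightboundforGHNF} into Theorem~\ref{degreeboundZx} one column at a time, then collapse the nested expression to $D$. The arithmetic outline at the end is correct, and your ``alternative route'' via $[F,-\g_j]$ is just the proof of Theorem~\ref{degreeboundZx} unpacked. But there is a genuine gap in your first step. Theorem~\ref{degreeheightboundforGHNF} bounds only the \emph{pivot entries} $c_{r_i,j}$, i.e.\ the entry in row $r_i$ of the column $\mathbf{c}_{r_i,j}$; the paper is explicit about this just before the theorem, where $\deg(\mathcal{C})$ and $\height(\mathcal{C})$ are \emph{defined} as maxima over those pivot entries alone. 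Your $\g_j$ is the whole column, and its entries in rows $1,\dots,r_i-1$ are not covered. The reducedness condition of Definition~\ref{definition1}(4) only forces the high-degree coefficients of those upper entries to lie below the relevant leading coefficients; it does not bound their degrees. So the inputs $d_1,h_1$ you hand to Theorem~\ref{degreeboundZx} are not justified by what you cite, and the same issue recurs in your alternative route since $[F,-\g_j]$ still contains the full column.

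The paper sidesteps this by \emph{not} solving $F\u_j=\g_j$. For the column $\mathbf{c}_{r_i,j}$ it solves only the truncated system on the last $n-r_i+1$ rows,
\[
F_{n-r_i+1}\,U_{r_i,j}=[c_{r_i,j},0,\ldots,0]^\tau,
\]
with $F_{n-r_i+1}$ the bottom $n-r_i+1$ rows of $F$. Now the right-hand side contains only the pivot entry and zeros, so Theorem~\ref{degreeheightboundforGHNF} supplies exactly $\deg(F,b)\le nd$ and $\height(F,b)\le 6n^3d^2(h+1+\log(n^2d))$, and Theorem~\ref{degreeboundZx} applies cleanly. The simplification to $D=73n^8d^5(h+1+\log(n^2d))$ then proceeds as you outline, including a separate check at $n=1$.
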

\begin{proof}
  By Theorem \ref{degreeheightboundforGHNF}, we have
  $\deg(c_{r_i,j})\le (n-r_i+1)d$, $\height(c_{r_i,j})\le 6(n-r_i+1)^3d^2(h+1+\log((n-r_i+1)^2 d))$ for $i=1,\ldots,t,~j=1,\ldots,l_i$.
  Denote by $U_{r_i,j}$ the column vector of $U$, satisfying $FU_{r_i,j}=[\ast,\ldots,\ast,$ $c_{r_i,j},0,\ldots,0]^\tau$.
   Then $U_{r_i,j}$ can be determined by $F_{n-r_i+1}U_{r_i,j}=[c_{r_i,j},0,\ldots,0]^\tau$, where $F_{n-r_i+1}$ is the last $n-r_i+1$ rows of $F$.
   In Theorem \ref{degreeboundZx}, let $\deg(F,b)=\max_{i,j}\deg(F,c_{r_i,j})\le nd,
   ~\height(F,b)=\max_{i,j}\height(F,c_{r_i,j})\le 6n^3d^2(h+1+\log(n^2 d))$.
   Then we have $\deg(U)\le 3n^2d^2(h_2+\log(nd+1))+nd$, where $h_2=2(n(n+1)$ $\deg(F,b)+n)(\height(F,b)+\log(n(n+1)\deg(F,b)+n)+1)$.
   First, we have the following inequality:
   \begin{align}
   h_2&= 2(n(n+1)\deg(F,b)+n)(\height(F,b)+\log(n(n+1)\deg(F,b)+n)+1)\notag\\
   &\le 2(n^2(n+1)d+n)(6n^3d^2(h+1+\log(n^2 d))+\log(n^2(n+1)d+n)+1)\notag\\
   &\le 24n^6d^3(h+1+\log n^2d) \hskip 3.0cm  ~\emph{\rm{for any}}~n\ge 2.
   \end{align}
   One can verify that the above inequality is still valid for $n=1$, in which case $\deg(F,b)\le d$ and $\height(F,b)\le d(2h+\log (d+1))+\frac{1}{2}\log (d+1)+d\log d+h$.
    So we have
   $\deg(U)\le 3n^2d^2h_2+3n^2d^2\log (nd+1)+nd \le73n^8d^5(h+1+\log n^2d)$.
\end{proof}

We give an example to illustrate the main idea of the proof.
\begin{example}
Let $F=\left(
         \begin{array}{cc}
           1 & x \\
           6x^3+1 & 8x^2 \\
         \end{array}
       \right)$,
and $h=3\log 2 =3$ the height of $F$, where we choose the logarithm with $2$ as a base.

If $a=[a_1,a_2]^\tau$ with $a_2\ne 0$ is a column vector of $\mathcal{C}$, then $a_2$ is an element of the \gHNF of $[6x^3+1,8x^2]$.
Thus, $\deg(a_2)\le \max(\deg(6x^3+1),\deg(8x^2))=3$ and by Theorem \ref{gcdheightbound}, $\height(a_2)\le 4\log 2+h = 7$.

If $b=[b_1,0]^\tau$ with $b_1\ne 0$ is a column of $\mathcal{C}$,
then  there exists a $U=[u_1,u_2]^\tau\in \Z[x]^2$ satisfying
\begin{center}
$b=FU$,  $~~~~i.e.~~~~$
$\left\{
    \begin{aligned}
    b_1&=u_1+x u_2\\
    0&=(6x^3+1)u_1+8x^2u_2
    \end{aligned}
\right.$
\end{center}
Let ${\bf g}_1,\ldots,{\bf g}_s$ be the generators of the solutions to $0=(6x^3+1)u_1+8x^2u_2$.
By Theorem \ref{homogeneousZx}, $\deg( {\bf g}_i)\le 3$ and  $\height( {\bf g}_i)\le 14(h+\log 7+1)$.
 Thus, $b_1$ is an element of the \gHNF of $[1,x]\cdot [{\bf g}_1,\ldots,{\bf g}_s]=[h_1,\ldots,h_s]$, where $\deg(h_i)\le 4$, and $\height(h_i)\le 28(h+\log 7+1) < 196$.
 Hence, by Theorem \ref{degreeheightboundforGHNF}, $\deg(b_1)\le 4$ and $\height(b_1)\le 432(h+1+\log 12)<3456$.
 Moreover, by Theorem \ref{degreeboundtranmat}, we know that the degree bound for the transformation matrix is $D=4478976(h+1+\log 12)< 35831808$.

Actually, the solutions to $0=(6x^3+1)u_1+8x^2u_2$ can be generated by $[8x^2,-(6x^3+1)]^\tau$. Thus, $b_1$ is in the \gHNF of $[1,x]\cdot[8x^2,-(6x^3+1)]^\tau=[-6x^4+8x^2-x]$.
 The \gHNF and the  transformation matrix are
 \begin{small}
 $$\mathcal{C}=\left(
         \begin{matrix}
           6x^4-8x^2+x & 3x^8-4x^6+5x^5-6x^3+1 \\
           0 & 1 \\
         \end{matrix}
       \right),U=\left(
                                            \begin{matrix}
                                              -8x^2 & -4x^6-6x^3+1 \\
                                              6x^3+1 & 3x^7+5x^4 \\
                                            \end{matrix}
                                          \right).$$
 \end{small}
\end{example}
So for some examples, the bounds are far from optimal, and this is
the reason we will give an incremental algorithm in the next section to compute the GHNF.
%and use the bounds given in this section as the termination condition.

\section{Algorithms to compute the GHNF}
\label{HNF Algorithm in matrix form}
In this section, we give an  algorithm to compute the GHNF of $F\in \Zx^{n\times m}$.
Roughly speaking, the algorithm works as follows.
We will compute the HNF  $G\in\Z^{s\times k}$ for the coefficient matrix of $F$
and check whether a GHNF can be retrieved from $G$.
In the negative case, certain prolongations are done to $G$ and the procedure is repeated.
The key idea is how to do the prolongation so that the sizes of
the matrices $G$ are nicely controlled.

\subsection{HNF of integer matrix}
In this section, we will introduce several basic results about
HNF of an integer matrix, which will be used as the main computational tool
in our GHNF algorithm.

\begin{myDef}\label{def-hnf}
A matrix $H=(h_{i,j})\in\Z^{n\times m}$ is called an (column) HNF if there
exists an $r \le m$ and a strictly increasing map $f$ from $[r+1,m]$ to $[1,n]$
satisfying:
(1) for $j\in[r+1,m]$, $h_{f(j),j}\ge 1$, $h_{i,j}=0$ if $i > f(j)$
  and $h_{f(j),j}> h_{f(j),k}\ge0$ if $k > j$; and (2)
the first $r$ columns of $H$ are equal to zero.
%
%
%\begin{description}
%  \item[1)] For $j\in[r+1,m]$, $h_{f(j),j}\ge 1$, $h_{i,j}=0$ if $i > f(j)$
%  and $h_{f(j),j}> h_{f(j),k}\ge0$ if $k > j$.
%  \item[2)] The first $r$ columns of $H$ are equal to zero..
%\end{description}
\end{myDef}

Let $A\in\Z^{n\times m}$ and $H^{n\times m}$ the HNF of $A$.
Then there exists a $U\in {\rm GL}_m(\Z)$  \cite{cohen1993course} such that
\begin{equation}\label{eq-hnf}
H=AU.
\end{equation}
Note that $H$ is obtained from $A$ by doing column elementary operations
which are represented by the matrix $U$.
We need the following lemma on the syzygy module of $A$.
\begin{lemma}\cite{cohen1993course}
\label{HNF over integer}
Let \cref{eq-hnf} be given and assume that the first $r$ columns of $H$ are the $\0$ columns of $H$.
Then a $\Z$-basis for the $\Z$-module $\Syz(A)=\{Y\in\Z^m\,|\, AY=0\}$ is given by the first $r$ columns of $U$.
\end{lemma}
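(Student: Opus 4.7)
The plan is to exploit the unimodularity of $U$ together with the staircase structure of $H$ described in Definition~\ref{def-hnf}. Let $u_1,\ldots,u_m$ denote the columns of $U$. First I would observe that since $H=AU$ and the first $r$ columns of $H$ are zero, we immediately have $Au_i=0$ for $i=1,\ldots,r$, so $u_1,\ldots,u_r\in\Syz(A)$. Their $\Z$-linear independence is free of charge: $U\in{\rm GL}_m(\Z)$, so its columns are $\Z$-linearly independent. What remains is the spanning claim.

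To show spanning, take any $y\in\Syz(A)$ and set $z=U^{-1}y\in\Z^m$ (well-defined since $U$ is unimodular). Then $Hz=AUz=Ay=0$, and it suffices to prove that $z_{r+1}=\cdots=z_m=0$, for then $y=Uz=\sum_{i=1}^r z_i u_i$.

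The heart of the argument is a downward induction on $j$ from $m$ to $r+1$, using the strictly increasing pivot map $f\colon[r+1,m]\to[1,n]$. The base case $j=m$: in row $f(m)$, for any column $k<m$ with $k>r$ we have $f(k)<f(m)$, so by condition (1) of Definition~\ref{def-hnf} (entries below a pivot vanish), $h_{f(m),k}=0$; columns $k\le r$ contribute nothing since those columns of $H$ are zero. Thus $(Hz)_{f(m)}=h_{f(m),m}z_m=0$, and $h_{f(m),m}\ge 1$ forces $z_m=0$. For the inductive step, assume $z_{j+1}=\cdots=z_m=0$. Looking at row $f(j)$, the same argument shows $h_{f(j),k}=0$ for $r<k<j$ (since $f(k)<f(j)$), so
\[
0=(Hz)_{f(j)}=h_{f(j),j}z_j+\sum_{k>j}h_{f(j),k}z_k=h_{f(j),j}z_j,
\]
and again $h_{f(j),j}\ge 1$ yields $z_j=0$.

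I do not expect a serious obstacle here; the only subtlety is being careful to read Definition~\ref{def-hnf} correctly, in particular that ``entries strictly below the pivot are zero'' combined with the strict monotonicity of $f$ is precisely what zeroes out the off-diagonal contributions in each row $f(j)$ during the induction. Once this is set up, the argument is essentially back-substitution in a triangular system with nonzero diagonal, which is why the $\Z$-basis property falls out cleanly.
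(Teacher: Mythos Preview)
Your argument is correct. The key observations---that the first $r$ columns of $U$ lie in $\Syz(A)$ and are $\Z$-independent, and that any $y\in\Syz(A)$ pulls back via $z=U^{-1}y$ to a vector satisfying $Hz=0$, which by back-substitution along the pivot rows $f(j)$ forces $z_{r+1}=\cdots=z_m=0$---are exactly right, and you have read Definition~\ref{def-hnf} correctly: the condition $h_{i,j}=0$ for $i>f(j)$ together with strict monotonicity of $f$ is precisely what kills the off-pivot entries in row $f(j)$.

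As for comparison with the paper: there is nothing to compare. The paper does not give a proof of this lemma; it is simply quoted from Cohen's textbook \cite{cohen1993course}. Your write-up is the standard proof one would find there (or reconstruct oneself), so you have supplied what the paper omits.
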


We will measure the cost of our algorithms in numbers of bit operations.
We need the function $M(k)=O(k\log k\log \log k)$ which is the cost of multiplications and quotients of two integers $a$ and $b$ with $|a|,|b|<2^k$.
We will give complexity results in terms of the function $B(k)=M(k)\log k=O(k(\log k)^2(\log \log k))$.
We use a parameter $\theta$ such that the multiplication of two $n\times n$ integer matrices needs $O(n^{\theta})$ arithmetic operations.
The  best known upper bound for $\theta$ is about 2.376.

The following result gives the complexity of computing HNF over $\Z$.

\begin{theorem}[\cite{Storjohann2013}]\label{th-hnfc}
Let $A\in\Z^{n\times m}$ with rank $r$ and height $h$, and $H$ be the HNF of $A$.
Then $\height(H)\le \log \beta= r(\frac{1}{2}\log r+h)$.
The bit complexity to compute $H$ from $A$ is
$O(mnr^{\theta-2} \log \beta M(\log \log \beta)/\log \log \beta +mnB(\log \beta)\log r)$.
\end{theorem}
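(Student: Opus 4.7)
The statement has two parts: a height bound on $H$ and a bit-complexity bound for computing $H$. The height bound I would prove directly, while for the complexity bound I would only sketch the plan of Storjohann's modular algorithm (since the result is cited).

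For the height bound, I would argue as follows. Since $A$ has rank $r$, its HNF has exactly $r$ nonzero columns, and by standard properties of the HNF together with \cref{eq-hnf} each entry $h_{ij}$ of $H$ is bounded in absolute value by some $r \times r$ minor of $A$. Concretely, the nonzero columns of $H$ generate the same $\Z$-lattice as the columns of $A$, so the pivot entries $h_{f(j),j}$ divide (and are bounded by) the $r$-th determinantal divisor of $A$, which is itself bounded by any nonzero $r \times r$ minor; the off-diagonal entries are strictly smaller by the reduction condition in Definition \ref{def-hnf}. By Hadamard's inequality, any $r \times r$ minor of $A$ has absolute value at most $r^{r/2} \cdot 2^{rh}$, and taking logarithms yields $r(\tfrac{1}{2}\log r + h) = \log\beta$.

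For the complexity, my plan would follow the modular/lifting strategy of Storjohann. First, reduce to the nonsingular square case by extracting, via fast rank computation and permutations, an $r \times r$ nonsingular submatrix $A_0$; the remaining rows and columns are attached to the final HNF by cheap postprocessing of cost absorbed in the stated bound. Second, compute a multiple $d$ of $|\det(A_0)|$, which is at most $\beta$ by the height-bound argument above; this can be done by $p$-adic (Dixon) lifting of the solution to a random linear system over $\Z$. Third, compute the HNF of $A_0$ working throughout in $\Z/d\Z$: because the true entries of $H$ are bounded by $\beta \leq d$, reduction modulo $d$ is lossless, so one never suffers coefficient blow-up. Fourth, lift the modular HNF back to $\Z$ using the first-part bound and perform the final reduction on the off-diagonal entries.

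The main obstacle, and the entire content of \cite{Storjohann2013}, is the third step: organizing the mod-$d$ elimination into block operations that exploit fast matrix multiplication (producing the $r^{\theta-2}$ factor in place of the naive $r$), while simultaneously using fast integer arithmetic on entries of size at most $\log\beta$ (producing the $M(\log\log\beta)/\log\log\beta$ and $B(\log\beta)\log r$ factors). A full proof requires the divide-and-conquer HNF-mod-$d$ routine and its careful accounting, so in the present paper I would simply invoke \cite{Storjohann2013} rather than reprove it; the height bound, however, I would include as a short self-contained argument since it is used repeatedly in the complexity analysis of the GHNF algorithm.
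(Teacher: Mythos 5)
This is an imported result: the paper offers no proof of Theorem \ref{th-hnfc} and simply cites \cite{Storjohann2013}, so your decision to defer the complexity bound to that reference matches the paper exactly, and your self-contained Hadamard argument for the height bound is correct and standard (reduce to the $r$ pivot rows, note each pivot divides the product of the pivots, which equals the gcd of the $r\times r$ minors of those rows and is hence bounded by any nonzero such minor $\le r^{r/2}2^{rh}=\beta$, with off-pivot entries dominated by their pivots via the reduction condition of Definition \ref{def-hnf}). One small imprecision: the pivots need not divide the $r$-th determinantal divisor of $A$ (the gcd over \emph{all} $r\times r$ minors); they divide the gcd of the minors drawn from the pivot rows, which is a multiple of it --- but this does not affect the bound, since that gcd is still dominated by a single nonzero minor.
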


\subsection{The $\Zx$  case}
In this section, we will show how to compute the GHNF in $\Zx$.
Through out this section, let $F=[f_1,\ldots,f_m]$ be a polynomial vector over $\Zx$,
$d=\deg(F)$, and $h=\height(F)$.
$C\in \Z^{(d+1)\times m}$ is called the \emph{coefficient matrix} of $F$
if its columns represent the polynomials in $F$ such that
$$F=\X_d C, \hbox{ where }\X_d=[1, x, \ldots, x^d].$$
Let $[\0,H]\in\Z^{(d+1)\times m}$ be the HNF of $C$, where  $H\in\Z^{(d+1)\times s}$
contains no zero columns.
Then, there is a unimodular matrix $U=[U_{1},U_{2}]$ such that
$[\0,H]=C U$, $\0=C U_1$, and $H=CU_2$.
We call $G=\X_d H$  the \emph{polynomial Hermite normal form} (abbr. PHNF) of $F$.
For simplicity, we denote   $C=\CM(F)$ and
\begin{equation}\label{eq-ph1}
G=\hf(F)=\X_dH=\X_dCU_2=FU_2.
\end{equation}
Let $G=[g_1,\ldots,g_s]\in\Zx^{1\times s}$. From the definition of HNF,
we have $\deg(g_1) < \deg(g_2) < \cdots < \deg(g_s)$.
We now give the algorithm.
\begin{algorithm}[!htb]
\label{HNF1new}
\caption{GHNF$_1(F)$}
\begin{algorithmic}[1]
\REQUIRE~~$F=[f_1,\ldots,f_m],~f_i\in \Zx$ and $d=\max_i \deg(f_i)$.
\ENSURE~~The \gHNF, or the reduced Gr\"obner basis, of $F$.

 \STATE  Let $G_0=\hf(F)$ and $k=0$.\\
 \STATE(loop) \label{iteration}
 $k=k+1$.\\
 $P_k=[G_{k-1},xG_{k-1,d-1}]$, where $G_{k-1,d-1}$ is the set of polynomials in $G_{k-1}$ with degrees $\le d-1$.\\
 $G_k=\hf(P_k)$.\\
 If $G_{k}\ne G_{k-1}$, repeat Step 2.
 \STATE\label{pick}
 Let $G_k=[g_{1},\ldots,g_{s}]$ and $R=[g_{1}]$.
 For $j$ from $2$ to $s$, if $\lc(g_{j-1})\ne\lc(g_{j})$,   $R = R \cup\{ g_{j}\}$.
 \STATE Return $R$.\\

\vskip5pt
(For $F=[f_1,\ldots,f_m]$ and $G=[g_1,\ldots,g_m]$, we use the notation
$[F,G]=[f_1,\ldots,f_m,g_1,\ldots,g_m]$.)
 \end{algorithmic}
 \end{algorithm}

\begin{example}\label{ex-11}
  $F=[6x^3+3x^2+12,6x^3+3x^2+6x,6x^3+15x^2,6x^3+3x^2]$.

  %$1$-th loop:
  Step 1:
     $G_0=\text{\hf}(F)=[12,6x,12x^2,6x^3+3x^2]$. We have $d=3$.

  $1$-th loop:
    $P_1=[G_0,12x,6x^2,12x^3]$,
    $G_1=\text{\hf}(P_1)=[12,6x,6x^2,6x^3+3x^2]$.

  $2$-th loop:
    $P_2=[G_1,12x,6x^2,6x^3]$,
    $G_2=\text{\hf}(P_2)=[12,6x,3x^2,6x^3]$.

  $3$-th loop:
    $P_3=[G_2,12x,6x^2,3x^3]$,
    $G_3=\text{\hf}(P_3)=[12,6x,3x^2,3x^3]$.

  $4$-th loop:
    $P_4=[G_3,12x,6x^2,3x^3]$,
    $G_4=\text{\hf}(P_4)=[12,6x,3x^2,3x^3]$. The loop is terminated.

   Step 3: $R=[12,6x,3x^2]$ is the \gHNF of $F$.
\end{example}

In the rest of this section, we will prove the correctness of the algorithm
and give its complexity.

For a polynomial vector $F=[f_1,\ldots,f_m]$,
we denote $(F)_{\Z}$ to be $\Z$-module generated by
the elements of $F$.
If $\deg(f_i) < \deg(f_j)$ for all $i < j$, $F$ is called a
{\em $\Z$-Gr\"obner basis} for the following reason:
if $F$ is a $\Z$-Gr\"obner basis and $f\in(F)_{\Z}$,
then there exists an $f_k$ such that $\lt(f_k)|\lt(f)$,
or equivalently, $f$ can be reduced to zero by $F$ over $\Z$.
Furthermore, if $\lt(f_i)$ is not a $\Z$-factor of any monomial
of $f_j$ for $j\ne i$, then $F$ is called a {\em reduced $\Z$-Gr\"obner basis}.
By  Definition \ref{def-hnf} and \cref{eq-hnf}, we have
\begin{lemma}\label{lm-phnf}
Let $G=\hf(F)$. Then $(F)_{\Z}=(G)_{\Z}$ and $G$ is a reduced $\Z$-Gr\"obner basis of
$(F)_{\Z}$.
\end{lemma}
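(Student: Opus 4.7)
The plan is to derive both claims directly from the HNF structure of $H$. I would first establish $(F)_{\Z}=(G)_{\Z}$: since $U=[U_1,U_2]$ is unimodular with $CU_1=\0$ and $CU_2=H$, block-inverting $U$ gives $C=HV$ for some $V\in\Z^{s\times m}$, so the $\Z$-column-spans of $C$ and $H$ in $\Z^{d+1}$ coincide. Applying the row vector $\X_d$ to both sides identifies these spans with $(F)_{\Z}$ and $(G)_{\Z}$ respectively, giving equality.

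For the Gr\"obner basis property I would label the pivot rows of the $s$ columns of $H$ as $p_1<p_2<\cdots<p_s$ with pivot entries $h_{p_j,j}\ge 1$ and $h_{k,j}=0$ for $k>p_j$, all coming straight from Definition~\ref{def-hnf}. Thus $\lt(g_j)=h_{p_j,j}x^{p_j-1}$ and the leading degrees $d_j=p_j-1$ strictly increase in $j$. Given any nonzero $f=\sum_{j=1}^s c_j g_j\in(G)_{\Z}$ with $c_k$ its last nonzero coefficient, no cancellation can occur at the leading position, so $\lt(f)=c_k h_{p_k,k}x^{d_k}$ is divisible by $\lt(g_k)$ in $\Zx$, which witnesses the $\Z$-Gr\"obner basis property. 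For reducedness I need that $\lt(g_i)=h_{p_i,i}x^{d_i}$ is not a $\Z$-factor of any monomial $bx^\beta$ of $g_j$ with $j\ne i$, that is, that there is no such monomial with $\beta=d_i$ and $h_{p_i,i}\mid b$. When $j<i$ this is vacuous because $p_i>p_j$ lies strictly below the pivot of column $j$, forcing $h_{p_i,j}=0$. When $j>i$ the HNF inequality $h_{p_i,i}>h_{p_i,j}\ge 0$ says the coefficient of $x^{d_i}$ in $g_j$ is either zero or a positive integer less than $h_{p_i,i}$, hence not a multiple of it.

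The only subtle point is the reading of ``$\Z$-factor'' in the definition of reduced $\Z$-Gr\"obner basis: because the ambient object is the $\Z$-module $(F)_{\Z}$ rather than a $\Zx$-module, reduction can only subtract integer multiples of generators, so $ax^\alpha$ being a $\Z$-factor of $bx^\beta$ really means $\alpha=\beta$ and $a\mid b$. This is exactly what the clause ``$h_{f(j),j}>h_{f(j),k}\ge 0$'' of the HNF definition supplies, so no stronger HNF convention or additional reduction step is required beyond the bookkeeping above.
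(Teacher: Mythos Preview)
Your proposal is correct and follows exactly the route the paper intends: the paper's entire ``proof'' is the single clause ``By Definition~\ref{def-hnf} and \cref{eq-hnf}, we have'', so what you have written is simply a careful unpacking of that sentence, and every step matches. One minor remark: the paper \emph{defines} a $\Z$-Gr\"obner basis purely by the strictly-increasing-degree condition (the divisibility statement is recorded only as a consequence), so for the first half of the Gr\"obner-basis claim you really only need $p_1<\cdots<p_s$; your additional argument that $\lt(g_k)\mid\lt(f)$ is correct but not required.
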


In Step 2 of Algorithm GHNF$_1$, if using the following ``full'' prolongation
in the $k$-th loop, we have
\begin{equation}\label{eq-lpf1}
  \widetilde{P}_k=[\widetilde{G}_{k-1},x\widetilde{G}_{k-1}],
  \widetilde{G}_k=\hf(\widetilde{P}_{k}),
\end{equation}
where $\widetilde{G}_0=G_0$.
Due to \cref{eq-hnf},  it is easy to check that
\begin{equation}\label{eq-lpf2}
 (\widetilde{P}_k)_{\Z}= (\widetilde{G}_k)_{\Z}=(F\cup\{x^iF\,|\,i=1,\ldots,k\})_{\Z}.
\end{equation}
\begin{remark}
Note that $\{x^iF\,|\,i=1,\ldots,k\}$ in \cref{eq-lpf2} are the
standard prolongation used in the XL algorithm \cite{xl1}
or a naive F4 style algorithm. The degree of $\widetilde{G}_k$ is $d+k$ which
increases with the loop number $k$, while the degree of ${G}_k$ in Algorithm {\em GHNF}$_1$
is always $d$, and this is the main advantage of our new prolongation.
A key idea in the F4 algorithm and the XL algorithm is that when $k$ is
large enough, a Gr\"obner basis of $F$ can be obtained
by doing Gaussian elimination to the coefficient matrix of $\widetilde{P}_k$.
We will prove that this is also true for the ``partial prolongation'' ${P}_k$ in Step 2 of the algorithm.
\end{remark}

Let $G_{k,s}$ and $\widetilde{G}_{k,s}$ be the sets of polynomials in
$G_{k}$ and $\widetilde{G}_{k}$ with degrees $\le s$, respectively.
Denote  $g_{k,j}$ and $\widetilde{g}_{k,j}$ to be
the polynomials in $G_{k}$ and $\widetilde{G}_{k}$ with degree
$j$, respectively. If there exist no such polynomials, $g_{k,j}$ and $\widetilde{g}_{k,j}$
are set to be zero. Clearly, $g_{k,d}\ne0$ and $\widetilde{g}_{k,d+i}\ne0$ for $i=0,\ldots,k$.

\begin{lemma}\label{claim-1}
%Use the notations in \cref{eq-lpf1}.
We have $\lc(\widetilde{g}_{k,d})|\lc(\widetilde{g}_{k,d+1})|\cdots|\lc(\widetilde{g}_{k,d+k})$ and for $f\in (\widetilde{P}_{k+1})_{\Z}$ with $l=\deg(f)$, if $d< l\le d+k+1$ then $f\in (\widetilde{G}_{k,d},x\widetilde{G}_{k,l-1})_{\Z}$;
if $l\le d$ then $f\in (\widetilde{G}_{k,d},x\widetilde{G}_{k,d-1})_{\Z}$.
\end{lemma}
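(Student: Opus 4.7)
My plan is to establish the divisibility chain first for all $k$, and then deduce the reduction statement by descending reduction. The central tool is that by Lemma \ref{lm-phnf}, $\widetilde{G}_k=\hf(\widetilde{P}_k)$ is a reduced $\Z$-Gr\"obner basis of $(\widetilde{P}_k)_\Z$, so $\lc(\widetilde{g}_{k,j})$ generates the ideal $I_{k,j}\subset\Z$ of $x^j$-coefficients of elements of $(\widetilde{P}_k)_\Z$ of degree $\le j$. Since only summands of degree exactly $j$ contribute to that coefficient, $I_{k,j}$ is also generated by $\{\lc(p):p\in\widetilde{P}_k,\,\deg p=j\}$.

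Applied to $\widetilde{P}_k=[\widetilde{G}_{k-1},x\widetilde{G}_{k-1}]$, the degree-$(d+i)$ members of $\widetilde{P}_k$ are $\widetilde{g}_{k-1,d+i}$ (when $i\le k-1$) and $x\widetilde{g}_{k-1,d+i-1}$ (when $i\ge 1$), which yields the recurrence
\[\lc(\widetilde{g}_{k,d+i})=\gcd\bigl(\lc(\widetilde{g}_{k-1,d+i}),\lc(\widetilde{g}_{k-1,d+i-1})\bigr)\quad(1\le i\le k-1),\]
together with the one-sided endpoint $\lc(\widetilde{g}_{k,d+k})=\lc(\widetilde{g}_{k-1,d+k-1})$ and the divisibility $\lc(\widetilde{g}_{k,d})\mid\lc(\widetilde{g}_{k-1,d})$. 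Invoking the inductive divisibility chain at level $k-1$ collapses each gcd to its smaller argument, and direct comparison of the resulting values yields $\lc(\widetilde{g}_{k,d+i})\mid\lc(\widetilde{g}_{k,d+i+1})$, giving the chain at level $k$. The existence of $\widetilde{g}_{k,d+i}$ for $0\le i\le k$ used above is a parallel induction, since $\widetilde{P}_k$ already contains a polynomial of every degree in $\{d,\ldots,d+k\}$.

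For the reduction claim, given $f\in(\widetilde{P}_{k+1})_\Z=(\widetilde{G}_k\cup x\widetilde{G}_k)_\Z$ of degree $l$, I will iteratively cancel leading terms using elements of $\widetilde{G}_k\cup x\widetilde{G}_k$; at any current degree $l'$ only $\widetilde{g}_{k,l'}$ and $x\widetilde{g}_{k,l'-1}$ can contribute to the $x^{l'}$ coefficient. When $l'\le d$, both members already lie in $\widetilde{G}_{k,d}\cup x\widetilde{G}_{k,d-1}$, so the reduction stays inside the desired submodule. When $d<l'\le d+k+1$, cancellation must be effected by $x\widetilde{g}_{k,l'-1}$ alone, because $\widetilde{g}_{k,l'}\notin\widetilde{G}_{k,d}$; here the divisibility chain at level $k+1$ is crucial, as unwinding the recurrence gives $\lc(\widetilde{g}_{k+1,l'})=\lc(\widetilde{g}_{k,l'-1})$, and since $f\in(\widetilde{P}_{k+1})_\Z$ this value divides $\lc(f)$, so the required cancellation is available. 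Once the running degree drops to $\le d$ the first case takes over, and the inclusion $\widetilde{G}_{k,d-1}\subseteq\widetilde{G}_{k,l-1}$ glues the two phases into the claimed membership.

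The hardest point in my view is not any individual calculation but the bookkeeping that interlocks the two parts: (B) at level $k$ needs (A) at level $k+1$. Since (A) is proven for all $k$ by its own induction before (B) is addressed, this coupling introduces no circularity. Edge cases where $\widetilde{g}_{k,j}$ is absent for some $j<d$, and the boundary indices $i\in\{0,k\}$ of the recurrence, require routine care but no new ideas.
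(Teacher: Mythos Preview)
Your plan has a genuine gap in the justification of the recurrence for part (A). You assert that ``only summands of degree exactly $j$ contribute to that coefficient,'' and from this conclude $I_{k,j}=(\{\lc(p):p\in\widetilde P_k,\ \deg p=j\})$. But an element $f\in(\widetilde P_k)_{\Z}$ of degree $j$ may well require generators of $\widetilde P_k$ of degree \emph{strictly greater} than $j$ in its $\Z$-linear representation (with cancellation in the top degrees), and then the non-leading $x^j$-coefficients of those higher-degree generators contribute to $\lc(f)$. For instance, if $\widetilde G_{0}=\{4,\,2x+1,\,x^2\}$ (so $d=2$), then $\widetilde P_1$ contains $x^2$ and $2x^2+x$, and $2x^2-(2x^2+x)=-x$ has degree~$1$ with leading coefficient not in the ideal $(2,4)$ generated by the degree-$1$ leading coefficients. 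The same mechanism can in principle occur at degrees $j\ge d+1$; nothing in your argument rules it out.

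What one actually needs, to conclude $\lc(\widetilde g_{k,d+i})\in(\lc(\widetilde g_{k-1,d+i}),\lc(\widetilde g_{k-1,d+i-1}))$, is precisely that $\widetilde g_{k,d+i}\in(\widetilde P_k)_{\Z}$ admits a representation using only generators of degree $\le d+i$. But that is exactly part~(B) at level $k-1$: it says any $f\in(\widetilde P_k)_{\Z}$ with $d<\deg f\le d+k$ lies in $(\widetilde G_{k-1,d},\,x\widetilde G_{k-1,\deg f-1})_{\Z}$, a set whose only degree-$(d+i)$ member is $x\widetilde g_{k-1,d+i-1}$. So your proof of (A) at level $k$ tacitly consumes (B) at level $k-1$. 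You noticed the dependence (B at $k$) $\Rightarrow$ (A at $k+1$) but missed the reverse dependence (B at $k-1$) $\Rightarrow$ (A at $k$); the coupling is bidirectional, and decoupling the two parts as you propose does not work. The paper resolves this by a single joint induction on $k$: the inductive hypothesis (B at $k-1$) yields $\widetilde g_{k,j}=x\widetilde g_{k-1,j-1}+(\text{lower order})$ for $d<j\le d+k-1$, hence $\lc(\widetilde g_{k,j})=\lc(\widetilde g_{k-1,j-1})$, which gives (A at $k$); then (A at $k$) is used (via an identity of the form $\widetilde g_{k,q}\in(\widetilde G_{k,q-1},x\widetilde G_{k,q-1})_{\Z}$ for $q>d$) to establish (B at $k$). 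Your descending-reduction idea for (B) is essentially equivalent to the paper's substitution argument once (A) is available, so the fix is simply to reorganize your proof as this joint induction rather than proving (A) globally first.
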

\begin{proof}
For convenience, denote $S_{i,k}=\widetilde{G}_{i,d}\cup x\widetilde{G}_{i,k}$
for $d-1\le k\le d+i-1$.
Since $\deg(S_{i,k}) = k+1$, $S_{i,k}\subset (\widetilde{G}_{i+1})_{{\Z}}$,
and $\widetilde{G}_{i+1}$ is a $\Z$-Gr\"obner basis, we have
$S_{i,k}\subset(\widetilde{G}_{i+1,k+1})_{\Z}.$

We prove the lemma by induction on the number of loops.
For  $k=0$,  since $xg_{0,d}$ is the only element in $\widetilde{P}_1$  with degree $d+1$, we have ${\lt(\widetilde{g}_{1,d+1})=\lt(x\widetilde{g}_{0,d})}$.
As a consequence,
if $f\in (\widetilde{P}_{1})_{\Z}$ and  $\deg(f)\le d$  then $f\in (S_{0,d-1})_{\Z}$.
If $f\in (\widetilde{P}_{1})_{\Z}$ and $\deg(f)= d+1$, then
it is obvious that $f\in (S_{0,d})_{\Z}=(\widetilde{P}_{1})_{\Z}$.
The lemma is proved for $k=0$.

Suppose the lemma is valid for $k \le i$.
By the induction hypothesis, since $\widetilde{g}_{i+1,j} \in \widetilde{P}_{i+1}$,
we have $\widetilde{g}_{i+1,j}\in (S_{i,j-1})_{\Z}$ if $d< j\le d+i+1$
and $\widetilde{g}_{i+1,j}\in (S_{i,d-1})_{\Z}$ if $j \le d$.
We first assume that $d< j\le d+i$. Since $x\widetilde{g}_{i,j-1}$ is the only
polynomial with degree $j$ in $S_{j-1}$, we have
\begin{equation}\label{eq-c11}
  \widetilde{g}_{i+1,j}=x\widetilde{g}_{i,j-1}+l_{i,j}
%  \widetilde{g}_{i+1,j+1}=x\widetilde{g}_{i,j}+l_{j+1}
 \end{equation}
for some $l_{i,j}\in (S_{i,j-2})_{\Z}\subset (\widetilde{G}_{i+1,j-1})_{\Z}$.
Then, $\lc(\widetilde{g}_{i+1,j})=\lc(\widetilde{g}_{i,j-1})$,
and thus
 $\lc(\widetilde{g}_{i+1,j})|\lc(\widetilde{g}_{i+1,j+1})$ for $j=d+1,\ldots,d+i$
by the induction hypothesis.
Moreover, since $\widetilde{g}_{i+1,d}\in (S_{i,d-1})_{\Z}$ and
$\widetilde{g}_{i,d}$ and $x\widetilde{g}_{i,d-1}$ are the only
polynomials in $S_{i,d-1}$ with degree $d$,
we have $\lc(\widetilde{g}_{i+1,d})|\lc(\widetilde{g}_{i,d})$.
Then $\lc(\widetilde{g}_{i+1,d})|\lc(\widetilde{g}_{i+1,d+1})$ follows from
$\lc(\widetilde{g}_{i+1,d+1})=\lc(\widetilde{g}_{i,d})$.
The first part of the lemma is proved.

To prove the second part, we first show that if $d< q\le d+i+1$, then
\begin{equation}\label{eq-c12}
\widetilde{g}_{i+1,q} \in (\widetilde{G}_{i+1,q-1}, x\widetilde{G}_{i+1,q-1})_{\Z}.
\end{equation}
Since $\lc(\widetilde{g}_{i+1,q-1})|\lc(\widetilde{g}_{i+1,q})$,
$a = \frac{\lc(\widetilde{g}_{i+1,q})}{\lc(\widetilde{g}_{i+1,q-1})}$ is in $\Z$.
By \cref{eq-c11},
$\widetilde{g}_{i+1,q} - ax\widetilde{g}_{i+1,q-1} = x(\widetilde{g}_{i,q-1}  - ax\widetilde{g}_{i,q-2})+ l_{i,q} - axl_{i,q-1}$.
Since $\deg(\widetilde{g}_{i,q-1}  - ax\widetilde{g}_{i,q-2})\le q-1$,
we have  $\widetilde{g}_{i,q-1}- ax\widetilde{g}_{i,q-2} \in (\widetilde{G}_{i+1,q-1})_{\Z}$.
Also note $l_{i,j}\in (\widetilde{G}_{i+1, j-1})_{\Z}$.
Then \cref{eq-c12} is proved.
Let $f\in (\widetilde{P}_{i+2})_{\Z}=(\widetilde{G}_{i+1,d+i+1},x\widetilde{G}_{i+1,d+i+1})_{\Z}$ with $l=\deg(f)$.
Using \cref{eq-c12} repeatedly, we may assume
$f\in (\widetilde{G}_{i+1,d},x\widetilde{G}_{i+1,s})_{\Z}$ for some $s$.
Since $\deg(\widetilde{G}_{i+1,d}) = d$ and $\deg(x\widetilde{G}_{i+1,s}) = s+1$,
we have $s=l-1$ if $l > d$
and $s=d-1$ if $l\le d$, and  the lemma is proved.
\end{proof}

\begin{lemma}\label{claim-11}
%For sufficiently large $k$,
We have $G_{k}=\widetilde{G}_{k,d}$ for any $k\ge0$.
\end{lemma}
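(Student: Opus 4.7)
My plan is to prove $G_k=\widetilde G_{k,d}$ by induction on $k$. The base case $k=0$ is immediate: by construction $G_0=\widetilde G_0=\hf(F)$, and since every entry of $\hf(F)$ has degree at most $d$, we have $G_0=\widetilde G_{0,d}$.

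For the inductive step, assume $G_k=\widetilde G_{k,d}$. Then in particular $G_{k,d-1}=\widetilde G_{k,d-1}$, so
\[
P_{k+1}=[G_k,\,xG_{k,d-1}]=[\widetilde G_{k,d},\,x\widetilde G_{k,d-1}].
\]
Let $M_{k+1}\subseteq(\widetilde P_{k+1})_{\Z}$ denote the $\Z$-submodule consisting of elements of degree $\le d$. Applying the second part of Lemma \ref{claim-1} with $l\le d$ shows $M_{k+1}\subseteq(\widetilde G_{k,d},x\widetilde G_{k,d-1})_{\Z}=(P_{k+1})_{\Z}$; the reverse inclusion is clear since the listed generators have degree $\le d$ and lie in $(\widetilde P_{k+1})_{\Z}$. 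Hence $(P_{k+1})_{\Z}=M_{k+1}$.

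It then suffices to identify both $G_{k+1}$ and $\widetilde G_{k+1,d}$ with the PHNF of $M_{k+1}$ and invoke uniqueness of HNF over $\Z$. By definition $G_{k+1}=\hf(P_{k+1})$, so the first identification is immediate. For the second, I would argue that the coefficient matrix of $\widetilde G_{k+1}$ is in HNF with pivots strictly increasing in row index as column degree grows; every column associated with a polynomial of degree $>d$ has its pivot in a row below row $d+1$, while every column associated with $\widetilde G_{k+1,d}$ has its pivot in rows $1,\dots,d+1$. Consequently the sub-block formed by the $\widetilde G_{k+1,d}$-columns, restricted to the first $d+1$ rows, is itself an HNF, and by Lemma \ref{lm-phnf} it is the reduced $\Z$-Gr\"obner basis (hence PHNF) of the $\Z$-module it spans, which by the previous paragraph is $M_{k+1}$. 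Uniqueness of the HNF then yields $G_{k+1}=\widetilde G_{k+1,d}$.

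The main obstacle is the second identification: one must verify that simply discarding the columns of the full HNF that correspond to degree-$>d$ polynomials does not disturb the normal-form property of what remains, and that the discarded columns do not contribute any low-degree elements beyond those already captured by $\widetilde G_{k+1,d}$. Both points rest on the strict monotonicity of pivot positions in a PHNF together with Lemma \ref{claim-1}, so once these structural features are pinned down the argument reduces to the uniqueness of the HNF over $\Z$.
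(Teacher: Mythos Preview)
Your proof is correct and follows essentially the same route as the paper: induction on $k$, use Lemma~\ref{claim-1} together with the inductive hypothesis to show $(P_{k+1})_\Z$ equals the degree-$\le d$ part of $(\widetilde P_{k+1})_\Z$, and then conclude by uniqueness of reduced $\Z$-Gr\"obner bases via Lemma~\ref{lm-phnf}. The paper's write-up is slightly more direct in that it bypasses your HNF sub-block discussion and simply observes that $\widetilde G_{i,d}$, being an initial segment (by degree) of the reduced $\Z$-Gr\"obner basis $\widetilde G_i$, is itself a reduced $\Z$-Gr\"obner basis of the $\Z$-module it spans --- and the containment $(G_i)_\Z\subset(\widetilde G_{i,d})_\Z$ follows at once from $\deg(G_i)\le d$ and $\widetilde G_i$ being a $\Z$-Gr\"obner basis, which makes your ``main obstacle'' paragraph unnecessary.
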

\begin{proof}
This lemma is obviously valid for $k=0$.
Suppose it is valid for $k=i-1$, that is, $\widetilde{G}_{i-1,d}={G}_{i-1}$.
Since $\deg(G_i)\le d$, $G_i\subset (\widetilde{G}_i)_{\Z}$, and $\widetilde{G}_i$ is a $\Z$-Gr\"obner basis, we have $(G_i)_{\Z}=(P_i)_{\Z}\subset (\widetilde{G}_{i,d})_{\Z}$.
By \cref{claim-1} and the induction hypothesis,  we have $\widetilde{G}_{i,d}\subset (\widetilde{G}_{i-1,d},x\widetilde{G}_{i-1,d-1})_{\Z}=({G}_{i-1},x{G}_{i-1,d-1})_{\Z}=(P_i)_{\Z}$.
Hence,  $(G_i)_{\Z}= (\widetilde{G}_{i,d})_{\Z}$.
By Lemma \ref{lm-phnf}, $G_i$ and $\widetilde{G}_{i,d}$ are reduced $\Z$-Gr\"obner bases.
Hence  $G_i=\widetilde{G}_{i,d}$.
\end{proof}

\begin{lemma}\label{claim-12}
Suppose that Step 2 of Algorithm {\rm GHNF}$_1$ terminates at the $k$-th loop.
Then $(\widetilde{G}_{i})_{\Z} \subset (G_{k}$, $xg_{k,d},\ldots,x^{i}g_{k,d})_{\Z}$
for $i\ge 0$.
\end{lemma}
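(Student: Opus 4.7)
The plan is to prove the inclusion by induction on $i$, leveraging (i) the identity $G_k = \widetilde{G}_{k,d}$ from Lemma \ref{claim-11}, (ii) the fact that $\hf$ preserves $\Z$-modules (Lemma \ref{lm-phnf}), and (iii) the termination hypothesis $G_k = G_{k-1}$.

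For the base case $i=0$, I would observe that $\widetilde{G}_0 = G_0$ by definition, and that the partial prolongations only enlarge the $\Z$-module: by Lemma \ref{lm-phnf} applied to each iteration, $(G_{j})_{\Z} = (G_{j-1}, xG_{j-1,d-1})_{\Z} \supset (G_{j-1})_{\Z}$, so $(\widetilde{G}_0)_{\Z} = (G_0)_{\Z} \subset (G_k)_{\Z}$.

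For the inductive step, assume $(\widetilde{G}_{i-1})_{\Z} \subset (G_k, xg_{k,d}, \ldots, x^{i-1}g_{k,d})_{\Z}$. Since $\widetilde{G}_i = \hf(\widetilde{P}_i) = \hf([\widetilde{G}_{i-1}, x\widetilde{G}_{i-1}])$, Lemma \ref{lm-phnf} gives $(\widetilde{G}_i)_{\Z} = (\widetilde{G}_{i-1}, x\widetilde{G}_{i-1})_{\Z}$. Multiplying the induction hypothesis by $x$ yields $(x\widetilde{G}_{i-1})_{\Z} \subset (xG_k, x^2g_{k,d}, \ldots, x^i g_{k,d})_{\Z}$. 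Combining, it suffices to absorb $xG_k$ into $(G_k, xg_{k,d})_{\Z}$.

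The main step—and the place where termination is essential—is to show $(xG_k)_{\Z} \subset (G_k, xg_{k,d})_{\Z}$. For the unique degree-$d$ element $g_{k,d}\in G_k$, the term $xg_{k,d}$ is already listed as a generator. For any $g\in G_{k,d-1}$, the termination condition $G_k = G_{k-1}$ forces $G_{k,d-1} = G_{k-1,d-1}$, and then by Lemma \ref{lm-phnf} applied to $G_k = \hf(P_k)$, we have $(P_k)_{\Z} = (G_{k-1}, xG_{k-1,d-1})_{\Z} = (G_k)_{\Z}$, so $xg \in (xG_{k-1,d-1})_{\Z} \subset (G_k)_{\Z}$. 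Chaining the inclusions then gives $(\widetilde{G}_i)_{\Z} \subset (G_k, xg_{k,d}, \ldots, x^i g_{k,d})_{\Z}$, completing the induction. The only subtlety to watch is distinguishing $xg_{k,d}$ (of degree $d+1$, which must be added as an explicit generator) from $xG_{k,d-1}$ (of degree $\le d$, absorbed by termination); once this is set up cleanly, the rest is bookkeeping.
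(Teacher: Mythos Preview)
Your proof is correct and follows essentially the same route as the paper's: induction on $i$, with the inductive step reducing to the key observation that termination forces $(xG_{k,d-1})_{\Z}\subset (G_k)_{\Z}$, while $xg_{k,d}$ is handled as an explicit new generator. The paper phrases the same argument element-wise (writing each $f\in(\widetilde{G}_t)_{\Z}$ as $f_0+\sum_j c_jx^jg_{k,d}$ with $f_0\in(G_{k,d-1})_{\Z}$ and then multiplying by $x$), whereas you work directly at the level of module inclusions; the content is identical. One minor note: you list Lemma~\ref{claim-11} in your plan, but neither your argument nor the paper's actually uses it here.
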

\begin{proof}
We have $G_{k}=G_{k+1}=\cdots$.
We prove the lemma by induction on $i$. The lemma is valid for $i=0$,
since $\widetilde{G}_{0}={G}_{0}\subset (G_{k})_{\Z}$.
Suppose that the lemma is valid for $i=t$.
From \cref{eq-lpf1}, $(\widetilde{G}_{t+1})_{\Z} = (\widetilde{G}_{t},x\widetilde{G}_{t})_{\Z}$.
By the induction hypothesis, $\widetilde{G}_{t}\subset  (G_{k}$, $xg_{k,d},\ldots,x^{t}g_{k,d})_{\Z}$. Then any $f\in\widetilde{G}_{t}$ can be written as
$f= f_0 + \sum_{j=0}^t c_j x^j g_{k,d}$, where $f_0\in G_{k,d-1}$ and $c_j\in\Z$.
Then
$xf=xf_0 + \sum_{j=0}^t c_i x^{i+1} g_{k,d}$.
Since $xf_0\in (xG_{k,d-1})_{\Z}\subset(G_{k+1})_{\Z}=(G_{k})_{\Z}$,
we have $xf\in(G_k,xg_{k,d},\ldots,x^{t+1}g_{k,d})_{\Z}$ and the lemma is proved.
\end{proof}

\begin{theorem}\label{th-co1}
Algorithm $\gHNF_1$ is correct. Furthermore,
Step 2 of Algorithm {\rm GHNF}$_1$ terminates in at most $D+d$ loops,
where $D=73d^5(h+\log d+1)$.
\end{theorem}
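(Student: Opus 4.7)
The plan is to split the argument into correctness and termination, leveraging Lemmas \ref{claim-1}, \ref{claim-11}, \ref{claim-12} together with the degree bound of Theorem \ref{degreeboundtranmat}. For correctness, when the loop exits at iteration $k$ (so $G_k=G_{k-1}$), I would first show that $G_k$ is a $\Zx$-Gr\"obner basis of $(F)_{\Zx}$. Lemma \ref{claim-11} identifies $G_k$ with $\widetilde{G}_{k,d}$, and Lemma \ref{claim-12} gives $(\widetilde{G}_i)_\Z\subset(G_k)_{\Zx}$ for every $i\ge 0$ (since the $x^jg_{k,d}$ appearing in its statement are themselves in $(G_k)_{\Zx}$). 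Letting $i\to\infty$ yields $(F)_{\Zx}=(G_k)_{\Zx}$. To verify the divisibility condition, pick any $f\in(F)_{\Zx}$ and choose $N$ large enough that $f\in(\widetilde{G}_N)_\Z$; the $\Z$-Gr\"obner property of $\widetilde{G}_N$ yields some $\widetilde{g}_{N,j}$ with $\lt(\widetilde{g}_{N,j})\mid_\Z\lt(f)$, forcing $j=\deg f$. If $j\le d$, then $\widetilde{g}_{N,j}\in\widetilde{G}_{N,d}=G_k$ does the job; if $j>d$, the divisor chain $\lc(\widetilde{g}_{N,d})\mid\lc(\widetilde{g}_{N,d+1})\mid\cdots$ of Lemma \ref{claim-1} forces $\lc(g_{k,d})\mid\lc(f)$, so $\lt(g_{k,d})$ $\Zx$-divides $\lt(f)$. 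Then Step 3 extracts the reduced Gr\"obner basis: by Proposition \ref{GB form}, a reduced basis in $\Zx$ has strictly increasing degrees (automatic from HNF) and strictly descending leading coefficients, so the $g_j$ with $\lc(g_{j-1})=\lc(g_j)$ are exactly the $\Zx$-redundant entries, which Step 3 removes, while the HNF normalization ensures the reducedness of the remaining terms.

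For termination, Theorem \ref{degreeboundtranmat} with $n=1$ produces $U\in\Zx^{m\times s}$ with $R=FU$ and $\deg U\le D=73d^5(h+1+\log d)$. Expanding each column of $U$ monomially shows that every $r_i\in R$ lies in $(F,xF,\ldots,x^DF)_\Z=(\widetilde{G}_D)_\Z$; since $\deg r_i\le d$ while distinct elements of $\widetilde{G}_D$ have distinct leading monomials, the cancellation argument from the proof of Lemma \ref{claim-11} gives $r_i\in(\widetilde{G}_{D,d})_\Z=(G_D)_\Z$. Hence $(F)_{\Zx}=(R)_{\Zx}\subset(G_D)_{\Zx}$, and by the same leading-coefficient lifting used for correctness, $G_D$ is already a $\Zx$-Gr\"obner basis of $(F)_{\Zx}$. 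I would then unfold the recursion $(G_{k+1})_\Z\supset(G_k\cup xG_{k,d-1})_\Z$ by induction on $j$ to see that $(G_{D+j})_\Z$ contains every monomial shift $x^ig$ with $g\in G_D$, $i\le j$, and $\deg(x^ig)\le d$. By $j=d$ every such shift is present, and using the HNF staircase of $G_D$ together with the divisor chain of Lemma \ref{claim-1}, every degree-$\le d$ element of $(F)_{\Zx}$ reduces to $0$ by $G_D$ with only $\Z$-coefficients on these pre-shifted generators. Consequently $(G_{D+d-1})_\Z=(F)_{\Zx}\cap\{f:\deg f\le d\}$, the next partial prolongation adds nothing, and the loop halts within $D+d$ iterations.

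The hardest step is precisely this final saturation argument: once $G_D$ is a $\Zx$-Gr\"obner basis, showing that $d$ further partial prolongations cause $(G_k)_\Z$ to absorb every degree-$\le d$ member of $(F)_{\Zx}$. The delicate point is that $\Zx$-reduction in principle uses monomial multipliers of arbitrary degree, so one must argue that, restricted to reducing a polynomial of degree $\le d$ via some $g\in G_D$, the multiplier can be chosen of degree $\le d-\deg g$; the lower-degree remainders then recurse on strictly smaller degree, and the induction bottoms out at degree $0$. The dense staircase of leading monomials produced by HNF on $G_D$, combined with the ascending divisor chain of leading coefficients from Lemma \ref{claim-1}, is what makes this degree control possible and pins the count of extra loops to $d$.
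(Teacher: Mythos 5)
Your proof follows the paper's skeleton for most of the argument: correctness is obtained exactly as in the paper from Lemmas \ref{claim-11} and \ref{claim-12} (your variant of verifying Definition \ref{GBdef} directly via the divisor chain of Lemma \ref{claim-1}, rather than via reduction to zero against $\{G_k,xg_{k,d},\ldots\}$, is an equivalent reformulation), and the fact that $G_D$ is already a Gr\"obner basis is derived, as in the paper, from Theorem \ref{degreeboundtranmat} together with $(\widetilde G_{D,d})_\Z=(G_D)_\Z$. Where you genuinely diverge is the count of extra loops after loop $D$: the paper argues that once $G_k$ is a Gr\"obner basis, at least one more element of the degree staircase becomes stable per loop (starting from the lowest-degree element, which is automatically stable), so at most $d$ further loops occur; you instead argue by saturation, showing $(G_{D+j})_\Z$ absorbs all shifts $x^ig$, $g\in G_D$, $i\le j$, $\deg(x^ig)\le d$, and that the degree-bounded reduction by $G_D$ (with multipliers of degree $\le d-\deg g$) identifies the saturated module with $(F)_{\Zx}\cap\{\deg\le d\}$, after which the prolongation is idle. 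Your saturation argument is sound and arguably cleaner, but note two things. First, there is an internal off-by-one: you establish that all needed shifts are present only at $j=d$, yet then assert $(G_{D+d-1})_\Z=(F)_{\Zx}\cap\{\deg\le d\}$; if $G_D$ contains a degree-$0$ element one really needs the shift $x^d g$, which by your induction only enters $(G_{D+d})_\Z$, so termination is detected at loop $D+d+1$, one more than claimed. (The paper's own stabilization count is comparably loose, and the discrepancy is irrelevant to the complexity bounds, but it should be acknowledged or repaired, e.g.\ by the paper's observation that the bottom element is already stable at the transition $D\to D+1$.) Second, your justification of Step 3 via Proposition \ref{GB form} correctly identifies the redundant columns, but the claim that ``the HNF normalization ensures the reducedness of the remaining terms'' is only guaranteed at the pivot degrees; the paper glosses over the same point, so this is not a new gap you have introduced.
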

\begin{proof}
Suppose Step 2 of the algorithm terminates in the $k$-th loop.
Then, $G_{k}=G_{k+1}=\cdots$. We will show that $G_k$ is a Gr\"obner
basis of $(F)_{\Zx}$.
By \cref{eq-lpf2},  $(F)_{\Zx}= (G_k)_{\Zx} = (\widetilde{G}_k)_{\Zx}$.
To show that $G_k$ is a Gr\"obner basis, we will
prove that any $f\in (F)_{\Zx}$ can be reduced to zero by $G_k$.
By \cref{eq-lpf2}, there exists an integer $l$, such that $f\in (\widetilde{G}_l)_{\Z}$.
Since $(\widetilde{G}_i)_{\Z}\subset (\widetilde{G}_j)_{\Z}$ for $i < j$, we may assume that
$l \ge k$.
By \cref{claim-12}
$f\in (G_{k},xg_{k,d},\ldots,x^lg_{k,d})_{\Z}$.
Since $\{G_{k},xg_{k,d},\ldots,x^lg_{k,d}\}$ is a $\Z$-Gr\"obner basis,
we have $\overline{f}^{G_k}=0$ and $G_k$ is a Gr\"obner basis of $(F)_{\Zx}$.
Step 3 of the algorithm picks a reduced Gr\"obner basis, or the GHNF of $F$, from $G_k$.

We now prove the termination of the algorithm.
By Theorem \ref{degreeboundtranmat} and \cref{eq-lpf2},
$\widetilde{G}_D$ contains the GHNF of $F$ and hence a Gr\"obner basis of $(F)_{\Zx}$ by Theorem \ref{th-gg}.
By Lemma \ref{heightgeneral}, the reduced Gr\"obner basis of $(F)_{\Zx}$
has degree $\le d$.
By Lemma \ref{claim-11}, $G_D=\widetilde{G}_{D,d}$ contains the reduced Gr\"obner basis of $(F)_{\Zx}$.
From Example \ref{ex-11},  the termination condition may not be satisfied immediately
even if $G_i$ is a Gr\"obner basis of $(F)_{\Zx}$.
We will show that Step 2 will run  at most $d$ extra loops after
$G_k$ is a Gr\"obner basis.
Suppose $G_k=[g_{k,s_k},\ldots,g_{k,d}]$ is already a \gb~ of $(F)_{\Zx}$ for some $k\le D$ and
suppose $H_{k,1}=[g_{k,s_k},\ldots,g_{k,p}]$ such that $p$ is the maximal integer satisfying $g_{k,p}= g_{k+1,p}$.
Then, $H_{k,1}$ is also a Gr\"obner basis of $(F)_{\Zx}$.
If $p=d$, then, $H_{k,1}=G_k$, clearly $G_k= G_{k+1}$ and Step 2 terminates at $(k+1)$-th loop.
Otherwise, $p<d$ and $H_{k,1}\subset G_l$ for $l \ge k$.
%Let $h_{k,p}=\overline{xg_{k,p-1}}^{H_{k,1}}$ be the normal form of $xg_{k,p-1}$ w.r.t $H_{k,1}$
Let $h_{k,p+1}$ be the reminder of $xg_{k,p}$ reduced by $H_{k,1}$ over $\Z$
and $H_{k,2}=[g_{k,s_k},\ldots,g_{k,p},h_{k,p+1}]$.
Then $\lt(h_{k,p+1})=\lt(xg_{k,p})$ and $\CM(H_{k,2})$ is an HNF.
Since $h_{k,p+1}$ is the minimal element in $(F)$ with degree $p+1$
and reduced w.r.t $H_{k,1}$, we have $g_{k+l,p+1}=h_{k,p+1}$ for $l>1$,
or equivalently  $H_{k,2}\subset G_l$ for $l\ge k+1$.
Similarly, we can prove that after each loop of Step 2, at least one
more element of $G_{l}$ will become stable.
As a consequence, Step 2 will terminate at most $D+d$ loops.
\end{proof}

\begin{theorem}\label{th-com1}
The bit size complexity of Algorithm {\rm GHNF}$_1$ is $O(d^{11+\theta+\varepsilon}(h+\log d)^{2+\varepsilon}+d^{7+\varepsilon}(h+\log d)$ $B(d^6(h+\log d)))$,
 where  $\varepsilon>0$ is any sufficiently small number.
\end{theorem}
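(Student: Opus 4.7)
The plan is to express the total bit cost as (number of loop iterations)$\,\times\,$(bit cost of one integer HNF) and then plug in the bounds from Theorems~\ref{th-co1} and~\ref{th-hnfc}. Theorem~\ref{th-co1} already supplies the loop count: Step~2 of the algorithm terminates in at most $D+d = O(d^5(h+\log d))$ iterations, where $D=73d^5(h+\log d+1)$. So the remaining work is to estimate the dimensions and entry-heights of the integer matrix $\CM(P_k)$ whose HNF is computed at each iteration, and then to invoke Theorem~\ref{th-hnfc}.

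For the dimensions of $\CM(P_k)$, observe that $G_{k-1}$ is a reduced HNF of an integer matrix with $d+1$ rows, hence has at most $d+1$ columns, so $|P_k|=|G_{k-1}|+|G_{k-1,d-1}|\le 2d+1$ and $\CM(P_k)\in\Z^{(d+1)\times O(d)}$ has rank $r\le d+1$. For the entry-heights, I would prove that every pivot of $G_k$ divides a pivot of $G_{k-1}$ at the same row or the immediately preceding row (because the only new columns of $P_k$ relative to $G_{k-1}$ are the monomial shifts $xG_{k-1,d-1}$), and iterate to conclude that every pivot of $G_k$ divides some pivot of $G_0=\hf(F)$. Since reduced-HNF entries are bounded by their column pivots, this forces $\height(G_k)\le\height(G_0)$; Theorem~\ref{th-hnfc} applied to $G_0$ then gives $\height(G_0)=O(d(h+\log d))$, and Lemma~\ref{heightgeneral} supplies a complementary bound. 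Hence $\height(\CM(P_k))$ is polynomial in $d$ and $h$, and so is $\log\beta = r(\tfrac12\log r+\height(\CM(P_k)))$.

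I would then apply Theorem~\ref{th-hnfc} with $m=O(d)$, $n=d+1$, $r\le d+1$, and the above $\log\beta$ to get a per-iteration cost
\[
O\!\bigl(mn\,r^{\theta-2}\log\beta\,M(\log\log\beta)/\log\log\beta \;+\; mn\,B(\log\beta)\log r\bigr).
\]
Multiplying through by $D+d=O(d^5(h+\log d))$ loops distributes over the two summands and produces the two stated terms: the dense-arithmetic term $d^{11+\theta+\varepsilon}(h+\log d)^{2+\varepsilon}$, assembled from $d^2$ for $mn$, $d^{\theta-2}$ for $r^{\theta-2}$, a factor $d^{O(1)}(h+\log d)$ for $\log\beta$, and the $d^5(h+\log d)$ loop count; and the fast-integer-arithmetic term $d^{7+\varepsilon}(h+\log d)\,B(d^6(h+\log d))$, in which the argument $d^6(h+\log d)$ of $B$ is $\log\beta$ and the prefactor $d^7(h+\log d)$ bundles $mn\log r$ with the loop count, absorbing polylogarithmic factors into the $\varepsilon$ slack.

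The main obstacle is the height control. A naive per-iteration bound would let $\height(G_k)$ grow by a factor of $r\approx d$ at each HNF call and so blow up exponentially after $D+d$ loops, destroying the polynomial complexity. The pivot-divisibility observation above (together with Lemma~\ref{heightgeneral} to identify the limiting pivots) is what keeps $\log\beta$ polynomial in $d$ and $h$ uniformly over all iterations. Once that is in place, the rest is routine bookkeeping of the $M(\cdot)$ and $B(\cdot)$ factors into $\varepsilon$ and assembling the exponents.
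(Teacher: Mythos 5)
Your decomposition of the cost (number of loops from Theorem~\ref{th-co1} times one integer-HNF call per loop, with $\CM(P_k)$ of size $(d+1)\times O(d)$ and rank $\le d+1$) is exactly the paper's, and you correctly identify that everything hinges on keeping $\height(\CM(P_k))$ polynomial uniformly in $k$. But your height-control argument has a genuine gap. The step ``reduced-HNF entries are bounded by their column pivots'' is not what the HNF definition gives you: Definition~\ref{def-hnf} only forces the entries in a \emph{pivot row} to be reduced modulo that row's pivot. An entry of $G_k$ sitting in a row (degree) for which $G_k$ contains no pivot is not reduced modulo anything, and your pivot-divisibility chain says nothing about it. Such rows genuinely occur (e.g.\ whenever $F$ has fewer than $d+1$ distinct leading degrees), and the non-pivot entries produced by the column operations in one HNF call can exceed $\height(G_{k-1})$ by roughly a factor of the number of columns. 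So the conclusion $\height(G_k)\le\height(G_0)=O(d(h+\log d))$ does not follow, and without it you are back to the exponential per-iteration compounding you yourself flagged as the main obstacle. (Lemma~\ref{heightgeneral} does not rescue this either: it bounds the height of the final reduced Gr\"obner basis, not of the intermediate $G_k$, which may contain columns that never survive to the output.) There is also an internal inconsistency: if your height bound were correct you would get $\log\beta=O(d^2(h+\log d))$, yet in the final assembly you plug in $\log\beta=d^6(h+\log d)$ to match the stated theorem.

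The paper closes this gap differently, and this is the one idea your proposal is missing: by Lemma~\ref{claim-11} and \cref{eq-lpf2}, $G_k=\widetilde G_{k,d}$ is the degree-$\le d$ part of $\hf(F,xF,\ldots,x^kF)$, i.e.\ a submatrix of the HNF of the \emph{single} integer matrix $\CM(\cup_{i=0}^k x^iF)$, whose entries still have height $h$ and whose rank is at most $k+d+1$. One application of Theorem~\ref{th-hnfc} to that matrix gives $\height(G_k)\le (k+d)(\tfrac12\log(k+d)+h)=O(d^5(h+\log d)^2)$ uniformly over $k\le D+d$, with no iteration of per-step bounds. Feeding this $h_1$ into $\log\beta=r(\tfrac12\log r+h_1)$ and multiplying by the loop count then yields the stated two terms. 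You should replace your divisibility argument by this ``canonical object of a single large matrix'' argument (or prove a correct uniform bound on the non-pivot rows by some other means).
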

\begin{proof}
The computationally dominant step of the algorithm is Step 2 and we will estimate the complexity of this step.
In the $k$-th loop of Step 2, we need to compute the HNF of
the coefficient matrix $C_k$ of $P_k$.
It is clear that $C_k$ is of size $(d+1)\times s$ for some $s\le 2d+1$.
Also note that the height of $C_k$ is the same as that of $\CM(G_k)$.
By Lemma \ref{claim-11} and \cref{eq-lpf2}, $\CM(G_k)$ is part of
the HNF of $\CM(\cup_{i=0}^k x^k F)$.
By Theorem \ref{th-hnfc}, the height of  $C_k$  is
$\le (k+d)(\frac{1}{2}\log (k+d)+h)\le h_1=(D+2d)(\frac{1}{2}\log (D+2d)+h)=O(d^5(h+\log d)^2)$,
since the loop will terminate at most $D+d$ steps.
Let $n=d+1,t=2d+1,r=d+1$, then the $\log \beta$ in Theorem \ref{th-hnfc} is $\log \beta=r(\frac{1}{2}\log r+h_1)=O(d^6(h+\log d))$.
To simplify the formula for the complexity bound,
we replace $O(\log^2(s)\log\log(s)\log\log\log(s))$ by $O(s^{\varepsilon})$ for
a sufficiently small number $\varepsilon$.
Hence, the complexity for each loop is
\begin{align*}
  &~~~~O(tnr^{\theta-2}(\log \beta)M(\log\log \beta)/\log\log \beta+kn\log rB(\log \beta))\\
  &\le O(d^{6+\theta+\varepsilon}(h+\log d)^{1+\varepsilon}+d^{2+\varepsilon}B(d^6(h+\log d))) \text{ for any }\varepsilon>0.
\end{align*}
By Theorem \ref{th-co1}, the number of loops is bounded by $D+d$.
So the worst complexity of the Algorithm {\rm GHNF}$_1$ is $(D+d)O(d^{6+\theta+\varepsilon}(h+\log d)^{1+\varepsilon}+d^{2+\varepsilon}B(d^6(h+\log d)))$ $=
O(d^{11+\theta+\varepsilon}(h+\log d)^{2+\varepsilon}+d^{7+\varepsilon}(h+\log d)B(d^6(h+\log d)))$.
\end{proof}

In Theorem \ref{th-com1}, setting $\theta=2.376$ and $\varepsilon=0.004$ and
noticing that $d^{7+\varepsilon}(h+\log d)B(d^2(h+d)))$ can be omitted now
comparing to the first term, we have

\begin{coro}\label{cor-com1}
The bit size complexity of Algorithm {\rm GHNF}$_1$ is $O(d^{13.38}(h+\log d)^{2.004})$.
%or $O(d^{13.38}(h+\log d)^{2.004})$.
\end{coro}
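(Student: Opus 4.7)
The plan is simply to specialize Theorem \ref{th-com1}. The statement there gives the bit-size cost
$$
O\!\bigl(d^{11+\theta+\varepsilon}(h+\log d)^{2+\varepsilon} + d^{7+\varepsilon}(h+\log d)\,B\!\bigl(d^6(h+\log d)\bigr)\bigr),
$$
so I would plug in the current best matrix-multiplication exponent $\theta = 2.376$ together with $\varepsilon = 0.004$. The first summand becomes $d^{11+2.376+0.004}(h+\log d)^{2+0.004} = d^{13.38}(h+\log d)^{2.004}$, which is exactly the bound claimed in the corollary.

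The only real task is to check that the second summand $d^{7+\varepsilon}(h+\log d)\,B(d^6(h+\log d))$ is absorbed by the first. Using the definition $B(k) = M(k)\log k = O(k(\log k)^2 \log\log k)$ from the paragraph preceding Theorem \ref{th-hnfc}, one has
$$
B\!\bigl(d^6(h+\log d)\bigr) = O\!\bigl(d^6(h+\log d)\bigl(\log(d^6(h+\log d))\bigr)^2 \log\log(d^6(h+\log d))\bigr),
$$
so the second summand is $O\!\bigl(d^{13+\varepsilon}(h+\log d)^{2}\bigr)$ up to polylogarithmic factors in $d$ and $h+\log d$.

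Since such polylog factors grow slower than $d^{0.38}$ (and slower than $(h+\log d)^{0.004}$ for any fixed positive $\varepsilon$), they are dominated by the gap between $d^{13+\varepsilon}(h+\log d)^{2}$ and $d^{13.38}(h+\log d)^{2.004}$. Hence the second term is swallowed by the first, leaving the total complexity at $O(d^{13.38}(h+\log d)^{2.004})$. The only place to be careful is the polylog bookkeeping in this last absorption step, but it is routine once one notes that for every fixed $\delta > 0$ and large enough $d$ we have $(\log d)^c = O(d^{\delta})$ for any constant $c$, and similarly for $h+\log d$.
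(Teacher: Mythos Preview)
Your proposal is correct and matches the paper's approach exactly: the paper also specializes Theorem~\ref{th-com1} with $\theta=2.376$ and $\varepsilon=0.004$, and simply notes that the second summand can be omitted in comparison to the first. Your slightly more explicit justification of that absorption step (via the polylogarithmic growth of $B$) is fine and even a bit more careful than the paper's one-line remark.
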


\begin{remark}\label{rem-com1}
The number $m$ in the input of Algorithm {\rm GHNF}$_1$ is not in the complexity bound.
The reason is that the size of the polynomial vector $P_k$ in Step 2
of the algorithm depends on $d$ only.
Only the complexity of Step 1 depends on $m$ and
by Theorem \ref{th-hnfc}, the complexity of
Step 1 is $O^{\sim}(md^{\theta+1}(h+d))$ which is
comparable to the complexity bound  in
Theorem \ref{th-com1} only when $m=O^{\sim}(d^{10})$. We therefore omit this term.
\end{remark}

Finally, we prove a property of the syzygy modules of $\Z[x]$ ideals, which will be used in the next section.
In Algorithm {\rm GHNF}$_1$, for any $k\ge 1$, let $v_{k-1}=\#(G_{k-1})$ be the number of columns of $G_{k-1}$. Then $u_k=\#(P_k)=2v_{k-1}-1$.
Let
$X_k=\left(
 \begin{array}{cccccc}
1& x &      &   &   &   \\
 &   &   \ddots &   &   &   \\
 &   &     & 1& x &   \\
 &   &     &   &   & 1 \\
 \end{array}
 \right)_{v_{k-1}\times u_k}$.
Then $P_k=G_{k-1}X_k = \X_d M_k$, where $M_k=\CM(P_k)$.
Let $[\0,H_k]=M_kU_k$ be the HNF of $M_k$, where $U_k=[U_{k,1},U_{k,2}]$ is a unimodular matrix satisfying $\0=M_kU_{k,1},~H_k=M_kU_{k,2}$.
By \cref{eq-ph1},
$$G_{k}=P_kU_{k,2}=FU_{0,2}X_1\cdots U_{k-1,2}X_kU_{k,2},\quad
  P_k=FU_{0,2}X_1\cdots U_{k-1,2}X_k,$$
where $G_0=\hf(F)=FU_{0,2}$.
For any $k\ge 1$, we define a map
\begin{align*}
\varphi_k: \Z[x]^{u_k}& \rightarrow \Z[x]^m\\
\u \quad & \mapsto U_{0,2}X_1\cdots U_{k-1,2}X_k\u.
\end{align*}
In particular, let $\varphi_0:  \Z[x]^m \rightarrow \Z[x]^m$ be the identity map.
The following result shows how to find a set of generators for the syzygy module $\Syz(F)$.

\begin{prop}\label{le-syz}
For any $\u\in \Syz(F)\subset\Zx^m$ and $\deg(\u)=l$, we have $\u\in (\bigcup_{k=0}^{l}\bigcup_{j=0}^{l-k}x^j\varphi_k(U_{k,1}))_{\Z}$.
Moreover, $\Syz(F) =(\bigcup_{k=0}^{d}\varphi_k(U_{k,1}))_{\Zx}$.
\end{prop}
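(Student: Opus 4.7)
The plan is to first prove the refined degree-bounded claim by induction on $l=\deg(\u)$, and then deduce the global $\Zx$-module statement from it.

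For the base case $l=0$: if $\u\in\Z^m$ with $F\u=0$, then $C\u=\0$ where $C=\CM(F)$, so by Lemma \ref{HNF over integer} applied to $[\0,H_0]=CU_0$, the columns of $U_{0,1}$ $\Z$-generate $\{v\in\Z^m:Cv=0\}$. Since $\varphi_0$ is the identity, $\u\in(U_{0,1})_\Z=(\varphi_0(U_{0,1}))_\Z$, which matches the target module ($k=0$, $j=0$).

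For the inductive step, suppose $\u\in\Syz(F)$ with $\deg(\u)=l\ge 1$ and the claim holds for syzygies of smaller degree. The strategy is to find $\u^*\in(\bigcup_{k=0}^{l}x^{l-k}\varphi_k(U_{k,1}))_\Z$ matching the top-degree coefficient of $\u$, so that $\u-\u^*\in\Syz(F)$ has degree strictly less than $l$; applying the induction hypothesis to $\u-\u^*$ then finishes the step, since the generators used for $\u-\u^*$ form a subset of those used for $\u$. Existence of such $\u^*$ rests on the identity $F\varphi_k(\v)=P_k\v$: any column $\v$ of $U_{k,1}$ produces a syzygy $\varphi_k(\v)$ of degree at most $k$, and $x^{l-k}\varphi_k(\v)$ is a syzygy of degree at most $l$. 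Writing $\u=\sum_{i=0}^l\u_ix^i$ with $\u_i\in\Z^m$, the relation $F\u=0$ identifies $(\u_0,\ldots,\u_l)^\tau$ with a $\Z$-syzygy of $\widetilde{M}_l:=\CM([F,xF,\ldots,x^lF])$, and this $\Z$-syzygy must be decomposed into contributions from the $U_{k,1}$'s.

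The main technical obstacle is the careful bookkeeping of the recursion $P_k=G_{k-1}X_k$, $G_k=P_kU_{k,2}$ through all $l$ layers: one must show that every $\Z$-syzygy of $\widetilde{M}_l$ pulls back to $\Z$-syzygies of the $M_k$ for $0\le k\le l$, with the degree shift tracked by powers of $x$. The cleanest way to organize this is a sub-induction on $k$ showing that any $\Z$-syzygy of $\widetilde{M}_k$ is a $\Z$-combination of $\Phi_k$-images of elements of $(U_{k,1})_\Z$ together with $x$-shifted lower-layer contributions. The essential structural fact here is the identity $(P_k)_\Z=(F,xF,\ldots,x^kF)_\Z\cap\{p:\deg p\le d\}$, which follows from Lemma \ref{claim-11} and the $\Z$-Gr\"obner basis property of $\widetilde{G}_{k,d}$; Lemma \ref{HNF over integer} is then applied at each layer to identify $(U_{k,1})_\Z$ with the $\Z$-syzygy module of $M_k$.

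For the second statement, the first claim yields $\u\in(\bigcup_{k=0}^{l}\bigcup_{j=0}^{l-k}x^j\varphi_k(U_{k,1}))_\Z\subseteq(\bigcup_{k=0}^{l}\varphi_k(U_{k,1}))_{\Zx}$, so for $l\le d$ the inclusion into $(\bigcup_{k=0}^d\varphi_k(U_{k,1}))_{\Zx}$ is immediate. For $l>d$ one additionally needs that $\Syz(F)$ admits a $\Zx$-generating set of degree at most $d$: since the reduced Gr\"obner basis of $(F)_{\Zx}$ has degree at most $d$ by Lemma \ref{heightgeneral}, a Schreyer-style argument produces such a degree bound on a set of $\Zx$-generators of $\Syz(F)$, and applying the first claim to these generators places them inside $(\bigcup_{k=0}^d\varphi_k(U_{k,1}))_{\Zx}$. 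The reverse containment is immediate from $F\varphi_k(\v)=P_k\v=0$ for $\v\in U_{k,1}$.
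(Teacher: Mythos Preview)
Your outer induction on $l$ with top-coefficient matching has a real gap. You assert that one can find $\u^*\in(\bigcup_{k=0}^{l}x^{l-k}\varphi_k(U_{k,1}))_\Z$ whose degree-$l$ coefficient equals $\u_l$, but you give no argument for why such $\u^*$ exists; the constraint on $\u_l$ coming from $F\u=0$ is only that $F_d\u_l=0$ (the top coefficient of $F$), which says nothing directly about the span of top coefficients of the $x^{l-k}\varphi_k(U_{k,1})$. Your ``sub-induction on $k$'' paragraph acknowledges the difficulty but does not supply the mechanism that makes it work.

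The paper's proof is organized differently and contains the missing idea. It fixes $l$ and inducts on a layer index $q$ from $0$ to $l$: at each step one has $\u_q'\in\Zx^{u_q}$ with $\deg(\u_q')\le l-q$ and $P_q\u_q'=0$. The unimodular matrix $U_q$ splits $\u_q'$ into a part in $(U_{q,1})_\Z$ (absorbed into the target module via $\varphi_q$) and a residual $\u_q$ satisfying $G_q\u_q=0$. The crucial structural point---absent from your sketch---is that $G_q=[g_{q,s_q},\ldots,g_{q,d}]$ has strictly increasing degrees with a \emph{unique} top element $g_{q,d}$ of degree $d$; hence in any relation $G_q\u_q=0$ the last component $u_{v_q}$ must have degree $\le l-q-1$. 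Writing each other component as $u_{j}=u_{j,0}+xp_j$ and using the $1,x$ pattern of $X_{q+1}$ then gives $\u_q=X_{q+1}\u_{q+1}'$ with $\deg(\u_{q+1}')\le l-q-1$ and $P_{q+1}\u_{q+1}'=G_q\u_q=0$, completing the inductive step. This degree-drop argument is what replaces your unproved ``matching'' step.

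For the second statement, the paper does not use a Schreyer-style bound: it invokes Theorem~\ref{homogeneousZx} (the $n=1$ case) directly to get that $\Syz(F)$ is generated in degree $\le d$, then applies the first statement to those generators. Your route via Lemma~\ref{heightgeneral} plus Schreyer is more circuitous and would require additional work to bound the syzygy degrees of a $\Zx$-Gr\"obner basis and then transport back to $\Syz(F)$.
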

\begin{proof}
By Theorem \ref{homogeneousZx}, $\Syz(F)$ can be generated by elements in $\Z[x]^m$ with degrees $\le d$.  We  need only to show the first statement.
Let $P_0=F$, $\u_0'=\u$.
% We prove the lemma by the induction on $k$.

Since $F\varphi_k(U_{k,1})=FU_{0,2}X_1\cdots U_{k-1,2}X_kU_{k,1}=
P_kU_{k,1}=\X_d M_kU_{k,1}=\0$ for any $k\ge 0$, we have $\varphi_k(U_{k,1})\subset \Syz(F)$.
By Lemma \ref{HNF over integer}, the lemma is valid for $l=0$.
If $l>0$, it  suffices to show that, for any $0\le q\le l$,
there exists a $\u_q' \in \Z[x]^{{u_q}}$ with $\deg(\u_q')\le l-q$,
such that $\u=\varphi_q(\u_q')\mod$
 $(\bigcup_{k=0}^{q-1}\bigcup_{j=0}^{l-k}x^j\varphi_k(U_{k,1}))_{\Z}$.
In this case, $P_q\u_q'=FU_{0,2}X_1\cdots U_{q-1,2}X_q\u_q'=F\u=0$.
It is valid for $q=0$. Suppose it is also valid for $q=i$.
  Let $\u_{i}'\in \Z[x]^{v_{i}'}$ with $\deg( \u_{i}')\le l-i$, such that $\u=\varphi_{i}(\u_{i}')\mod (\bigcup_{k=0}^{i-1}\bigcup_{j=0}^{l-k}x^j\varphi_k(U_{k,1}))_{\Z}$ and  $P_{i}\u_{i}'=0$.
  Let $\u_{i}''=U_{i}^{-1}\u_{i}'=[u_1,\ldots,u_{v_{i}'-v_i},0,\ldots,0]^\tau+[0,\ldots,0,
  u_{v_{i}'-v_i+1},\ldots,u_{v_{i}'}]^\tau$.
   Then, $\u_{i}'=U_{i}\u_i''=U_{i,1}[u_1,\ldots,u_{v_{i}'-v_i}]^\tau+U_{i,2}[u_{v_{i}'-v_i+1},\ldots,u_{v_{i}'}]^\tau$.
   Take $\u_{i}=[u_{v_{i}'-v_i+1},\ldots,u_{v_{i}'}]^\tau$.
   Then, $\u_i'=U_{i,2}\u_i\mod (\bigcup_{j=0}^{l-i}x^jU_{i,1})_{\Z}$, $G_{i}\u_{i}=P_{i}U_{i,2}\u_{i}=P_{i}\u_{i}'=0$.

For simplicity, denote $\u_i$ as $\u_i=[u_1,\ldots,u_{v_i}]^\tau$. Then $\deg(u_{v_{i}})\le l-i-1$ and $\deg(u_j)\le l-i$ for $1\le j<v_i$.
Let $u_j=u_{j,0}+p_jx$ for $1\le j< v_i$, where $u_{j,0}\in \Z$ and $p_j\in\Zx$ and $\deg(p_j)\le \deg(u_j)-1\le l-i-1$ .
Take $\u_{i+1}'=[u_{1,0}, p_1, \ldots,u_{v_i-1,0},  p_{v_i-1},u_{v_i}]^\tau$.
Then $\deg(\u_{i+1}')\le l-i-1$ and $\u_i=X_{i+1}\u_{i+1}'$.
Hence, $\u=\varphi_{i+1}(\u_{i+1}') \mod
 (\bigcup_{k=0}^{i}\bigcup_{j=0}^{l-k}x^j\varphi_k(U_{k,1}))_{\Z}$ and $P_{i+1}\u_{i+1}'=G_iX_{i+1}\u_{i+1}'=G_i\u_i=0$. The lemma is proved.
\end{proof}

\subsection{The $\Z[x]^n$ case}
\label{sec-algn}
In this section, an algorithm will be given to compute the \gHNFs for $\Zx$-lattices in $\Zx^n$, which is a generalization of Algorithm GHNF$_1$.

In this section, we assume $F=(f_{ij})_{n\times m}=[{\bf f}_1,\ldots,{\bf f}_m]\in \Zx^{n\times m}$ and  denote by $m=\#(F)$ to be the number of columns of  $F$.
Let $v_i=\max_{1\leq j \leq m}({\rm \deg}(f_{ij})),i=1,\ldots,n$, and
\begin{equation}\label{X2}
\X_F=\left(
       \begin{array}{ccccccccccccc}
         1 & x & \ldots & x^{v_1} &  &  &  & &  &  &  &  &  \\
          &  &  &  & 1 & x & \ldots & x^{v_2} &  & & &  &  \\
           &  &  &  &  &  &  &  & \ddots & & &  &  \\
          &  &  &  & &  &  &  &  & 1 & x & \ldots & x^{v_n} \\
       \end{array}
     \right)_{n\times s},
     \end{equation}
where $s=\sum_{i=1}^{n}(v_i+1)$.
Then, $F$ can be written in the matrix form: $F=\X_F C$,
where $C\in \Z^{s\times m}$ is called the {\em coefficient matrix} of $F$ and
is denoted by $C=\CM(F)$.
Let $[\0,H]=C[U_1,U_2]$ be the HNF of $C$, where $H$ has no zero columns
and $\0=CU_1$ and $H=CU_2$.
Then $F_1=\X_F H$ is called the PHNF of $F$ and is denoted by
\begin{equation}\label{eq-hfn1}
F_1=\hf(F) = \X_F H = \X_F CU_2 = FU_2.
\end{equation}
%We will delete the zero polynomials from $\hf(F)$.

For a matrix $M\in\Zx^{n\times m}$,
denote by $M(\cdot,i)$ to be the $i$-th columns of $M$ and $M(i,\cdot)$ to be the $i$-th row of $M$. For  $\f\in\Zx^n$, denote by $\f(t)$ to be the polynomial in the $t$-th row of $\f$.
For $F=[{\bf f}_1,\ldots,{\bf f}_m]\in \Zx^{n\times m}$,
define the operation Divide as:
$$\hbox{Divide}(F)=(Q_1,\ldots,Q_n),$$
where either
$Q_t=[\f_{k_{t,1}},\ldots,\f_{k_{t,{s_t}}}]$ such that  $\f_{k_{t,i}}(t)\ne 0$,
and $\f_{k_{t,i}}(j)=0$ for $i=1,\ldots,s_t$ and $j>t$;
or $Q_t=\emptyset$ if such $\f_{k_{t,{s_t}}}$ do not exist.
Furthermore, it is always assumed that
$\deg(\f_{k_{t,1}}(t))\le\cdots\le\deg(\f_{k_{t,{s_t}}}(t))$.
For $d\in\N$, denote
$$Q_{t}^{(d)}=[\f_{k_{t,1}},\ldots,\f_{k_{t,s}}]$$
such that $\deg(\f_{k_{t,i}}(t)) \le d$ for $i=1,\ldots,s$ and
$\deg(\f_{k_{t,j}}(t)) > d$ for $j=s+1,\ldots,s_t$.
We now give the algorithm.

\begin{algorithm}[!htb]
\label{HNFn}
\caption{{\rm {\rm GHNF}$_n$$(F)$}}
\begin{algorithmic}[1]
\REQUIRE~~
$F\in \Zx^{n\times m}$ and with $d=\deg(F)$.
\ENSURE~~
 $G\in\Zx^{n\times s}$, which is the \gHNF of $F$.

 \STATE $G_0=\hf(F)$, $k=0$.
 \STATE(loop) \label{niteration}
 $k=k+1$;\\
 $(G_{k-1,1},\ldots,G_{k-1,n})=\Div(G_{k-1})$.\\
 $P_{k,t}=[G_{k-1,t}^{(d_{t})},xG_{k-1,t}^{(d_t-1)}],t=1,\ldots,n$, where $d_t=(n-t+1)d$.\\
 $P_k=[P_{k,1},\ldots,{P_{k,n}}]$.
 $G_k=\hf(P_k)$.\\
 If $G_k\ne G_{k-1}$, repeat Step 2.
 \STATE\label{npick}
 For $t$ from 1 to $n$, let $G_{k-1,t}=[\g_{k-1,1},\ldots,\g_{k-1,k_t}]$, $P_t=[\g_{k-1,1}]$;\\
 ~~for $j$ from $2$ to $k_t$,
 ~if {$\lc(\g_{k-1,j-1}(t))\ne \lc(\g_{k-1,j}(t))$},
 $P_t=P_t\cup \{\overline{\g_{k-1,j}}^{P_t}\}$.\\
% ~~Let $P_t=\overline{P_t}^{(P_1,\ldots,P_{t-1})}$.
 \STATE\label{nterminal} Return $G=[P_1,\ldots,P_n]$.
% \STATE Return $G$
 \end{algorithmic}
 \end{algorithm}
Note that the number $d_t$ is from Theorem \ref{degreeheightboundforGHNF}.
We give the following illustrative example.

\begin{example}
Let  $F=\left(
  \begin{array}{cc}
  6x+1 & 3x\\
  2x  & 5x+1\\
  \end{array}
  \right)$. We have $d=1$.\\
  Step 1: $G_0=\hf(F)=\left(
  \begin{array}{cc}
  24x+5 & -9x-2 \\
  -2  & x+1 \\
  \end{array}
  \right).$\\
$1$-th loop: %$k=1$,
  $(G_{0,1},G_{0,2})={\rm Divide}(G_0)$, where\\
  $~~~~~~G_{0,1}=[~],~G_{0,2}= G_0$. Also, we have $d_1=2,d_2=1$.\\
  $~~~~~~P_{1,1}=[~], P_{1,2}=\left(
  \begin{array}{ccc}
  24x+5 & 24x^2+5x & -9x-2 \\
  -2  & -2x  & x+1 \\
  \end{array}
  \right).$\\
  $~~~~~~P_1=[P_{1,1},{P_{1,2}}], G_1=\hf(P_1)=\left(
  \begin{array}{ccc}
  24x^2+11x+1 & -24x-5 & -9x-2 \\
  0  & 2  & x+1 \\
  \end{array}
  \right).$\\
%  $~~~~~~~~~~~~~~H\ne G$, let $H=G$.\\
  $2$-th loop: %$k=2$,
   $(G_{1,1},G_{1,2})={\rm Divide}(G_1)$, where\\
    $~~~~~~G_{1,1}=\left(\begin{array}{c}
  24x^2+11x+1 \\
  0\\
  \end{array}
  \right),
  G_{1,2}=\left(\begin{array}{cc}
  -24x-5 & -9x-2 \\
  2  & x+1 \\
  \end{array}
  \right).$\\
  $~~~~~~P_{2,1}=\left(\begin{array}{c}
  24x^2+11x+1 \\
  0\\
  \end{array}
  \right),
  P_{2,2}=\left(\begin{array}{ccc}
  -24x-5 & -24x^2-5x & -9x-2 \\
  2  & 2x  & x+1 \\
  \end{array}
  \right).$\\
  $~~~~~~P_2=[{P_{2,1}},P_{2,2}], G_2=\hf(P_2)=\left(
  \begin{array}{ccc}
  24x^2+11x+1 & -24x-5 & -9x-2 \\
  0  & 2  & x+1 \\
  \end{array}
  \right).$\\
  $~~~~~~G_2=G_1$ and the loop terminates.\\
In Step 3, we can easily get the \gHNF of $F$: $G=G_2$.
\end{example}

Similar to GHNF$_1$, we consider the following
``full prolongation''
\begin{eqnarray}\label{nfullpro}
&&  \widetilde{P}_{k,t}=[\widetilde{G}_{k-1,t},x \widetilde{G}_{k-1,t}],t=1,\ldots,n,\nonumber\\
&&  \widetilde{P}_k=[\widetilde{P}_{k,1},\ldots,\widetilde{P}_{k,n}]
     =[\widetilde{G}_{k-1},x\widetilde{G}_{k-1}],\label{eq-fpn}\\
&&  \widetilde{G}_k=\hf(\widetilde{P}_k),~
  [\widetilde{G}_{k,1},\ldots,\widetilde{G}_{k,n}]={\rm Divide}(\widetilde{G}_k),\nonumber
\end{eqnarray}
where $\widetilde{G}_0=G_0$.
Due to \cref{eq-hnf},  it is easy to check that
\begin{equation}\label{eqn-lpf2}
 (\widetilde{G}_k)_{\Z}=(\widetilde{P}_k)_{\Z}=(F\cup\{x^iF\,|\,i=1,\ldots,k\})_{\Z}.
\end{equation}

We define a new monomial order as follows:
$x^{\alpha}\e_i\prec' x^{\beta}\e_j$ if and only if $\alpha <\beta$ or $\alpha=\beta$ and $i<j$.
Similar to the order $\prec$, the order $\prec'$ can be extended to the polynomial vectors of $\Z[x]^n$.
Moreover, the S-vector of $\f,\g\in \Z[x]^m$ is the same as \cref{eq-svec}.
A nice property of the order $\prec'$ is: if $\max(\deg(\f),\deg(\g))\le d$, then $\deg(S_{\prec'}(\f,\g))\le d$.
 We can easily obtain the following result.
 \begin{lemma}\label{le-syzgb}
   Let $F\in \Z[x]^{n\times m}$ and $d=\deg(F)$. Then $\Syz(F)$ has a \gb~
   with degree $\le nd$ $w.r.t. \prec'$.
 \end{lemma}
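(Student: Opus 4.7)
The plan is to start from a finite generating set of $\Syz(F)$ with degree bounded by $nd$ supplied by Section~\ref{boundsanalysis}, and then run Buchberger's algorithm under the order $\prec'$, showing that neither S-vector formation nor reduction ever lifts the degree past $nd$. The starting data is immediate: since $\Syz(F)=\sol_{\Zx}(F)$, Corollary~\ref{Zxcase} yields a finite set $\{\u_1,\ldots,\u_s\}\subset\Zx^m$ generating $\Syz(F)$ as a $\Zx$-module with $\deg(\u_i)\le nd$ for every $i$.

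The crucial feature of $\prec'$ is that for any nonzero $\f\in\Zx^m$, the $x$-exponent of the $\prec'$-leading term of $\f$ equals $\deg(\f)$, because $\prec'$ ranks monomials by $x$-exponent first. Suppose $\f,\g\in\Zx^m$ have $\prec'$-leading terms $\lt(\f)=ax^k\e_i$ and $\lt(\g)=bx^s\e_j$ with $s\le k$, so that $k=\deg(\f)$ and $s=\deg(\g)$. If $i\ne j$, then $S(\f,\g)=\0$. Otherwise, in each of the three cases of~\cref{eq-svec}, $S(\f,\g)$ is a $\Z$-linear combination of $\f$ (of $x$-degree $k$) and $x^{k-s}\g$ (also of $x$-degree $k$), so $\deg(S(\f,\g))\le k=\max(\deg(\f),\deg(\g))$. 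Similarly, reducing a term $cx^\alpha\e_l$ of $\f$ by $\g$ with $\lt(\g)=bx^s\e_l$ and $\alpha\ge s$ subtracts a $\Zx$-multiple of $x^{\alpha-s}\g$, which has $x$-degree $\alpha\le\deg(\f)$; hence $\deg(\overline{\f}^{\g})\le\deg(\f)$.

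Next I would run Buchberger's algorithm on $\{\u_1,\ldots,\u_s\}$ under $\prec'$: iteratively form S-vectors of current basis pairs, reduce them modulo the current basis, and adjoin any nonzero remainder. Every intermediate vector stays in $\Syz(F)$ (a submodule closed under $\Zx$-linear combinations), and by the previous paragraph each has $x$-degree $\le nd$. Since $\Zx^m$ is Noetherian, the procedure terminates, and by the Buchberger criterion quoted just after Definition~\ref{GBdef} the output $B\subset\Syz(F)$ is a Gr\"obner basis for $\prec'$ with $\deg(B)\le nd$, establishing the lemma.

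The only subtle point is the degree-preservation claim for $\prec'$: under the original order $\prec$, the S-vector shift $x^{k-s}\g$ can raise the $x$-degree of $\g$ beyond $\max(\deg(\f),\deg(\g))$ because $s$ is merely the $x$-exponent in the last nonzero component of $\g$, not $\deg(\g)$. What rescues us under $\prec'$ is precisely the identity $s=\deg(\g)$, which forces $x^{k-s}\g$ to land at $x$-degree exactly $k$ and no more. Once this observation is pinned down, the rest is a routine invocation of Buchberger's algorithm over the Noetherian module $\Zx^m$.
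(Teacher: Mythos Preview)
Your proof is correct and follows essentially the same approach as the paper: start from a generating set of $\Syz(F)$ with degree $\le nd$ (the paper invokes Theorem~\ref{homogeneousZx}, you invoke Corollary~\ref{Zxcase}, which is equivalent for this purpose), then exploit the key property of $\prec'$ that S-vectors---and, as you correctly add, reductions---never raise the $x$-degree. The paper's version is terser, simply asserting that the set $S=\{\u\in\Syz(F):\deg(\u)\le nd\}$ contains a Gr\"obner basis because it generates and is closed under S-vectors; your explicit Buchberger argument is the natural way to unpack that assertion.
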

\begin{proof}
Let $S=\{\u\,|\, \u\in \Syz(F),~\deg(\u)\le nd\}$.
By Theorem \ref{homogeneousZx}, $S$ generates $\Syz(F)$.
Then, $S$ contains a \gb~ $G$ of $\Syz(F)$ w.r.t $\prec'$,
since the S-vector of any $\u,\v\in S$ w.r.t $\prec'$ is still in $S$.
%The lemma is proved.
\end{proof}

Let $F_{(t)}\in\Zx^{t\times m}$ be the last $t$ rows of $F$ and
\begin{equation}\label{eq-sk}
S_t=\{\u\in\Z[x]^m\,|\, \u\in \Syz(F_{(t)}), \deg(\u)\le td\}.
 \end{equation}
By Lemma \ref{le-syzgb}, $S_t$ contains a \gb~$G_t$ with $\deg(G_t)\le td$.
Then, for any $\u\in \Syz(F_{(t)})$ with $\deg(\u)\le k$, we have $\u\in (S_t,xS_t,\ldots,$ $x^{\max(0,~k-td)}S_t)_{\Z}$.
Moreover, we have {$(S_1)_{\Zx}\supseteq (S_2)_{\Zx}\supseteq \cdots\supseteq (S_n)_{\Zx}$}.

Let $u_{k,t}=\#(G_{k,t}^{(d_t)})$, $v_{k,t}=\#(G_{k,t}^{(d_t-1)})$,
 $w_{k,t}=\#(G_{k,t})$, and $r_{k,t} = u_{k-1,t} + v_{k-1,t} = \#(P_{k,t})$.
Define a matrix  $X_{k,t}=(x_{i,j})\in\Zx^{w_{k,t} \times r_{k,t}}$ as follows.
If $G_{k,t}=[~]$, then $X_{k,t}=[~]$.
Otherwise, $x_{i,i}=1$ for $i=1,\ldots,u_{k,t}$,
{$x_{i,u_{k,t}+i}=x$} for $i=1,\ldots,v_{k,t}$, and all other $x_{i,j}$ are zero.
Then, we have
\begin{equation}\label{eq-pkt}
P_{k,t}=G_{k-1,t}X_{k-1,t}
\end{equation}
for any $k$ and $t$.
Let $M_{k}=\CM(P_k)$ and $[\0,H_k]=M_kU_k$ the HNF of $M_k$.
From \cref{eq-hfn1}, we have $[\0,G_k]=P_kU_k$.

For each $k>0$, let $U_{k}$ be defined as above and $\widetilde{U}_{k,n}$ be the last $r_{k,n}$ rows of $U_k$.
We rewrite $\widetilde{U}_{k,n}$ as $\widetilde{U}_{k,n}=[V_{k,1},V_{k,2}]$, where $V_{k,1}$ consists of
the column vectors of $\widetilde{U}_{k,n}\cap \Syz(F_{(1)})$.
Let $Q_k=[P_{k,1},\ldots,P_{k,n-1}]$ and
$U_{k}=\left(\begin{array}{cc}
  W_{k,1} & W_{k,2} \\
  V_{k,1} & V_{k,2}\\
  \end{array}
  \right).$
From $[\0,G_k]=P_kU_k$, we have
\begin{eqnarray*}
&&[\0,G_{k,1},\ldots,G_{k,n-1}]=P_k
\left(\begin{array}{c}
  W_{k,1}  \\
  V_{k,1} \\
  \end{array}
  \right)
  =[Q_k,P_{k,n}]
\left(\begin{array}{c}
  W_{k,1}  \\
  V_{k,1} \\
  \end{array}
  \right) = Q_k W_{k,1} + P_{k,n}V_{k,1}. \\
&&G_{k,n}=
P_k
\left(\begin{array}{c}
  W_{k,2}  \\
  V_{k,2} \\
  \end{array}
  \right)
  =[Q_k,P_{k,n}]
\left(\begin{array}{c}
  W_{k,2}  \\
  V_{k,2} \\
  \end{array}
  \right)
  =Q_kW_{k,2}+P_{k,n}V_{k,2}.
 \end{eqnarray*}
From the above equations, we have
$G_{k,n}(n,\cdot)=P_{k,n}(n,\cdot)V_{k,2}$, since the elements in the last row of $Q_k$ are all 0.
Since $P_{k,n}V_{k,1}\in (P_k)_{\Z}=(G_k)_{\Z}$ and the last row of $P_{k,n}V_{k,1}$ is zero, we have

\begin{equation}\label{eq-sz0}
(P_{k,n}V_{k,1})_{\Z}\in (G_{k,1},\ldots,G_{k,n-1})_{\Z}.
\end{equation}
Similarly, $G_{k,n}-P_{k,n}V_{k,2}=Q_kW_{k,2} \in (G_{k,1},\ldots,G_{k,n-1})_{\Z}$,
that is, $G_{k,n}=P_{k,n}V_{k,2}\mod (G_{k,1},\ldots,G_{k,n-1})_{\Z}$.
Similar to the $\Zx$ case, for $k>0$,  we define a  map $\phi_k$:
\begin{align*}
  \phi_{k}: \Z[x]^{r_{k,n}}&\rightarrow \Z[x]^m\\
  \u &\mapsto V_{0,2}X_{1,n}\cdots V_{k-1,2}X_{k,n}\u,
\end{align*}
where $X_{k,n}$ is from \cref{eq-pkt}.
Let $P_{0,n}=F$, $r_{0,n}=m$ and $\phi_{0}: \Z[x]^{m}\rightarrow \Z[x]^m$ be the identity map in particular.
Thus, we have $$G_{k,n}(n,\cdot)=P_{k,n}(n,\cdot)V_{k,2}=F(n,\cdot)V_{0,2}X_{1,n}\cdots V_{k-1,2}X_{k,n}V_{k,2},$$
$$P_{k,n}(n,\cdot)=G_{k-1,n}(n,\cdot)X_{k-1,n}=F(n,\cdot)V_{0,2}X_{1,n}\cdots V_{k-1,2}X_{k,n}.$$
From \cref{eq-sz0}, we have
\begin{equation}\label{eq-sz1}
F\phi_k(V_{k,1})=FV_{0,2}X_{1,n}\cdots V_{k-1,2}X_{k,n}V_{k,1}=P_{k,n}V_{k,1} \subset
(G_{k,1},\ldots,G_{k,n-1})_{\Z}
 \end{equation}
for  each $k\ge 0$.
Hence, $\phi_k(V_{k,1})\subset \Syz(F_{(1)})$.

\begin{lemma}\label{le-syzn}
Let $F\in \Zx^{n\times m}$. For any $\u\in \Syz(F_{(1)})$ and $\deg(\u)=l>0$, we have $\u\in (\bigcup_{k=0}^{l}\bigcup_{j=0}^{l-k}x^j\phi_k(V_{k,1}))_{\Z}$ for $k>0$.
Moreover, if $l\le d$, we have $F\u\in (G_{l,1},\ldots,G_{l,n-1})_{\Z}$.
\end{lemma}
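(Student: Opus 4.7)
The plan is to induct on $q=0,1,\ldots,l$, maintaining the invariant that there exists $\u_q'\in\Zx^{r_{q,n}}$ with $\deg(\u_q')\le l-q$ such that $\u\equiv\phi_q(\u_q')\pmod{\bigl(\bigcup_{k<q}\bigcup_{j=0}^{l-k}x^j\phi_k(V_{k,1})\bigr)_{\Z}}$ and $P_{q,n}(n,\cdot)\u_q'=\0$. The base case $q=0$ is immediate with $\u_0'=\u$, since $P_{0,n}=F$ and $F(n,\cdot)\u=F_{(1)}\u=\0$ by hypothesis. Assuming the invariant at step $i<l$, I construct $\u_{i+1}'$ as described below.

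First, compute $\u_i''=U_i^{-1}[\0;\u_i']\in\Zx^{\#(U_i)}$, whose degree is still at most $l-i$ because $U_i$ is an integer matrix. Split $\u_i''=[\alpha;\beta_{<n};\beta_n]$ according to the column partition $U_i=[U_{i,1},U_{i,2,<n},U_{i,2,n}]$, where $U_{i,2,<n}$ collects the columns of $U_{i,2}$ corresponding to $G_{i,1},\ldots,G_{i,n-1}$ and $U_{i,2,n}$ those corresponding to $G_{i,n}$. Reading off the bottom $r_{i,n}$ rows of $U_i\u_i''=[\0;\u_i']$ yields $\u_i'=V_{i,1}\tilde\alpha+V_{i,2}\beta_n$ with $\tilde\alpha=[\alpha;\beta_{<n}]$, both of degree at most $l-i$. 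Expanding $\tilde\alpha=\sum_{j=0}^{l-i}x^j\tilde\alpha_j$ into constant pieces gives $\phi_i(V_{i,1}\tilde\alpha)\in\bigl(\bigcup_{j=0}^{l-i}x^j\phi_i(V_{i,1})\bigr)_{\Z}$, so it suffices to analyze $\phi_i(V_{i,2}\beta_n)$.

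Using $P_{i,n}(n,\cdot)V_{i,1}=\0$ (by the definition of $V_{i,1}$) together with $P_{i,n}(n,\cdot)\u_i'=\0$, deduce $G_{i,n}(n,\cdot)\beta_n=P_{i,n}(n,\cdot)V_{i,2}\beta_n=\0$. Here is the decisive observation: by the GHNF structure (Definition \ref{definition1}(1)), the degrees of the entries of the row vector $G_{i,n}(n,\cdot)$ are strictly increasing, so at most one entry attains the maximal degree $d_n=d$ while all the others have degree $\le d-1$. A leading-term comparison applied to $G_{i,n}(n,\cdot)\beta_n=\0$ then forces the entry of $\beta_n$ indexed by the (unique) degree-$d$ column of $G_{i,n}$, when it exists, to have degree $\le l-i-1$. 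Define $\u_{i+1}'\in\Zx^{r_{i+1,n}}$ by splitting each remaining low-degree-indexed entry $(\beta_n)_j=c_j+xr_j$ with $c_j\in\Z$ and $\deg(r_j)\le l-i-1$ and assigning these as the two corresponding components of $\u_{i+1}'$, while using the sharper bound above for the degree-$d$ entry. By the explicit bipartite structure of the prolongation matrix $X_{i+1,n}$ (realizing $P_{i+1,n}=G_{i,n}X_{i+1,n}$), this yields $X_{i+1,n}\u_{i+1}'=\beta_n$, hence $\phi_{i+1}(\u_{i+1}')=\phi_i(V_{i,2}\beta_n)$ and $P_{i+1,n}(n,\cdot)\u_{i+1}'=G_{i,n}(n,\cdot)\beta_n=\0$, preserving the invariant with $\deg(\u_{i+1}')\le l-i-1$.

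At $q=l$ the vector $\u_l'$ is constant, and the same decomposition produces a constant $\beta_n$ with $G_{l,n}(n,\cdot)\beta_n=\0$; since the leading monomials of the entries of $G_{l,n}(n,\cdot)$ are pairwise distinct powers of $x$, this equation forces $\beta_n=\0$, so $\u_l'=V_{l,1}\tilde\alpha$ for a constant $\tilde\alpha$ and $\phi_l(\u_l')\in(\phi_l(V_{l,1}))_{\Z}$. Accumulating the modular equivalences from $q=0,\ldots,l-1$ gives the first assertion. For the ``moreover'' when $l\le d$, expand $\u$ as a $\Z$-combination of the $x^j\phi_k(V_{k,1})$'s, apply $F\phi_k(V_{k,1})\subset(G_{k,1},\ldots,G_{k,n-1})_{\Z}$ from \cref{eq-sz1}, and use the monotonicity $(G_{k',t})_{\Z}\subset(G_{k'+1,1},\ldots,G_{k'+1,t})_{\Z}$ for $t<n$ (which holds because $G_{k'}\subset P_{k'+1}$ and ${\rm Divide}$ preserves the row support of each column), iterated $j$ times, to place every contributing term in $(G_{l,1},\ldots,G_{l,n-1})_{\Z}$ whenever $k+j\le l$; summing yields the claim. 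The hardest step will be the leading-term argument producing the degree drop for the entry of $\beta_n$ indexed by the top-degree column of $G_{i,n}$, since without the GHNF uniqueness of that top column the degree of $\u_{i+1}'$ could remain $l-i$ and the induction would fail to bottom out at $q=l$.
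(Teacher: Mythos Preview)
Your inductive argument for the first assertion is essentially the paper's approach (which simply refers back to Proposition~\ref{le-syz}), and you have made the key degree-drop step explicit---the leading-term argument showing that the entry of $\beta_n$ indexed by the unique top-degree column of $G_{i,n}(n,\cdot)$ must have degree $\le l-i-1$. That part is correct and in fact more carefully justified than in the paper.

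There is, however, a gap in your ``moreover'' argument. The monotonicity you invoke, $(G_{k',t})_{\Z}\subset(G_{k'+1,1},\ldots,G_{k'+1,t})_{\Z}$, is true but does not do the job: iterating it $j$ times moves $F\phi_k(V_{k,1})$ from $(G_{k,1},\ldots,G_{k,n-1})_{\Z}$ into $(G_{k+j,1},\ldots,G_{k+j,n-1})_{\Z}$, but the terms you must control are $x^jF\phi_k(V_{k,1})$, and your stated inclusion never absorbs the factor $x^j$. What the paper actually uses (under the phrase ``by our prolongation'') is the stronger step
\[
x\,(G_{k',1},\ldots,G_{k',n-1})_{\Z}\subset(G_{k'+1,1},\ldots,G_{k'+1,n-1})_{\Z}\qquad\text{for }k'\le d-1.
\]
This holds because for $t<n$ and $k'\le d-1$ one has $\deg(G_{k',t}(t,\cdot))\le d+k'\le 2d-1<d_t$ (the degree in each row grows by at most one per loop, starting from $\le d$), so $G_{k',t}=G_{k',t}^{(d_t-1)}$ and hence $xG_{k',t}\subset P_{k'+1,t}\subset(G_{k'+1})_{\Z}$; the zero last row then forces $xG_{k',t}\subset(G_{k'+1,1},\ldots,G_{k'+1,n-1})_{\Z}$. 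Iterating this $j$ times (legitimate since $k+j\le l\le d$) gives $x^j(G_{k,1},\ldots,G_{k,n-1})_{\Z}\subset(G_{k+j,1},\ldots,G_{k+j,n-1})_{\Z}$, after which your no-$x$ monotonicity finishes the inclusion into $(G_{l,1},\ldots,G_{l,n-1})_{\Z}$. With this correction the proof goes through.
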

\begin{proof}
The proof of the first statement is similar to the proof of Proposition \ref{le-syz}.
Assume $l\le d$. We have $x^j (G_{k,1},\ldots,G_{k,n-1})_{\Z}\subset (G_{k+j,1},\ldots,G_{k+j,n-1})_{\Z}$ for any $j\le d-k$, by  our prolongation.
By \cref{eq-sz1}, we have $F\u\in (\bigcup_{k=0}^{l}\bigcup_{j=0}^{l-k}x^jF\phi_k(V_{k,1}))_{\Z}
\subset (\bigcup_{k=0}^{l}\bigcup_{j=0}^{l-k}x^j (G_{k,1},\ldots,G_{k,n-1})_{\Z})_{\Z}
\subset (G_{l,1},\ldots,G_{l,n-1})_{\Z}$.
\end{proof}

\begin{lemma}\label{le-st}
For any $1\le s\le n-1$, we have $G_{k,j}=\widetilde{G}_{k,j}$ for $k\le sd$ and
$1\le j\le n-s$.
\end{lemma}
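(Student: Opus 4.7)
My plan is to induct on $k$. The base case $k = 0$ is immediate: $G_0 = \widetilde{G}_0 = \hf(F)$, so $\Div$ produces identical decompositions. For the inductive step, I fix $s$ with $k \le sd$ and assume the result for all smaller values. I want to conclude $G_{k,j} = \widetilde{G}_{k,j}$ for $1 \le j \le n - s$.

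The first and routine sub-step is to show $P_{k,t} = \widetilde{P}_{k,t}$ for every $t \le n - s$. The inductive hypothesis, applied with the same parameter $s$ (legitimate since $k - 1 \le sd$), gives $G_{k-1, t} = \widetilde{G}_{k-1, t}$ for these $t$. Because $\hf$ uses only integer column operations and does not raise degrees, any entry of $\widetilde{G}_{k-1,t}$ has row-$t$ degree at most $d + (k-1)$. For $t \le n - s$, $d_t = (n - t + 1)d \ge (s + 1)d \ge d + k$, so $d + k - 1 < d_t$. Consequently $G_{k-1,t}^{(d_t)} = G_{k-1,t}$ and $G_{k-1,t}^{(d_t - 1)} = G_{k-1,t}$, no truncation takes place, and $P_{k,t} = \widetilde{P}_{k,t}$.

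The second sub-step is to compare the HNFs. When the columns of $\CM(P_k)$ are grouped by the index $t$, the matrix is block upper-triangular, since columns drawn from $P_{k,t}$ vanish in the polynomial row-blocks above row $t$. Computing HNF preserves this block-triangular structure: the columns of the HNF whose leading monomial sits in polynomial rows $1, \ldots, n - s$ are produced by the HNF of the upper-left block, together with integer combinations of the right-hand columns whose bottom parts cancel. The first sub-step guarantees that the upper-left blocks of $\CM(P_k)$ and $\CM(\widetilde{P}_k)$ are identical, so equality of $G_{k,j}$ and $\widetilde{G}_{k,j}$ for $j \le n - s$ reduces to showing that the ``cancellation combinations'' coming from $\widetilde{P}_{k,t}$ with $t > n - s$ contribute nothing new to rows $\le n - s$ modulo $(P_{k,1}, \ldots, P_{k,n-s})_{\Z}$.

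This last point is the main obstacle. Such cancellation combinations are precisely syzygies of the lower $s$ row-blocks of $\widetilde{G}_{k-1}$. The crux is an iterated version of Proposition \ref{le-syzn}: by Theorem \ref{homogeneousZx}, a generating set of $\Syz(F_{(s)})$ may be chosen with degree at most $sd$, and since $k \le sd$, every such syzygy of degree at most $k$ is realized within the $k$ layers of prolongation already present in $P_k$. I would mirror the construction of the map $\phi_k$ appearing before Proposition \ref{le-syzn}, tracking each syzygy through the sequence of unimodular transformations $V_{i,1}$ and the shift matrices $X_{i,n}$, to show that its image in rows $1, \ldots, n - s$ lies in $(G_{k,1}, \ldots, G_{k,n-s})_{\Z}$. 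Once this containment is established, the upper-left HNFs coincide, and Lemma \ref{lm-phnf} together with uniqueness of the reduced $\Z$-Gr\"obner basis yields $G_{k,j} = \widetilde{G}_{k,j}$ for $j \le n - s$, closing the induction.
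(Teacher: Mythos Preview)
Your first sub-step is correct and useful: for $t\le n-s$ and $k\le sd$ no truncation occurs, so $P_{k,t}=\widetilde{P}_{k,t}$. The difficulty is entirely in the second sub-step, and there the argument has a genuine gap.

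You correctly identify that what matters is showing that for every $\u\in\Syz(F_{(s)})$ with $\deg(\u)\le k$ one has $F\u\in(G_{k,1},\ldots,G_{k,n-s})_{\Z}$. But your proposed mechanism, ``mirror the construction of $\phi_k$'', does not go through as stated. The maps $\phi_k$, the matrices $V_{k,1}$, $V_{k,2}$, and the shift matrices $X_{k,n}$ are all built specifically from the block of $U_k$ acting on $P_{k,n}$, i.e.\ from cancellations in the \emph{last row only}. For $s>1$ the HNF computation mixes the blocks $P_{k,n-s+1},\ldots,P_{k,n}$ together; there is no single submatrix of $U_k$ playing the role of $V_{k,1}$ for the last $s$ rows, and no single shift matrix $X_{k,n}$. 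Setting up a genuine analogue would require new definitions and a new proof of the analogue of Lemma~\ref{le-syzn}, none of which you supply. Invoking the degree bound $sd$ for generators of $\Syz(F_{(s)})$ is not enough by itself: you need to land in $(P_k)_{\Z}$, not merely in $(\widetilde{P}_k)_{\Z}$, and these differ precisely in the blocks $t>n-s$ where you have no control.

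The paper avoids this obstacle by a different induction scheme. It inducts on $s$ rather than on $k$. The case $s=1$ is exactly where Lemma~\ref{le-syzn} (your $\phi_k$ machinery) applies directly. For $s=p>1$ the paper does \emph{not} try to track syzygies of the last $p$ rows through the algorithm. Instead it uses the inclusion $\Syz(F_{(p)})\subset\Syz(F_{(p-1)})$ together with the Gr\"obner-basis degree bound $(p-1)d$ for $\Syz(F_{(p-1)})$ (Lemma~\ref{le-syzgb}) to write any relevant $\u$ as a $\Z$-combination of $S_{p-1},xS_{p-1},\ldots,x^{k-(p-1)d}S_{p-1}$. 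Applying $F$ and using the inductive hypothesis at level $s=p-1$ replaces $F$ by $F'=[G_{(p-1)d,1},\ldots,G_{(p-1)d,n-p+1}]$, a matrix whose last $p-1$ rows are already zero. One is then back in the $s=1$ situation for $F'$, and Lemma~\ref{le-syzn} finishes the job. This reduction to $s=1$ is the substantive idea your sketch is missing.
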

\begin{proof}
First, let $s=1$.
$G_0=\widetilde{G}_0=FU_{0,2}$. Then, $G_{0,j}=\widetilde{G}_{0,j}$ for $1\le j\le n$.
This lemma is valid for $k=0$.
Suppose it is valid for $k=l<d$, $i.e.$, $G_{l,j}=\widetilde{G}_{l,j}$ for $1\le j\le n-1$.
We need to show $G_{l+1,j}=\widetilde{G}_{l+1,j}$ for $1\le j\le n-1$.
For any $\f\in (\widetilde{G}_{l+1,1},\ldots,\widetilde{G}_{l+1,n-1})_{\Z} \subset(\widetilde{P}_{l+1})_{\Z}=(F,xF,\ldots,x^{l+1}F)_{\Z}$,
there exists a $\u\in \Z[x]^m$, such that $\f=F\u$ with $\deg(\u)\le l+1$, and $\u\in \Syz(F_{(1)})$.
By Lemma \ref{le-syzn}, we have $\f=F\u\in (G_{l+1,1},\ldots,G_{l+1,n-1})_{\Z}$.
Thus, we have $G_{l+1,j}=\widetilde{G}_{l+1,j}$ for $1\le j\le n-1$, since $G_{l+1,j}\subset\widetilde{G}_{l+1,j}$ and both of them are reduced $\Z$-Gr\"obner bases.  The lemma is valid for $s=1$.

Suppose the lemma is valid for $s=p-1$. Then we have $G_{(p-1)d,j}=\widetilde{G}_{(p-1)d,j}$ for $1\le j\le n-p+1$.
By \cref{eq-sk} and \cref{eqn-lpf2}, $FS_{p-1}\subset (\widetilde{G}_{(p-1)d,1},\ldots,\widetilde{G}_{(p-1)d,n-p+1})_{\Z}=(F')_{\Z}$, where $F'=[G_{(p-1)d,1},\ldots,G_{(p-1)d,n-p+1}]$.

When $s=p$, for any $(p-1)d<k\le pd$ and $\f\in (\widetilde{G}_{k,1},\ldots,\widetilde{G}_{k,n-p})_{\Z}\subset (\widetilde{P}_k)_{\Z}$,
  there exists a $\u\in \Z[x]^m$ with $\deg(\u)\le k$, such that $\f=F\u$ and $\u\in \Syz(F_{(p)})\subset \Syz(F_{(p-1)})$.
  By Lemma \ref{le-syzgb}, $\u\in (S_{p-1})_{\Zx} $ and $\u\in (S_{p-1},\ldots,x^{k-(p-1)d}S_{p-1})_{\Z}$.
  Then, $\f=F\u\in (F',\ldots,x^{k-(p-1)d}F')_{\Z}$.
  Hence we have  $\f = F' \v$ for some $\v\in\Syz(F'_{(p)})$ with $\deg(\v)\le k-(p-1)d\le d$ and  $F'_{(p)}$ being the last $p$ rows of $F'$.
Since the last $p-1$ rows of $F'$ are all zeros, it can be reduced to the $s=1$ case.
Considering the algorithm {\rm {\rm GHNF}$_{n}$$(F')$} and the analysis for the $s=1$ case, we have $\f=F\v' \in (G_{k,1},\ldots,G_{k,n-p})_{\Z}$.
Thus, $G_{k,j}=\widetilde{G}_{k,j}$ for $1\le j\le n-p$.
\end{proof}

The following lemma asserts that the last $s$ rows of $\widetilde{P}_{k}$ do not contribute
to the  first $(n-s)$ rows of $\widetilde{G}_{k}$ for $k> sd$.
\begin{lemma}\label{le-st1}
Let $R=[G_{sd,1},\ldots,G_{sd,n-s}]$.
Then we have $\widetilde{G}_{k,n-s}\subset (R)_{\Zx}$ for $1\le s\le n-1$ and $k> sd$.
In particular,  $\widetilde{G}_{k,n-s}\subset (R,xR,\ldots,x^{k-sd}R)_{\Z}\subset (\widetilde{P}_{k,1},\ldots,\widetilde{P}_{k,n-s})_{\Z}$ for $1\le s\le n-1$ and $k> sd$.
\end{lemma}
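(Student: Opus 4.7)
The plan is to first use Lemma~\ref{le-st} with $k=sd$ to rewrite $R=[\widetilde{G}_{sd,1},\ldots,\widetilde{G}_{sd,n-s}]$, so that the entire argument can be carried out inside the full-prolongation sequence $\widetilde{G}_{k}$, which encodes all syzygy information and is more uniform than $G_{k}$.

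For the first inclusion $\widetilde{G}_{k,n-s}\subset(R,xR,\ldots,x^{k-sd}R)_{\Z}$, I would pick any $\f\in\widetilde{G}_{k,n-s}$ and use \cref{eqn-lpf2} to write $\f=F\u$ with $\u\in\Zx^{m}$ and $\deg(\u)\le k$. Since the last $s$ coordinates of $\f$ vanish, $\u\in\Syz(F_{(s)})$. By Lemma~\ref{le-syzgb} applied to $F_{(s)}$, the module $\Syz(F_{(s)})$ admits a Gr\"obner basis $B$ of degree $\le sd$ under the order $\prec'$; as $\prec'$ compares $x$-degree first, $B\subset S_{s}$ (see \cref{eq-sk}) and the standard Gr\"obner reduction of $\u$ produces quotients $p_{i}\in\Zx$ with $\deg(p_{i}\g_{i})\le\deg(\u)\le k$. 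To express $\u$ as an element of $(S_{s},xS_{s},\ldots,x^{k-sd}S_{s})_{\Z}$, one absorbs the degree slack: any term $x^{t}\g_{i}$ with $\deg(\g_{i})<sd$ and $t+\deg(\g_{i})>sd$ is rewritten as $x^{t-(sd-\deg(\g_{i}))}\g_{i}'$, where $\g_{i}'=x^{sd-\deg(\g_{i})}\g_{i}\in S_{s}$ has degree exactly $sd$ and the new exponent is $\le k-sd$. Multiplying by $F$ gives $\f\in(FS_{s},xFS_{s},\ldots,x^{k-sd}FS_{s})_{\Z}$. Finally, for any $\v\in S_{s}$, $F\v\in(F,xF,\ldots,x^{sd}F)_{\Z}=(\widetilde{P}_{sd})_{\Z}=(\widetilde{G}_{sd})_{\Z}$ has vanishing last $s$ coordinates; because $\widetilde{G}_{sd}$ is a reduced $\Z$-Gr\"obner basis under $\prec$, its reduction of $F\v$ only invokes elements of leading row $\le n-s$, hence $F\v\in(R)_{\Z}$. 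This yields $\f\in(R,xR,\ldots,x^{k-sd}R)_{\Z}\subset(R)_{\Zx}$, proving both the main statement and the first half of the ``in particular'' clause.

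For the remaining inclusion $(R,xR,\ldots,x^{k-sd}R)_{\Z}\subset(\widetilde{P}_{k,1},\ldots,\widetilde{P}_{k,n-s})_{\Z}$, I set $T_{k}:=(\widetilde{P}_{k,1},\ldots,\widetilde{P}_{k,n-s})_{\Z}$. The definition $\widetilde{P}_{sd+1,j}=[\widetilde{G}_{sd,j},x\widetilde{G}_{sd,j}]$ immediately gives $R\subset T_{sd+1}$ and $xR\subset T_{sd+1}$. The key monotonicity $xT_{k}\subset T_{k+1}$ follows from the fact that multiplication by $x$ preserves the leading row: for $\h\in\widetilde{G}_{k-1,j}$ with $j\le n-s$, $x\h$ lies in $(\widetilde{P}_{k})_{\Z}=(\widetilde{G}_{k})_{\Z}$ with leading row $j$, so reducing by the reduced $\Z$-Gr\"obner basis $\widetilde{G}_{k}$ uses only elements of leading row $\le j$, giving $x\h\in(\widetilde{G}_{k,1},\ldots,\widetilde{G}_{k,j})_{\Z}\subset T_{k+1}$; the same argument applied once more covers $x^{2}\h$. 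Iterating yields $x^{i}R\subset T_{sd+i}\subset T_{k}$ for $1\le i\le k-sd$, and $R\subset T_{sd+1}\subset T_{k}$, which is the desired inclusion.

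The main technical obstacle is the degree-controlled syzygy reduction: one has to be careful that the shifts $x^{t}\g_{i}$ appearing in the reduction of $\u$ by a non-uniform-degree Gr\"obner basis $B$ really can be re-organized as powers $x^{j}$ with $j\le k-sd$ of generators of degree exactly $sd$. This is what makes the hypothesis $k>sd$ essential, and it is where the order $\prec'$ (rather than $\prec$) plays a decisive role, since under $\prec'$ the Gr\"obner reduction is strictly $x$-degree-decreasing in the leading term, which is precisely what drives the bookkeeping above.
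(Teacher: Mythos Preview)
Your approach matches the paper's: write $\f\in\widetilde{G}_{k,n-s}$ as $F\u$ with $\u\in\Syz(F_{(s)})$ of degree $\le k$, invoke Lemma~\ref{le-syzgb} to place $\u\in(S_{s},xS_{s},\ldots,x^{k-sd}S_{s})_{\Z}$, and then use Lemma~\ref{le-st} together with \cref{eqn-lpf2} to get $FS_{s}\subset(R)_{\Z}$. Your degree-slack rewriting is a correct, explicit version of the one-line remark the paper records after Lemma~\ref{le-syzgb}. For the second inclusion the paper does induction on $k$, whereas you argue via monotonicity of $T_{k}=(\widetilde{P}_{k,1},\ldots,\widetilde{P}_{k,n-s})_{\Z}$; the underlying leading-row reduction in $\widetilde{G}_{k}$ is the same.

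One small omission: you establish $xT_{k}\subset T_{k+1}$ and then write ``$x^{i}R\subset T_{sd+i}\subset T_{k}$'', but the chain $T_{sd+i}\subset T_{k}$ requires the companion monotonicity $T_{l}\subset T_{l+1}$, which you have not stated. It follows from the same leading-row argument you already used: any $\h\in\widetilde{G}_{l-1,j}$ with $j\le n-s$ lies in $(\widetilde{G}_{l-1})_{\Z}\subset(\widetilde{G}_{l})_{\Z}$ with leading row $j$, hence $\h\in(\widetilde{G}_{l,1},\ldots,\widetilde{G}_{l,j})_{\Z}\subset T_{l+1}$; together with your $x\h\in T_{l+1}$ this gives $T_{l}\subset T_{l+1}$. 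With that line added, the argument is complete.
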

\begin{proof}
Let $k> sd$. For any $\f\in \widetilde{G}_{k,n-s}\subset
 (\widetilde{P}_{k})_{\Z}$, there exists a $\u\in \Syz(F_{(t)})$ with $\deg(\u)\le k$,
such that $\f=F\u$.
By Theorem \ref{homogeneousZx}, $\u\in (S_s)_{\Zx}$.
By Lemma \ref{le-syzgb}, $\u\in (S_s,\ldots,x^{k-sd}S_s)_{\Z}$.
By Lemma \ref{le-st}, $G_{sd,j}=\widetilde{G}_{sd,j}$ for $1\le j\le n-s$, $1\le s< n$.
Then, By \cref{eq-sk} and \cref{eqn-lpf2}, $FS_s\subset (\widetilde{G}_{sd,1},\ldots,\widetilde{G}_{sd,n-s})_{\Z}=(R)_{\Z}$.
Thus, $\f=F\u\subset  (R,xR,\ldots,x^{k-sd}R)_{\Z}\subset (R)_{\Zx}$.

To show the second statement, first, let $k=sd+1$.
We have $\f \in (R,xR)_{\Z}=(\widetilde{P}_{td+1,1},\ldots,\widetilde{P}_{sd+1,n-s})_{\Z}$.
The lemma is valid for $k=sd+1$.
Suppose the lemma is valid for $k=l > sd$. Then, $\widetilde{G}_{l,n-s}\subset
(R,xR,$ $\ldots,x^{l-sd}R)_{\Z}\subset (\widetilde{P}_{l,1},\ldots,\widetilde{P}_{l,n-s})_{\Z}$.
We need to show $\widetilde{G}_{l+1,n-s}\subset
(\widetilde{P}_{l+1,1},\ldots,$ $\widetilde{P}_{l+1,n-s})_{\Z}$.
For any $\f\in \widetilde{G}_{l+1,n-s}$,
we have $\f\in (R,$ $xR,\ldots,x^{l-sd+1}R)_{\Z}=
((R,xR,\ldots,x^{l-sd}R)\cup x(R,$ $xR,\ldots,x^{l-sd}R))_{\Z}\subset (\widetilde{G}_{l,1},$ $\ldots,\widetilde{G}_{l,n-s},x\widetilde{G}_{l,1},$ $\ldots,x\widetilde{G}_{l,n-s})_{\Z}
=(\widetilde{P}_{l+1,1},\ldots,\widetilde{P}_{l+1,n-s})_{\Z}$.
The lemma is also valid for $k=l+1$.
\end{proof}

 \begin{lemma}\label{le-sn}
For any  $k\ge 1$ and $1\le t\le m$, let $R_{k,t}=[\widetilde{G}_{k-1,t}^{(d_t)},x\widetilde{G}_{k-1,t}^{(\widetilde{p}_{k-1,t}-1)}]$,
where $\widetilde{p}_{k-1,t}=$\newline
 $\max(d_t,$ $\max_{\g\in\widetilde{G}_{k-1,t}} \deg(\g(t)))$.
Then
we have  $\f\in (R_{k,1},\ldots,R_{k,n-s})_{\Z}$ whenever $\f=[f_1,\ldots,f_{n-s},0,\ldots,0]^\tau\in (\widetilde{P}_{k,1},\ldots,\widetilde{P}_{k,n-s})_{\Z}$.
 \end{lemma}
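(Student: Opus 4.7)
The plan is to prove the inclusion $(\widetilde{P}_{k,1},\ldots,\widetilde{P}_{k,n-s})_{\Z} \cap \{\f\,:\,f_{n-s+1}=\cdots=f_n=0\} \subseteq (R_{k,1},\ldots,R_{k,n-s})_{\Z}$ by downward induction on the row index $t$, from $t=n-s$ down to $t=1$. At each stage I would show that any representation of $\f$ as a $\Z$-combination of elements from $\widetilde{P}_{k,1},\ldots,\widetilde{P}_{k,t}$ can be rewritten so that the block-$t$ part uses only elements of $R_{k,t}$, while the residual lies in $(\widetilde{P}_{k,1},\ldots,\widetilde{P}_{k,t-1})_{\Z}$. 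By the Divide construction, columns of $\widetilde{P}_{k,t'}$ for $t'<t$ have zero in row $t$, so contributions to row $t$ of $\f$ come only from block $t$ together with blocks above it, which are already handled when the induction reaches step $t$.

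Inside each $t$, I would run a second, inner induction on the degree of the $t$-th row of the candidate combination. The ``forbidden'' generators in $\widetilde{P}_{k,t}\setminus R_{k,t}$ are of two types: (i) elements $\g\in\widetilde{G}_{k-1,t}$ with $\deg(\g(t))>d_t$ and (ii) elements $x\g$ with $\deg(\g(t))=\widetilde{p}_{k-1,t}$. For type (i), the GHNF degree bound of Remark \ref{betterheightboundforGHNF} together with the reduced $\Z$-Gr\"obner basis property of $\widetilde{G}_{k-1}$ (Lemma \ref{lm-phnf}) supplies an element $\g'\in\widetilde{G}_{k-1,t}^{(d_t)}$ whose leading $t$-row term, after an appropriate $x$-shift, divides that of $\g$; subtracting such a shift (which lies in $R_{k,t}$) strictly lowers the $t$-row degree of $\g$, and the inner induction closes.

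For type (ii), I would argue that any $\f$ with $f_{n-s+1}=\cdots=f_n=0$ cannot genuinely require the excluded top-degree $x$-shift: the coefficient of such a generator must cancel against some block-$t'$ contribution via a syzygy of the last-$s$-row projection of $F$. This is where Lemma \ref{le-syzgb}, Lemma \ref{le-syzn}, and Lemma \ref{le-st1} enter, converting the excluded contribution into a combination of strictly lower-$t$ blocks, which the outer induction hypothesis then absorbs into $(R_{k,1},\ldots,R_{k,t-1})_{\Z}$.

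The main obstacle I expect is the control in step (ii): one must show that the relation forcing the top-degree $x$-shift to appear is witnessed by a syzygy of small enough degree to fit the framework of Lemmas \ref{le-syzgb}--\ref{le-st1}, rather than producing new dependencies on even higher-degree generators. Making this precise requires tracking the unimodular transformations in \cref{eq-pkt}--\cref{eq-sz1} that relate $\widetilde{G}_{k-1}$ to the underlying prolongations of $F$, and exploiting the rank structure of the HNF to confine the rewriting to degrees already present in $R_{k,1},\ldots,R_{k,t}$.
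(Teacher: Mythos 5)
Your overall architecture (peel off the blocks one at a time, reduce degrees within a block, and use the syzygy lemmas to transfer the top-degree contributions into lower blocks) matches the shape of the paper's argument, which inducts on $s$ and splits into the cases $k\le ld$, $f_{n-l}=0$, and $f_{n-l}\ne 0$, invoking Lemmas \ref{le-st}, \ref{le-st1}, \ref{claim-1} and \ref{le-syzn}. You also correctly identify the forbidden generators of $\widetilde{P}_{k,t}\setminus R_{k,t}$. But there are two genuine gaps.

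First, your type (i) reduction is not justified as stated. If $\g\in\widetilde{G}_{k-1,t}$ has $\deg(\g(t))=e>d_t$, the ``appropriate $x$-shift'' of an element $\g'\in\widetilde{G}_{k-1,t}^{(d_t)}$ whose leading coefficient divides that of $\g$ is $x^{e-\deg(\g'(t))}\g'$ with exponent possibly $\ge 2$, and such multi-step shifts do \emph{not} lie in $R_{k,t}$ (only single shifts $x\g''$ with $\deg(\g''(t))\le\widetilde{p}_{k-1,t}-1$ do). Even if you subtract a single shift $cx\g''$ of the degree-$(e-1)$ element, the remainder $\g-cx\g''$ has lower row-$t$ degree but you have not shown it lies in the $\Z$-span of the lower-degree generators of $\widetilde{P}_{k,t}$; membership in a $\Z$-span is not implied by a degree bound. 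The paper closes exactly this gap with the recursive identities \cref{eq-c11} and \cref{eq-c12} of Lemma \ref{claim-1}, which are proved by induction on the loop index $k$ (relating $\widetilde{G}_{k}$ to $\widetilde{G}_{k-1}$ via the HNF at each stage), not by an induction on degree at fixed $k$. You neither cite Lemma \ref{claim-1} nor set up the induction over $k$ that would be needed to reprove it, so the inner induction does not close.

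Second, you explicitly leave type (ii) open: you state that the excluded top-degree shift ``must cancel against some block-$t'$ contribution via a syzygy'' and then name this as ``the main obstacle.'' This is precisely where the paper does the real work. Its resolution is the case split on $k$ versus $ld$ and $(l+1)d$: for $k>ld$ and $f_{n-l}\ne0$, Lemma \ref{le-st1} places $\f$ in $(G_{ld,1},\ldots,G_{ld,n-l})_{\Zx}$, and Lemmas \ref{claim-1} and \ref{le-syzn} then convert the $\Zx$-combination into a $\Z$-combination of $(\widetilde{P}_{k,1},\ldots,\widetilde{P}_{k,n-l-1},R_{k,n-l})$, after which the outer induction applies; the degree control you worry about is supplied by Lemma \ref{le-syzgb} (syzygy Gr\"obner bases of degree $\le td$ w.r.t. $\prec'$), not by tracking the unimodular matrices directly. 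As written, your proposal identifies the right ingredients but does not assemble them into a proof.
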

\begin{proof}
First, let $s=n-1$. If $k\le (n-1)d$, by Lemma \ref{le-st}, we have  $R_{k,1}=\widetilde{P}_{k,1}$.
 Then, $\f\in (\widetilde{P}_{k,1})_{\Z}=(R_{k,1})_{\Z}$.
 Otherwise, $k>(n-1)d$,  by Lemma \ref{le-st1}, $\f\in (\widetilde{P}_{k,1})_{\Z}
 \subset (G_{(n-1)d,1})_{\Zx}$.
 By Lemma \ref{claim-1}, $(\widetilde{P}_{k,1})_{\Z}=(R_{k,1})_{\Z}$.
 The lemma is valid for $s=n-1$.

Suppose the lemma is valid for $s=l+1\le n-1$, $i.e.$
for any $k>0$ and $\f\in (\widetilde{P}_{k,1},\ldots,\widetilde{P}_{k,n-l-1})_{\Z}$, $\f\in (R_{k,1},\ldots,R_{k,n-l-1})_{\Z}$.
Let $s=l$, $\f=[f_1,\ldots,f_{n-l},0,\ldots,$ $0]^\tau\in (\widetilde{P}_{k,1},\ldots,\widetilde{P}_{k,n-l})_{\Z}$.
%Otherwise, $f_{n-l}\ne 0$.
If $k\le ld$, then, $R_{k,j}=\widetilde{P}_{k,j}$ for $1\le j\le n-l$.
  Thus, $\f\in (R_{k,1}, \ldots,R_{k,n-l})_{\Z}$.
   Otherwise, $k>ld$.
 If $f_{n-l}=0$, $\f\in (\widetilde{G}_{k,1},\ldots,\widetilde{G}_{k,n-l-1})_{\Z}$.
In this case, if $k\le (l+1)d$, $R_{k,j}=\widetilde{P}_{k,j} = P_{k,j}$ for $1\le j\le n-l-1$ by  Lemma \ref{le-st}.
  $\f\in(\widetilde{P}_{k,1},\ldots,\widetilde{P}_{k,n-l})_{\Z}=
  (R_{k,1},\ldots, R_{k,n-l-1},\widetilde{P}_{k,n-l})_{\Z}\subset
  (R_{k,1},\ldots,R_{k,n-l})_{\Z}$ by Lemmas \ref{claim-1} and \ref{le-syzn}.
  If $k>(l+1)d$,  by  Lemma \ref{le-st1}, $\f\in (\widetilde{P}_{k,1},\ldots,\widetilde{P}_{k,n-l-1})_{\Z}$.
  By the induction hypothesis,  $\f\in (R_{k,1},\ldots,R_{k,n-l-1})_{\Z}$.
  If  $f_{n-l}\ne 0$, by Lemma \ref{le-st1} we have
 $\f\in (\widetilde{P}_{k,1},\ldots,$ $\widetilde{P}_{k,n-l})_{\Z}
 \subset (G_{ld,1},\ldots,G_{ld,n-l})_{\Zx}$.
  Then, for $k>ld$ we have  $\f\in (\widetilde{P}_{k,1},\ldots,\widetilde{P}_{k,n-l-1},R_{k,n-l})_{\Z}$
   by Lemmas \ref{claim-1} and \ref{le-syzn}.
  Thus, by induction, $\f\in (R_{k,1},\ldots,R_{k,n-l})_{\Z}$.
  The lemma is proved.
 \end{proof}

\begin{lemma}\label{le-loopnequivalent}
We have $G_{k,t}^{(d_t)}(t,\cdot)=\widetilde{G}_{k,t}^{(d_t)}(t,\cdot)$
for any $k\ge 0$, $1\le t\le n$.
\end{lemma}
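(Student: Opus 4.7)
The plan is to proceed by induction on $k$. The base case $k=0$ is immediate because $G_0=\widetilde{G}_0$, so the two Divide decompositions coincide. For the inductive step, fix $t$; if $k\le(n-t)d$, then Lemma~\ref{le-st} applied with $s=n-t$ yields $G_{k,t}=\widetilde{G}_{k,t}$, which trivially implies the claim, so only the range $k>(n-t)d$ requires real work.

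In that range, the plan is first to translate the statement into a claim about $\Z$-modules in $\Zx$. Define
\[
M_t=\{f\in\Zx\mid \exists\,\f\in(\widetilde{P}_k)_\Z,\ \f(s)=0\text{ for all }s>t,\ \f(t)=f\},
\]
and similarly $M_t'$ with $(\widetilde{P}_k)_\Z$ replaced by $(P_k)_\Z$. Because $\widetilde{G}_k$ is a reduced $\Z$-Gr\"obner basis and any column of it with leading row $<t$ has a zero $t$-th entry, a standard reduction argument identifies $\widetilde{G}_{k,t}(t,\cdot)$ as the reduced $\Z$-Gr\"obner basis of $M_t$ and $G_{k,t}(t,\cdot)$ as that of $M_t'$. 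By uniqueness, the lemma becomes equivalent to the equality $M_t^{(d_t)}=(M_t')^{(d_t)}$ of the degree-$\le d_t$ truncations. One inclusion is immediate from $(P_k)_\Z\subset(\widetilde{P}_k)_\Z$; the reverse is where the work lies.

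For that inclusion, take $f\in M_t^{(d_t)}$; since $\widetilde{G}_{k,t}(t,\cdot)$ is a $\Z$-Gr\"obner basis of $M_t$ whose elements have strictly increasing degrees, Gr\"obner reduction of $f$ uses only elements of $t$-row degree $\le d_t$, placing $f\in(\widetilde{G}_{k,t}^{(d_t)}(t,\cdot))_\Z$. Since $k>(n-t)d$, Lemma~\ref{le-st1} puts $\widetilde{G}_{k,t}^{(d_t)}$ inside $(\widetilde{P}_{k,1},\ldots,\widetilde{P}_{k,t})_\Z$, and then Lemma~\ref{le-sn} gives $\widetilde{G}_{k,t}^{(d_t)}\subset(R_{k,1},\ldots,R_{k,t})_\Z$. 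Because each $R_{k,j}$ with $j<t$ has a zero $t$-th row, taking the $t$-th row yields
\[
f\in(R_{k,t}(t,\cdot))_\Z=\bigl(\widetilde{G}_{k-1,t}^{(d_t)}(t,\cdot),\ x\widetilde{G}_{k-1,t}^{(\widetilde{p}_{k-1,t}-1)}(t,\cdot)\bigr)_\Z.
\]

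The main obstacle is then pruning the second generating family down to $x\widetilde{G}_{k-1,t}^{(d_t-1)}(t,\cdot)$, because a priori $\widetilde{p}_{k-1,t}$ may exceed $d_t$ and some prolongations $xg_j$ have degree larger than $d_t$. The key observation is that the polynomials in $\widetilde{G}_{k-1,t}(t,\cdot)=[g_1,\ldots,g_w]$ have strictly increasing degrees, hence so do their shifts $xg_1,\ldots,xg_w$; in any $\Z$-combination $\sum b_j(xg_j)$ the leading term of the highest-degree contributing $xg_j$ cannot be canceled by any other, so $\deg f\le d_t$ forces $b_j=0$ whenever $\deg(xg_j)>d_t$. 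This places $f$ in $(\widetilde{G}_{k-1,t}^{(d_t)}(t,\cdot),\,x\widetilde{G}_{k-1,t}^{(d_t-1)}(t,\cdot))_\Z$. Applying the induction hypothesis at $d_t$ and its immediate consequence at $d_t-1$ (two sets of polynomials agreeing up to degree $d_t$ agree on their degree-$\le d_t-1$ subsets) rewrites this module as $(G_{k-1,t}^{(d_t)}(t,\cdot),\,xG_{k-1,t}^{(d_t-1)}(t,\cdot))_\Z=(P_{k,t}(t,\cdot))_\Z$. Lifting such a $\Z$-combination back to $\Zx^n$ via the corresponding columns of $P_{k,t}$, all of which vanish in rows below $t$, produces a vector in $(P_k)_\Z$ whose $t$-th row is $f$ and whose lower rows vanish, so $f\in M_t'$, completing the induction.
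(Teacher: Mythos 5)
Your proof is correct and follows essentially the same route as the paper: the range $k\le (n-t)d$ is handled by Lemma \ref{le-st}, and the range $k>(n-t)d$ by Lemmas \ref{le-st1} and \ref{le-sn} followed by an induction on $k$ that mirrors Lemma \ref{claim-11}, concluding via uniqueness of reduced $\Z$-Gr\"obner bases. The paper compresses the last step into ``similar to Lemma \ref{claim-11}''; your identification of the $t$-th rows with the reduced bases of the modules $M_t$, $M_t'$ and the explicit pruning of the prolongations of degree $>d_t$ (using the strictly increasing degrees in the echelon form) is precisely the detail that phrase elides, so this is added rigor rather than a different argument.
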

\begin{proof}
Note that $d_n = d$ and for the $n$-th row of $F$,
Algorithms GHNF$_n$ and Algorithm GHNF$_1$ are exactly the same.
Hence,  by Lemma \ref{claim-11},
we have $G_{k,n}^{(d_n)}(n,\cdot)=\widetilde{G}_{k,n}^{(d_n)}(n,\cdot)$ for any $k\ge 0$.
Set $s=n-t$ in Lemma \ref{le-st}, we have
$G_{k,j}=\widetilde{G}_{k,j}$ for any $1\le t\le n-1$,  $k\le (n-t)d$, and $1\le j\le t$.
We thus proved the lemma when $k\le (n-t)d$.
Set $s=n-t$ in Lemmas \ref{le-st1} and \ref{le-sn}, we have
$\widetilde{G}_{k,t}\subset (\widetilde{P}_{k,1},\ldots,\widetilde{P}_{k,t})_{\Z}
\subset (R_{k,1},\ldots,R_{k,t})_{\Z}$ for $1\le t\le n-1$ and $k>(n-t)d$.
Note that Lemma \ref{le-sn} is the analog of Lemma \ref{claim-1} in the case of $n>1$. Thus, similar to  Lemma \ref{claim-11}, we can prove $G_{k,t}^{(d_t)}(t,\cdot)=\widetilde{G}_{k,t}^{(d_t)}(t,\cdot)$ for $k > (n-t)d$. The lemma is proved.
\end{proof}

\begin{lemma}\label{claim-n2}
Suppose Step 2 of Algorithm {\rm GHNF}$_n$ terminates at the $k$-th loop
and let $\g_{k,t,d_t}$ be the last column vector of $G_{k,t}^{(d_t)}$.
Then $\deg(\g_{k,t,d_t})=d_t$ and for any $i\ge 0$,  $(\widetilde{G}_i)_{\Z}\subset (H_{i,1},\ldots,H_{i,n})_{\Z}$, where
   $H_{i,t}=(G_{k,t}^{(d_t)},x\g_{k,t,d_t},\ldots,x^{\max(i,k)-(n-t)d}\g_{k,t,d_t})$.
\end{lemma}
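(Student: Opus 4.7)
The plan is to mirror the proof of Lemma~\ref{claim-12} (the $n=1$ case), extending it to accommodate the multi-row structure of Algorithm GHNF$_n$. The argument splits into two parts: first the degree claim $\deg(\g_{k,t,d_t})=d_t$, then the module inclusion, which is proved by induction on $i$.

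For the degree claim I would argue by contradiction. Suppose $G_{k,t}^{(d_t)}$ is nonempty and let $d'=\deg(\g_{k,t,d_t}(t))$. If $d'<d_t$, then $\g_{k,t,d_t}\in G_{k,t}^{(d_t-1)}$, so $x\g_{k,t,d_t}\in P_{k+1,t}$; this new vector has leading row $t$ and $t$-row degree $d'+1$. Since termination gives $G_{k+1}=G_k$, we have $(P_{k+1})_{\Z}=(G_k)_{\Z}$, hence $x\g_{k,t,d_t}\in (G_k)_{\Z}$. However, by Lemma~\ref{le-loopnequivalent} combined with the triangular structure of the Divide decomposition ($G_{k,s}$ has rows below $s$ equal to zero) and the $\Z$-HNF reduction conditions on the $t$-th row across blocks, the $t$-row component of any element of $(G_k)_{\Z}$ having leading row $t$ lies in the $\Z$-span of $G_{k,t}(t,\cdot)$, whose maximal degree is $d'$. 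This rules out a degree-$(d'+1)$ leading term in $G_{k+1}$, a contradiction; hence $d'=d_t$.

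For the inclusion I induct on $i$. For the base case $i=0$, Lemma~\ref{le-loopnequivalent} and the fact that $\CM(P_k)$ respects the row-degree bounds $d_t$ yield $G_{k,t}=G_{k,t}^{(d_t)}$ for every $t$. Since $(\widetilde{G}_0)_{\Z}=(G_0)_{\Z}\subset (G_k)_{\Z}=\bigcup_t (G_{k,t}^{(d_t)})_{\Z}\subset (H_{0,1},\ldots,H_{0,n})_{\Z}$, the base case holds. For the inductive step, assume the claim for $i$; since $(\widetilde{G}_{i+1})_{\Z}=(\widetilde{G}_i,x\widetilde{G}_i)_{\Z}$ and $H_{i,t}\subset H_{i+1,t}$, it suffices to bound $x\widetilde{G}_i$. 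Any $\f\in\widetilde{G}_i$ decomposes as $\f=\sum_t \f_t$ with $\f_t\in (H_{i,t})_{\Z}$, and using the first part of the lemma one may further split $\f_t=\f_{t,0}+c_{t,0}\g_{k,t,d_t}+\sum_{j\ge 1}c_{t,j}x^j\g_{k,t,d_t}$ with $\f_{t,0}\in (G_{k,t}^{(d_t-1)})_{\Z}$. Then $x\f_{t,0}\in (xG_{k,t}^{(d_t-1)})_{\Z}\subset (P_{k+1,t})_{\Z}\subset (G_k)_{\Z}$, which is contained in $(H_{i+1,1},\ldots,H_{i+1,n})_{\Z}$ by the base case. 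The remaining terms $c_{t,j}x^{j+1}\g_{k,t,d_t}$ have exponents at most $\max(i,k)+1-(n-t)d\le \max(i+1,k)-(n-t)d$, so they lie directly in $H_{i+1,t}$.

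The hard part will be the degree claim: in the $n>1$ setting one must carefully rule out that $x\g_{k,t,d_t}$ is realized as a $\Z$-combination involving blocks $G_{k,s}$ for $s>t$ whose entries above row $s$ might in principle contribute to the $t$-row. This requires bookkeeping with the triangular structure of Divide and the canonicality of the $\Z$-HNF. A secondary subtlety is the edge case $\max(i,k)-(n-t)d<0$, where the set of $x^j\g_{k,t,d_t}$ terms in $H_{i,t}$ is empty and the decomposition degenerates to $\f_t\in (G_{k,t}^{(d_t)})_{\Z}$; this regime must be treated explicitly but introduces no new difficulty.
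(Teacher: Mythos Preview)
Your degree argument is essentially fine; the ``hard part'' you flag is actually harmless, since any nonzero $\Z$-combination of columns of $G_{k,s}$ for a fixed $s>t$ has nonzero $s$-th row (the $s$-row entries have pairwise distinct degrees, being part of an HNF), so such blocks cannot contribute to an element with last nonzero row $t$.

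The real gap is in the inclusion argument. Your base case rests on two assertions neither of which is justified: (i) $G_{k,t}=G_{k,t}^{(d_t)}$ at termination, and (ii) $(G_0)_\Z\subset (G_k)_\Z$. For (i), Lemma~\ref{le-loopnequivalent} only compares the $t$-th rows of the \emph{truncated} sets $G_{k,t}^{(d_t)}$ and $\widetilde{G}_{k,t}^{(d_t)}$; it says nothing about whether $G_{k,t}$ contains columns with $t$-row degree $>d_t$. Such columns can arise: columns of $P_{k,s}$ with $s>t$ are \emph{not} $\Z$-independent in row $s$ (e.g.\ $x\g$ and $\g'$ with consecutive $s$-row degrees may cancel), and the resulting combination can land in block $t$ with large $t$-row degree. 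For (ii), the only monotonicity available is the one-step relation $(G_{j,t}^{(d_t)})_\Z\subset (G_{j+1,1},\ldots,G_{j+1,t-1},G_{j+1,t}^{(d_t)})_\Z$, and iterating it produces the \emph{untruncated} $G_{j+1,s}$ for $s<t$, which you cannot bound without already knowing the result for smaller $t$. Consequently, your inductive step also breaks: you need $x\f_{t,0}\in (G_k)_\Z\subset (H_{i+1,1},\ldots,H_{i+1,n})_\Z$, but the second inclusion is exactly what fails.

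The paper resolves this by a \emph{double} induction: outer on the block index $t$, inner on $i$. For each $t$ it handles the range $i\le (n-t)d$ via Lemma~\ref{le-st} (where $\widetilde{G}_{i,t}=G_{i,t}$ literally) and the range $i>(n-t)d$ via Lemma~\ref{le-st1}. The crucial point is that when $x\g_{t,0}$ lands in $(G_{k+1,1},\ldots,G_{k+1,t-1},G_{k+1,t}^{(d_t)})_\Z$, the untruncated pieces $G_{k+1,s}$ for $s<t$ are absorbed using the outer induction hypothesis $G_{j,s}\subset (\widetilde{G}_{j,1},\ldots,\widetilde{G}_{j,s})_\Z\subset (H_{j,1},\ldots,H_{j,s})_\Z$. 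Your single induction on $i$ cannot supply this, so the argument as written does not close.
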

\begin{proof}
It is suffice to show $\widetilde{G}_{i,t}\subset (H_{i,1},\ldots,H_{i,t})_{\Z}$ for any $i\ge 0$ and $1\le t\le n$.
If $\deg(G_{k-1,t})< d_t$, then $\deg(G_{k,t}) \ge \deg(P_{k,t})> \deg(G_{k-1,t})$
and the algorithm does not terminate. Therefore, if $G_{k,t}\ne \emptyset$, then we have
$k\ge d_t-d=(n-t)d$ and hence $\deg(\g_{k,t,d_t})=d_t$.

First, let $t=1$.
Clearly, for any $i\le (n-1)d$, $\widetilde{G}_{i,1}=G_{i,1}\subset (G_{k,1}^{(d_1)})_{\Z}$, where  $=$ is based on Lemma \ref{le-st} and $\subset$ is valid because $(G_{j,1}^{(d_1)})_{\Z}\subset (G_{j+1,1}^{(d_1)})_{\Z}$ for any $j\ge 0$.
Thus, we have $\widetilde{G}_{(n-1)d,1}=G_{(n-1)d,1}\subset (G_{k,1}^{(d_1)})_{\Z}\subset (H_{(n-1)d,1})_{\Z}$.
Suppose it is valid for $i=j>(n-1)d$.
From  (\ref{eq-fpn}) and Lemma \ref{le-st1}, $(\widetilde{G}_{j+1,1})_{\Z}
=(\widetilde{G}_{j,1},x\widetilde{G}_{j,1})_{\Z}$.
By induction hypothesis, $\widetilde{G}_{j,1}\subset
(H_{j,1})_{\Z}$ where $H_{j,1}=(G_{k,1}^{(d_1)},x\g_{k,1,d_1},\ldots,
x^{\max(j,k)-(n-1)d}\g_{k,1,d_1})_{\Z}$.
Then, any $\g\in \widetilde{G}_{j,1}$ can be written as
$\g=\g_0+\sum_{l=0}^{\max(j,k)-(n-1)d}c_lx^l\g_{k,1,d_1}$,
 where $\g_0\in G_{k,1}^{(d_1-1)}$ and $c_l\in \Z$.
 Since $x\g_0\in (xG_{k,1}^{(d_1-1)})_{\Z}\subset (G_{k+1,1}^{(d_1)})_{\Z}
 =(G_{k,1}^{(d_1)})_{\Z}$,
 %and by the degree of $\widetilde{G}_{j+1,1}$,
 we have $(\widetilde{G}_{j+1,1})_{\Z}\subset (G_{k,1}^{(d_1)},x\g_{k,1,d_1},\ldots,x^{\max(j+1,k)-(n-1)d}\g_{k,1,d_1})_{\Z}$.
The lemma is valid for any $i\ge 0$ and $t=1$.

Suppose the lemma is valid for any $i\ge 0$ and $t\le s<n$.
 Then $(G_{j,1},\ldots,G_{j,s})_{\Z}\subset
 (\widetilde{G}_{j,1},\ldots,\widetilde{G}_{j,s})_{\Z}
 \subset (H_{j,1},\ldots,H_{j,s})_{\Z}$ for any $j\ge 0$.

 By induction, $(\widetilde{G}_{i,1},\ldots,
 \widetilde{G}_{i,s+1})_{\Z}=(G_{i,1},\ldots,
 G_{i,s+1})_{\Z}\subset (H_{i,1},\ldots,$ $H_{i,s},G_{i,s+1})_{\Z}$
 for $i\le (n-s-1)d$.
 Moreover,
$(G_{i,s+1}^{(d_{s+1})})_{\Z}\subset (G_{i+1,1},\ldots,G_{i+1,s},G_{i+1,s+1}^{(d_{s+1})})_{\Z}
\subset (H_{i+1,1},\ldots,H_{i+1,s},G_{i+1,s+1}^{(d_{s+1})})_{\Z}$ for any $i\ge 0$.
Since $d+i\le d_{s+1}$ and $H_{j,t}=H_{k,t}$ for any $j\le k$ and $1\le t\le n$, we have $(\widetilde{G}_{i,1},\ldots,
 \widetilde{G}_{i,s+1})_{\Z}$ $\subset (H_{k,1},\ldots,H_{k,s},G_{k,s+1}^{(d_{s+1})})_{\Z}
 \subset(H_{i,1},\ldots,H_{i,s},H_{i,s+1})_{\Z}$
 and the lemma is valid for $i\le (n-s-1)d$.

Suppose the lemma is valid for $i=j>(n-s-1)d$.
From (\ref{eq-fpn}), $(\widetilde{G}_{j+1,s+1})_{\Z}=(\widetilde{G}_{j,s+1},
x\widetilde{G}_{j,s+1})_{\Z}$.
By the induction hypothesis, $\widetilde{G}_{j,s+1}\subset (H_{j,1},\ldots,H_{j,s+1})_{\Z}$.
Then, any $\g\in \widetilde{G}_{j,s+1}$ can be written as
$\g=\sum_{t=1}^{s+1}(\g_{t,0}+\sum_{l=0}^{\max(j,k)-(n-t)d} c_{t,l}x^l\g_{k,t,d_{t}})$, where $\g_{t,0}\in G_{k,t}^{(d_t-1)},$ and
$c_{t,l}\in \Z$.
Moreover, since for any $i\ge 0$ and $t\le s+1$, $(G_{i,t}^{(d_{t})})_{\Z}\subset (G_{i+1,1},\ldots,G_{i+1,t-1},G_{i+1,t}^{(d_{t})})_{\Z}
\subset (H_{i+1,1},\ldots,H_{i+1,t-1},G_{i+1,t}^{(d_{t})})_{\Z}$,
we have
$x\g_{t,0}\in$ $ (G_{k+1,1},\ldots,$ $G_{k+1,t-1},G_{k+1,t}^{(d_t)})_{\Z}
\subset (H_{k+1,1},\ldots,H_{k+1,t-1}, G_{k+1,t}^{(d_t)})_{\Z}
=(H_{k+1,1},\ldots,H_{k+1,t-1}, G_{k,t}^{(d_t)})_{\Z}$.
Then,  $(\widetilde{G}_{j+1,s+1})_{\Z}\subset (H_{k+1,1},
\ldots,H_{k+1,s+1})_{\Z}$.
Since $\deg(\widetilde{G}_{j+1,s+1})\le d+j+1$, we have
$(\widetilde{G}_{j+1,s+1})_{\Z}\subset (H_{j+1,1},\ldots,H_{j+1,s+1})_{\Z}$.
\end{proof}

Notice that in the proof of Lemma \ref{claim-n2}, we need only $G_{k,t}^{(d_t)}=G_{k+1,t}^{(d_t)}$ for $1\le t\le n$. Then, we have the following corollary.
\begin{coro}\label{coro-n2}
  In the Algorithm {\rm GHNF}$_n$, if $G_{k,t}^{(d_t)}=G_{k+1,t}^{(d_t)}$ for $1\le t\le s$ for some positive integer $s\le n$, then $(\widetilde{G}_{i,s})_{\Z}\in (H_{i,1},\ldots,H_{i,s})_{\Z}$, where $H_{i,t}=(G_{k,t}^{(d_t)},x\g_{k,t,d_t},\ldots,
  x^{\max(i,k)-(n-t)d}\g_{k,t,d_t})_{\Z}$ for any $i\ge 0$, $1\le t\le s$.
\end{coro}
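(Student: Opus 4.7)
The plan is to observe that the proof of Lemma \ref{claim-n2} can be reorganized as a double induction, first on $t$ (running from $1$ to $n$) and within each $t$ on $i$, and that when we only aim to conclude the containment for indices up to $s$, the induction only appeals to the stabilization hypothesis $G_{k,t}^{(d_t)}=G_{k+1,t}^{(d_t)}$ at those same indices $t\le s$. Since the corollary assumes precisely this restricted stabilization, a copy of the proof of Lemma \ref{claim-n2}, truncated at level $s$, will give the conclusion. In fact I would prove the slightly stronger statement $\widetilde{G}_{i,t}\subset (H_{i,1},\ldots,H_{i,t})_{\Z}$ for every $1\le t\le s$ and every $i\ge 0$, which immediately implies the corollary.

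For the base case $t=1$: for $i\le (n-1)d$ one has $\widetilde{G}_{i,1}=G_{i,1}$ by Lemma \ref{le-st}, and the containment in $(H_{i,1})_{\Z}$ follows from the monotonicity $(G_{j,1}^{(d_1)})_{\Z}\subset (G_{j+1,1}^{(d_1)})_{\Z}$. For the induction $j\to j+1$ with $j>(n-1)d$, one uses $(\widetilde{G}_{j+1,1})_{\Z}=(\widetilde{G}_{j,1},x\widetilde{G}_{j,1})_{\Z}$ (from \eqref{eq-fpn} together with Lemma \ref{le-st1}), writes any $\g\in\widetilde{G}_{j,1}$ as $\g=\g_0+\sum c_l x^l\g_{k,1,d_1}$ with $\g_0\in G_{k,1}^{(d_1-1)}$, and argues $x\g_0\in (xG_{k,1}^{(d_1-1)})_{\Z}\subset (G_{k+1,1}^{(d_1)})_{\Z}=(G_{k,1}^{(d_1)})_{\Z}$. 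The final equality is exactly the stabilization hypothesis at $t=1$, which is in our assumption set since $1\le s$.

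For the inductive step $t-1\to t$ with $t\le s$: assuming the conclusion for $t'\le t-1$, proceed verbatim as in the proof of Lemma \ref{claim-n2}. For $i\le (n-t)d$ the containment uses Lemma \ref{le-st} and the monotonicity of $(G_{j,t}^{(d_t)})_{\Z}$ in $j$, and for the induction on $i$ beyond this range one reduces any $\g\in\widetilde{G}_{j,t}$ to a sum $\sum_{t'=1}^{t}\bigl(\g_{t',0}+\sum_l c_{t',l}x^l\g_{k,t',d_{t'}}\bigr)$ with $\g_{t',0}\in G_{k,t'}^{(d_{t'}-1)}$ and then uses $x\g_{t',0}\in (H_{k+1,1},\ldots,H_{k+1,t'-1},G_{k,t'}^{(d_{t'})})_{\Z}$. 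The only place the equality $G_{k,t'}^{(d_{t'})}=G_{k+1,t'}^{(d_{t'})}$ is invoked is for indices $t'\le t\le s$, all of which are covered by the hypothesis of the corollary.

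The main obstacle is really just bookkeeping: one must verify that every appeal to stabilization in the proof of Lemma \ref{claim-n2} genuinely occurs only at the current level $t$ or below, never at some $t'>s$. Once this audit is done, the argument is literally the same, and the conclusion $(\widetilde{G}_{i,s})_{\Z}\subset (H_{i,1},\ldots,H_{i,s})_{\Z}$ follows as the $t=s$ instance of the inductive statement.
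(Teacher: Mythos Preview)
Your proposal is correct and takes essentially the same approach as the paper. The paper's justification is the single sentence preceding the corollary---``in the proof of Lemma \ref{claim-n2}, we need only $G_{k,t}^{(d_t)}=G_{k+1,t}^{(d_t)}$ for $1\le t\le n$''---and your proposal is precisely the audit that this sentence asserts: you walk through the double induction of Lemma \ref{claim-n2} and confirm that every appeal to stabilization at level $t$ uses only indices $t'\le t$, so truncating at $s$ is legitimate.
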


By this result, we obtain an equivalent termination condition for the Algorithm
GHNF$_n$:
\begin{lemma}\label{le-n2}
  In the Algorithm {\rm GHNF}$_n$, $G_{k}=G_{k+1}$ is equivalent to
  $G_{k,t}(t,\cdot)=G_{k+1,t}(t,\cdot)$ for $1\le t\le n$.
\end{lemma}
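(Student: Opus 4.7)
The forward direction is immediate: if $G_k=G_{k+1}$ then applying $\Div$ to both sides produces identical tuples of blocks, so in particular the $t$-th rows agree for each $t$. For the converse, assume $G_{k,t}(t,\cdot)=G_{k+1,t}(t,\cdot)$ for every $1\le t\le n$. I plan to prove by induction on $t$ that $G_{k,t}=G_{k+1,t}$ for $1\le t\le n$, which together give $G_k=G_{k+1}$. Note that under the row hypothesis the two blocks $G_{k,t}$ and $G_{k+1,t}$ already have the same number of columns, and by the strict ordering of row-$t$ degrees in Definition \ref{definition1} part 1), columns pair up canonically by matching their $t$-th entries.

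The base case $t=1$ is essentially free: by the definition of $\Div$, every column of $G_{k,1}$ has the shape $[f,0,\ldots,0]^\tau$, so the first row alone determines the whole block.

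For the inductive step, fix $t\ge 2$, assume $G_{k,j}=G_{k+1,j}$ for $j<t$, and write $S=G_{k,1}\cup\cdots\cup G_{k,t-1}$. Given a column $\mathbf{c}$ of $G_{k,t}$, let $\mathbf{c}'$ be the column of $G_{k+1,t}$ whose $t$-th entry matches that of $\mathbf{c}$. I aim to show $\mathbf{c}=\mathbf{c}'$ via three ingredients. First, the algorithm always satisfies $G_{k,t}=G_{k,t}^{(d_t)}\subset P_{k+1,t}$ (the prolongation step never raises the row-$t$ degree above $d_t$), so $\mathbf{c}\in(P_{k+1})_{\Z}=(G_{k+1})_{\Z}$. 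Second, the vector $\mathbf{c}-\mathbf{c}'$ has zero entries in all rows $\ge t$ and lies in $(G_{k+1})_{\Z}$; by the $\Z$-Gr\"obner basis property of $G_{k+1}$ its $\Z$-reduction to zero only uses columns of $G_{k+1}$ whose leading row is $<t$, and these columns are exactly $G_{k+1,1}\cup\cdots\cup G_{k+1,t-1}=S$ by the inductive hypothesis, hence $\mathbf{c}-\mathbf{c}'\in(S)_{\Z}$. Third, each of $\mathbf{c}$ and $\mathbf{c}'$ is reduced with respect to the remaining columns of its basis by Definition \ref{definition1}, and, restricted to entries in rows $1,\ldots,t-1$, these reductions are precisely $\Z$-reductions by the leading terms in $S$; so $\mathbf{c}$ and $\mathbf{c}'$ are both $\Z$-reduced representatives modulo $(S)_{\Z}$.

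The main obstacle is that the difference of two $\Z$-reduced elements need not itself be $\Z$-reduced, so one cannot conclude $\mathbf{c}=\mathbf{c}'$ simply by inspecting $\mathbf{c}-\mathbf{c}'$. I plan instead to invoke the standard uniqueness-of-canonical-form statement for reduced $\Z$-Gr\"obner bases: every coset of $(S)_{\Z}$ contains exactly one $\Z$-reduced representative. Combining the three ingredients above, $\mathbf{c}$ and $\mathbf{c}'$ are both reduced representatives of the same coset of $(S)_{\Z}$, so $\mathbf{c}=\mathbf{c}'$. Varying $\mathbf{c}$ over all columns of $G_{k,t}$ gives $G_{k,t}=G_{k+1,t}$, closing the induction.
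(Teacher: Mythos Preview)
Your overall induction scheme and the uniqueness-of-reduced-representative argument match the paper's approach for one of the two cases, but there is a genuine gap in your first ingredient. You assert that $G_{k,t}=G_{k,t}^{(d_t)}$, arguing that ``the prolongation step never raises the row-$t$ degree above $d_t$.'' That is true of the block $P_{k+1,t}=[G_{k,t}^{(d_t)},xG_{k,t}^{(d_t-1)}]$, but $G_{k+1,t}$ is extracted from $\hf(P_{k+1})$, which is computed from \emph{all} blocks $P_{k+1,1},\ldots,P_{k+1,n}$ simultaneously. Columns of $P_{k+1,j}$ for $j>t$ may carry row-$t$ entries of degree larger than $d_t$ (their row-$t$ degree grows with $k$ and is not truncated by the algorithm), and $\Z$-linear combinations of such columns that cancel in rows $>t$ can land in $G_{k+1,t}$ with row-$t$ degree strictly above $d_t$. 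Consequently a column $\mathbf c\in G_{k,t}\setminus G_{k,t}^{(d_t)}$ is discarded when forming $P_{k+1}$, and you have no argument placing $\mathbf c$ in $(P_{k+1})_\Z=(G_{k+1})_\Z$.

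The paper's proof splits precisely on this point: whether $\g\in G_{k,s+1}^{(d_{s+1})}$ or not. Your argument is essentially the paper's first case. For the second case the paper invokes Corollary~\ref{coro-n2} (derived from Lemma~\ref{claim-n2}) to show that such a $\g$ nevertheless lies in the $\Z$-span of $G_{k,1},\ldots,G_{k,s},G_{k,s+1}^{(d_{s+1})}$ together with suitable $x$-powers of $\g_{k,s+1,d_{s+1}}$; this is what allows one to conclude $(G_{k,1},\ldots,G_{k,s+1})_\Z=(G_{k+1,1},\ldots,G_{k+1,s+1})_\Z$ and then finish by uniqueness of reduced $\Z$-Gr\"obner bases. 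A minor separate point: $G_k$ is a PHNF, not a GHNF, so the reducedness and the strict degree ordering you invoke come from the HNF structure (Definition~\ref{def-hnf} and Lemma~\ref{lm-phnf}), not from Definition~\ref{definition1}.
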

\begin{proof}
  Clearly, if $G_{k}=G_{k+1}$, we have $G_{k,t}(t,\cdot)=G_{k+1,t}(t,\cdot)$ for $1\le t\le n$.
  We just need to show the opposite direction.
  In this condition, we prove $G_{k,t}=G_{k+1,t}$ by  induction on $t$.
  Since $G_{j,1}(1,\cdot)=G_{j,1}$ for any $j$, the lemma is valid for $t=1$.
  Suppose $G_{k,t}=G_{k+1,t}$ for $1\le t\le s<n$.
 Since  $G_{k,t}(t,\cdot)=G_{k+1,t}(t,\cdot)$ for $1\le t\le n$,
 for any $\g'\in G_{k+1,s+1}$,  there exist a $\g\in G_{k,s+1}$
 satisfying $\g(s+1)=\g'(s+1)$.
  If $\g\in G_{k,s+1}^{(d_{s+1})}$, we have $\g\in (G_{k+1})_{\Z}$.
  Then, $\g-\g'\in (G_{k+1})_{\Z}$.
  Since $(\g-\g')(t)=0$ for $s+1\le t\le n$, we have $\g-\g'\in (G_{k+1,1},\ldots,G_{k+1,s})_{\Z}=(G_{k,1},\ldots,G_{k,s})_{\Z}$.
  Thus, $\g'\in (G_{k,1},\ldots,G_{k,s+1}^{(d_{s+1})})_{\Z}$ and
  $(G_{k+1,1},\ldots,G_{k+1,s},G_{k+1,s+1}^{(d_{s+1})})_{\Z}=
  (G_{k,1},\ldots,G_{k,s},G_{k,s+1}^{(d_{s+1})})_{\Z}$.
  Then, $G_{k,s+1}^{(d_{s+1})}=G_{k+1,s+1}^{(d_{s+1})}$ since both of them are
   reduced $\Z$-Gr\"obner bases.
  If $\g\notin G_{k,s+1}^{(d_{s+1})}$,  we have $\g\in (G_{k,1},\ldots,G_{k,s},$ $G_{k,s+1}^{(d_{s+1})},x\g_{k,s+1,d_{s+1}},
  \ldots,x^l\g_{k,s+1,d_{s+1}})_{\Z}$ for some $l\ge 0$ by Corollary \ref{coro-n2}.
  So is $\g'$ since $G_{k,s+1}^{(d_{s+1})}=G_{k+1,s+1}^{(d_{s+1})}$.
  Thus we have $\g-\g'\in (G_{k,1},\ldots,G_{k,s})$ since $(G_{k,1},\ldots,G_{k,s},$ $G_{k,s+1}^{(d_{s+1})},x\g_{k,s+1,d_{s+1}},$
  $\ldots,$ $x^l\g_{k,s+1,d_{s+1}})_{\Z}$ is a $\Z$-\gb.
  Then $(G_{k,1},\ldots,G_{k,s+1})_{\Z}=(G_{k+1,1},\ldots,G_{k+1,s+1})_{\Z}$.
  Since both of them are reduced $\Z$-Gr\"obner bases, we have $G_{k,s+1}=G_{k+1,s+1}$ .
\end{proof}

We now show the correctness of the algorithm.
\begin{theorem}\label{nloopnum}
Algorithm {\rm GHNF}$_n$ is correct. Furthermore,
Step 2 of Algorithm {\rm GHNF}$_n$ terminates in at most $D+nd$ loops,
where $D=73n^8d^5(h+\log (n^2d)+1)$.
\end{theorem}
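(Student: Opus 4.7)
The plan is to mirror the structure of the proof of Theorem \ref{th-co1}, using the per-row information supplied by Lemma \ref{le-loopnequivalent}, Lemma \ref{claim-n2}, Corollary \ref{coro-n2}, and Lemma \ref{le-n2} in place of their $\Z[x]$-analogs. The correctness portion should follow from the termination condition by extracting a Gr\"obner basis out of $G_k$, while the bound $D+nd$ comes from combining the transformation-matrix degree bound of Theorem \ref{degreeboundtranmat} with a per-row ``stabilization'' tail of length at most $nd$.

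For correctness, I will argue that once Step 2 halts at the $k$-th loop, so that $G_k=G_{k+1}=\cdots$, the set $G_k$ is a Gr\"obner basis of $(F)_{\Zx}$. Any $\f\in(F)_{\Zx}$ lies in $(\widetilde{G}_l)_{\Z}$ for some $l\ge k$ by \cref{eqn-lpf2}. By Corollary \ref{coro-n2} (or equivalently Lemma \ref{claim-n2} under the termination hypothesis), $\widetilde{G}_l\subset (H_{l,1},\ldots,H_{l,n})_{\Z}$ where each $H_{l,t}$ is an explicit $\Z$-Gr\"obner basis obtained from $G_{k,t}^{(d_t)}$ together with prolongations of its last column. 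Hence $\overline{\f}^{G_k}=\0$, proving $G_k$ is a Gr\"obner basis. Step 3 then reads off the reduced Gr\"obner basis (the \gHNF) of $(F)_{\Zx}$ by Theorem \ref{th-gg}, exactly as in the one-dimensional case.

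For the loop count, the first task is to show that $G_D$ is ``rich enough'' to contain the reduced Gr\"obner basis of $(F)_{\Zx}$. By Theorem \ref{degreeboundtranmat}, the transformation matrix $U$ with $G=FU$ satisfies $\deg(U)\le D$, so every \gHNF column is already an integer combination of prolongations $x^iF$ with $i\le D$; consequently $\widetilde{G}_D$ contains the \gHNF of $F$. Theorem \ref{degreeheightboundforGHNF} bounds the row-wise degrees of the \gHNF by $d_t=(n-t+1)d$, i.e.~inside the ``cutoff'' used to define $P_{k,t}$. Lemma \ref{le-loopnequivalent} then gives $G_{D,t}^{(d_t)}(t,\cdot)=\widetilde{G}_{D,t}^{(d_t)}(t,\cdot)$ for every $t$, so the full row-content of the \gHNF is already visible in $G_D$.

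The final step, and the only genuinely delicate one, is bounding the extra loops needed to trigger the termination test $G_k=G_{k+1}$. By Lemma \ref{le-n2} this test reduces to verifying $G_{k,t}(t,\cdot)=G_{k+1,t}(t,\cdot)$ for each $t$. Row by row, I will run the same stabilization argument as in the last paragraph of the proof of Theorem \ref{th-co1}: once the truncation $G_{k,t}^{(d_t)}(t,\cdot)$ has stabilized (guaranteed after $D$ loops by Lemma \ref{le-loopnequivalent}), each subsequent loop forces at least one more coefficient of $G_{\cdot,t}(t,\cdot)$ to become permanent, since the remainder of $x\cdot(\text{last element})$ reduced by the stabilized part is the unique minimal element of the next degree. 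Because the row-$t$ degrees are bounded by $d_t\le nd$, at most $nd$ further loops suffice to stabilize every row simultaneously. Summing gives at most $D+nd$ loops, completing the proof. The main obstacle I foresee is making the per-row stabilization count totally rigorous when rows interact through Step 2 of the algorithm, but Lemma \ref{le-n2} isolates the rows from one another at the level of the termination test, which should let the one-dimensional argument be lifted verbatim row by row.
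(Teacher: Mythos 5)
Your proposal follows essentially the same route as the paper's proof: correctness via Lemma \ref{claim-n2} in place of Lemma \ref{claim-12}, the bound $D$ via Theorem \ref{degreeboundtranmat} together with the row-degree bound $d_t=(n-t+1)d$ and Lemma \ref{le-loopnequivalent} to transfer the Gr\"obner basis from $\widetilde{G}_D$ to $G_D$, and the extra $nd$ loops via Lemma \ref{le-n2} plus the per-row stabilization argument of Theorem \ref{th-co1}. The decoupling of rows that you flag as the delicate point is handled in the paper by invoking Lemmas \ref{le-st1} and \ref{le-loopnequivalent} (the computation of $G_{i,t}(t,\cdot)$ depends only on row $t$ after the $nd$-th loop), which is the same resolution you sketch.
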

\begin{proof}
Suppose Step 2 of the algorithm terminates in the $k$-th loop.
The fact that $G_k$ is a Gr\"obner basis of $(F)_{\Zx}$ can be proved
similarly to that of Theorem \ref{th-co1}, where
instead of Lemma \ref{claim-12}, we use Lemma \ref{claim-n2}.

We now prove the termination of the algorithm.
By Theorem \ref{degreeboundtranmat} and \cref{eqn-lpf2},
$\widetilde{G}_D$ contains the GHNF of $F$ and hence a Gr\"obner basis of $(F)_{\Zx}$ by  Theorem \ref{th-gg}.
By Lemma \ref{heightgeneral}, if ${\mathcal{C}}$ is the GHNF of $F$ and has form \cref{reducedGB},
then $\deg({\mathcal{C}}(r_i,\cdot))\le d_{r_i}=(n-r_i+1)d,i=1,\ldots,t$.
%has degree $\le d$.
Hence,  $G_D$ also contains a Gr\"obner basis of $(F)_{\Zx}$ by Lemma \ref{le-loopnequivalent}.
Similar to the $\Zx$ case, the termination condition may not be satisfied immediately
even if $G_i$ is a Gr\"obner basis of $(F)_{\Zx}$.
By Lemma \ref{le-n2}, Algorithm GHNF$_n$ terminates at the $(k+1)$-th loop if and only if $G_{k,t}(t,\cdot)=G_{k+1,t}(t,\cdot)$ for $1\le t\le n$.
By Lemma  \ref{le-st1} and Lemma \ref{le-loopnequivalent}, after the $nd$-th loop,
$\deg(G_{i,t}(t,\cdot))=d_t$ and the computation of $G_{i,t}(t,\cdot)$ only depends on $G_{i,t}(t,\cdot)$ for $1\le t\le n$.
Also note that if $G_{i}$ is a Gr\"obner basis, then $G_{i,t}$ is either empty or
a Gr\"obner basis.
Then, similar to the proof of Theorem \ref{th-co1},
we can show that after $D$-loop, $G_{i,t}(t,\cdot)$ are
Gr\"obner bases for $t=1,\ldots,n$ and
after that the loop terminates for at most $d_1=dn$ extra steps.
\end{proof}

\begin{theorem}
The worst bit size complexity of Algorithm {\rm GHNF}$_n$ is $O(n^{26+2\theta+\varepsilon}$ $d^{15+\theta+\varepsilon}(h+\log (n^2d))^{4+\varepsilon}$ $+n^{19}d^{11}(h+\log (n^2d))^{2}\log (n^2d)B(n^{11}d^6(h+\log (n^2d))^2))$, where $h=\height(F)$ and $\varepsilon>0$ is a sufficiently small number.
\end{theorem}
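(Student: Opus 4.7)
The plan is to follow the template of Theorem \ref{th-com1}, the analogous bound for the $\Zx$ case. Step 2 remains the dominant cost, and each of its loops performs one HNF computation on the integer matrix $M_k = \CM(P_k)$. By Theorem \ref{nloopnum}, the number of loops is bounded by $D + nd = O(n^8 d^5 \ell)$, where I write $\ell = h + \log(n^2 d) + 1$ for brevity; the total bit cost is then the cost-per-loop from Theorem \ref{th-hnfc} multiplied by this loop count.

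Four quantities must be estimated in order to apply Theorem \ref{th-hnfc} to $M_k$: its number of rows, number of columns, rank $r'$, and height $h_1$. Each block $P_{k,t}$ has at most $2d_t+1 = O(nd)$ columns, since its members form a reduced $\Z$-Gr\"obner basis whose $t$-th coordinate has degree bounded by $d_t$; summing over $t$ gives $O(n^2 d)$ columns. Inspecting $P_{k,s} = [G_{k-1,s}^{(d_s)}, xG_{k-1,s}^{(d_s-1)}]$ row by row, the maximum degree $v_t(P_k)$ of row $t$ can grow by at most one per loop (the only source of growth is the $x$-prolongation inside blocks with $s > t$), so $v_t(G_k) \le d_t + k = O(D)$ uniformly in $t$; hence $M_k$ has $O(nD) = O(n^9 d^5 \ell)$ rows and its rank $r' \le O(n^2 d)$ (the column count).

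The height $h_1$ is the most delicate estimate. As in the proof of Theorem \ref{th-com1}, Lemma \ref{le-loopnequivalent} together with the full-prolongation identity \cref{eqn-lpf2} identifies $\CM(G_k)$ as a submatrix of the HNF of $\CM(\cup_{i=0}^{k} x^i F)$; the latter has input height $h$ and rank at most $n(d+k+1) = O(nD)$. Applying the height estimate of Theorem \ref{th-hnfc} to this underlying HNF yields $h_1 = O(n^9 d^5 \ell^2)$. Combining the estimates gives $\log \beta = r'(\tfrac12\log r' + h_1) = O(n^{11} d^6 \ell^2)$, which matches the argument of $B(\cdot)$ in the announced complexity.

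Substituting these bounds into the cost formula of Theorem \ref{th-hnfc} gives a per-loop cost of $O\bigl((\text{rows}\cdot\text{cols})\, r'^{\theta-2}\log\beta\bigr)$ plus $O\bigl((\text{rows}\cdot\text{cols})\,\log r'\cdot B(\log\beta)\bigr)$, up to polylogarithmic factors that are absorbed into $\varepsilon$; multiplying through by the loop count $O(n^8 d^5 \ell)$ yields exactly the two terms stated in the theorem. The main technical obstacle is the uniform bound on $h_1$: an individual intermediate matrix from iterated HNFs can swell well beyond the final GHNF's height $O(n^3 d^2 \ell)$ guaranteed by Theorem \ref{degreeheightboundforGHNF}, and the trick of viewing all of them as submatrices of one large HNF --- via \cref{eqn-lpf2} and Lemma \ref{le-loopnequivalent} --- is what keeps this estimate tractable.
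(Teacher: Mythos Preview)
Your proposal is correct and follows essentially the same route as the paper: bound the row and column counts of $M_k$, control its height via the HNF of $\CM(\bigcup_{i=0}^k x^iF)$ using \cref{eqn-lpf2} and Theorem~\ref{th-hnfc}, derive $\log\beta=O(n^{11}d^6\ell^2)$, plug into Theorem~\ref{th-hnfc}, and aggregate over the $D+nd$ loops. The only cosmetic difference is the accounting---the paper keeps the $k$-dependent row count $n(d+k+1)$ and sums $\sum_{k\le D+nd}(d+k+1)=O(D^2)$, whereas you take the uniform worst case $O(nD)$ and multiply by the loop count $O(D)$---and both give the same $O(nD^2)$ factor and hence the stated bound.
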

\begin{proof}
In the $k$-th loop in Step 2, we need to compute the HNF of an integer matrix $M_k$
whose size is $n(d+k+1)\times s$, where $s\le (2d+1)+(4d+1)+\cdots+(2nd+1)=n(n+1)d+n$.
By  Theorems \ref{th-hnfc}, \ref{nloopnum}, and \cref{eqn-lpf2},
the
height of $M_k$ $\le n(D+nd+1)(\frac{1}{2}\log (n(D+nd+1))+h)=O(n^9d^5(h+\log(n^2d))^2):=h_2$.
The $\log \beta$ in  Theorem \ref{th-hnfc} can be taken as $\log \beta=(n(n+1)d+n)(\frac{1}{2}\log(n(n+1)d+n)+h_2)=O(n^{11}d^{6}(h+\log (n^2d))^2)$.
 To simplify the formula for the complexity bound,
we replace $O(\log^2(s)\log\log(s)\log\log\log(s))$ by $O(s^{\varepsilon})$ for
an sufficiently small number $\varepsilon$.
 The complexity in the $k$-th loop is
   $O(n(d+k+1)\cdot (n(n+1)d+n)^{\theta-1}(\log \beta) M(\log\log \beta)/(\log \log \beta)+n(d+k+1)\cdot(n(n+1)d+n)\log(n(n+1)d+n)B(\log \beta)) =(d+k+1)O(n^{10+2\theta+\varepsilon}d^{5+\theta+\varepsilon}(h+\log (n^2d))^{2+\varepsilon}+n^3d\log (n^2d)B(n^{11}d^6(h+\log (n^2d))^2)),
 $
  for any $\varepsilon>0$.
  Hence the total complexity is $\sum_{k=0}^{D+nd}
    (d+k+1)O(n^{10+2\theta+\varepsilon}$ $d^{5+\theta+\varepsilon}(h+\log (n^2d))^{2+\varepsilon}+n^3d\log (n^2d)B(n^{11}d^6(h+\log (n^2d))^2))
    =O(n^{26+2\theta+\varepsilon}d^{15+\theta+\varepsilon}$ $(h+\log (n^2d))^{4+\varepsilon}+n^{19}d^{11}(h+\log (n^2d))^{2}\log (n^2d)B(n^{11}d^6(h+\log n^2d)^2)).$
 %\begin{align*}
%    &~~~~\sum_{k=0}^{D+nd}
%    (d+k+1)O(n^{10+2\theta+\varepsilon}d^{5+\theta+\varepsilon}(h+\log (n^2d))^{2+\varepsilon}\\&~~~~+n^3d\log (n^2d)B(n^{11}d^6(h+\log (n^2d))^2)) \\
%    &=O(n^{26+2\theta+\varepsilon}d^{15+\theta+\varepsilon}(h+\log (n^2d))^{4+\varepsilon}\\&~~~~+n^{19}d^{11}(h+\log (n^2d))^{2}\log (n^2d)B(n^{11}d^6(h+\log n^2d)^2)).
%  \end{align*}
 %
\end{proof}

Similar to Corollary  \ref{cor-com1}, by setting
$\theta=2.376$ and $\varepsilon=0.001$, we have
\begin{coro}
The worst  bit size complexity of Algorithm {\rm GHNF}$_n$ is $O(n^{30.753}$ $d^{17.377}(h+\log (n^2d))^{4.001})$.
\end{coro}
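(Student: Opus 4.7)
The plan is to derive the corollary by a direct substitution into the worst-case bound stated in the preceding theorem, followed by a routine domination argument to discard the second summand. First, I would fix $\theta = 2.376$ (the best known matrix-multiplication exponent, as mentioned just before Theorem~\ref{th-hnfc}) and $\varepsilon = 0.001$. Plugging these into the exponents of the first summand gives
\begin{equation*}
n^{26+2\theta+\varepsilon} \;=\; n^{26+4.752+0.001} \;=\; n^{30.753},
\qquad
d^{15+\theta+\varepsilon} \;=\; d^{15+2.376+0.001} \;=\; d^{17.377},
\end{equation*}
and $(h+\log(n^2d))^{4+\varepsilon}=(h+\log(n^2d))^{4.001}$, which produces exactly the target expression.

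Next I would show that the second summand is absorbed into the first. Recall $B(k) = O(k(\log k)^2 \log\log k)$, so with $k = n^{11}d^{6}(h+\log(n^2 d))^{2}$ we have $B(k) = O(n^{11}d^{6}(h+\log(n^2 d))^{2}\cdot L)$, where $L$ is a polylogarithmic factor in $n,d,h$. Multiplying through, the second summand becomes
\begin{equation*}
O\!\left(n^{19}d^{11}(h+\log(n^2 d))^{2}\log(n^2 d)\cdot n^{11}d^{6}(h+\log(n^2 d))^{2}\cdot L\right)
\;=\; O\!\left(n^{30}d^{17}(h+\log(n^2 d))^{4}\cdot L'\right),
\end{equation*}
with $L'$ still only polylogarithmic in $n,d,h$. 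Because $n^{0.753}d^{0.377}(h+\log(n^2d))^{0.001}$ grows faster than any fixed polylogarithm in $n,d,h$, this bound is dominated by $O(n^{30.753}d^{17.377}(h+\log(n^2 d))^{4.001})$.

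Combining the two steps gives the claimed complexity. I do not anticipate a substantive obstacle; the only mildly delicate point is the polylog-absorption step, where one must be explicit that the factor $L'$ coming from the $B(\cdot)$ expansion (and the extra $\log(n^2 d)$) is swallowed by the small fractional powers $n^{0.753}$, $d^{0.377}$, and $(h+\log(n^2 d))^{0.001}$, which is standard in complexity estimates of this flavor.
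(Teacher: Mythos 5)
Your proposal is correct and matches the paper's (very terse) argument: the paper simply sets $\theta=2.376$ and $\varepsilon=0.001$ in the preceding theorem and, as in Corollary~\ref{cor-com1}, drops the second summand as dominated by the first. Your arithmetic for the exponents ($26+2\theta+\varepsilon=30.753$, $15+\theta+\varepsilon=17.377$) and the polylog-absorption step are exactly what is needed.
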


Similar to  Remark \ref{rem-com1}, the number $m$ in the input is omitted in the complexity bound.

\section{Experimental results}

The algorithms presented in Section 4 have been implemented in  both Maple 18
and Magma 2.21-7. The timings given in this section are collected on a
PC with Intel(R) Xeon(R) CPU E7-4809 with 1.90GHz.
For each set of inpute parameters, we use the average timing of ten experiments
for random polynomials with coefficients between $[-100,100]$.

Table \ref{Z[x]eps} shows the timings of the Algorithm {\rm GHNF}$_1$ in Magma 2.21-7 and Maple 18,
and that of the Gr{\"o}bnerBasis command in Magma 2.21-7.
From  Theorem \ref{th-com1}, the degree of the input polynomials is
the dominant factor in the computational complexity of the algorithm.
In the experiments, the length of the input polynomial vectors is fixed to be 3.
The degrees are in the range $[45,80]$.

From the figure, we have the following observations.
The new algorithm is much more efficient than
the Gr{\"o}bnerBasis algorithm  in Magma.
As far as we know, the Gr{\"o}bnerBasis algorithm in Magma also
uses an  F4 style algorithm to compute the Gr{\"o}bner basis
and is also based on the computation of HNF of the coefficient matrices.
In other words, the Gr{\"o}bnerBasis algorithm in Magma
is quite similar to our algorithm and the comparison is fair.
The reason for Algorithm {\rm GHNF}$_1$ to be more efficient
is due to the way how the prolongation is done in Step 2 of algorithm {\rm GHNF}$_1$.
By prolonging $xg_1,\ldots,xg_{t-1}$ instead of $xg_1,\ldots,xg_{t}$, the size of the coefficient matrices is nice controlled.
This fact is more important in algorithm GHN$_n$.
Our second observation is that the complexity bound $O^{~}(d^{13.38}h^{2.004})$
in  Corollary \ref{cor-com1} is not reached in most cases and the
algorithm terminates in a much smaller number of loops.
So a further problem is to find a better complexity bound
or the average complexity for the algorithm.

\begin{figure}[H]
\centering
\includegraphics[scale=0.2]{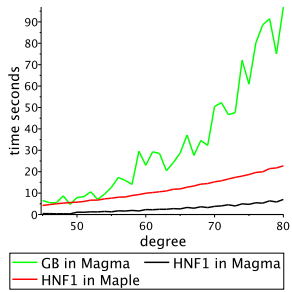}
\caption{Comparison of {\rm GHNF}$_1$ and Gr{\"o}bnerBasis in Magma and Maple: the $\Zx$ case}
\label{Z[x]eps}
\end{figure}

In Table \ref{table-1}, we give the timings for several input where the
polynomials have larger degrees. Other parameters are the same.
We see that for input polynomials with degree larger than 150,
the Gr{\"o}bnerBasis algorithm in Magma cannot compute in the GHNF in reasonable time.
The difference for the timings of Algorithm {\rm GHNF}$_1$  in Magma and Maple is mainly due to the different implementations of the HNF algorithms.

\begin{table}[H]
\caption{Comparison of {\rm GHNF}$_1$ and Gr{\"o}bnerBasis in Magma and Maple: the $\Z[x]$ case}\label{table-1}
\begin{center}
\begin{tabular}{|c|c|c|c|}
\hline
d & {\rm GHNF}$_1$ in Maple 18 & {\rm GHNF}$_1$ in Magma 2.21-7 & GB in Magma 2.21-7 \\
\hline
100 & 50.5932 & 19.048 &  214.91 \\
\hline
150 & 202.8135 & 104.827 & $>$1000  \\
\hline
200 & 590.7763 & 384.946 &  $>$1000 \\
\hline
\end{tabular}
\end{center}
\end{table}

Table \ref{Z[x]neps} plots the timings of Algorithm {\rm GHNF}$_n$ implemented in Magma 2.21-7 and Maple 18,
where the input random polynomial matrices are of size $3\times 3$ with degrees in $[2,30]$.
There is no implementation of Gr\"obner bases methods in Magma for $\Z[x]$-modules,
so we cannot make a comparison with Magma in this case.
In line with our complexity analysis given in Section 4,
algorithm {\rm GHNF}$_n$ slows down rapidly when $n>1$.

\begin{figure}[H]
\centering
\includegraphics[scale=0.2]{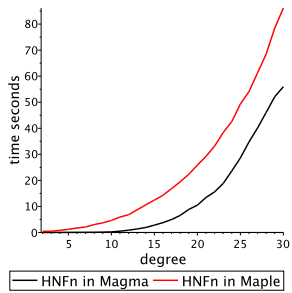}
\caption{Timings of {\rm GHNF}$_n$ in Magma and Maple}
\label{Z[x]neps}
\end{figure}

In Table \ref{table-n}, we list the timings of Algorithm {\rm GHNF}$_n$ for
several examples with larger degrees. This shows the polynomial-time
natural of the algorithm, because the algorithm works
for quite large $d$. Also, for large $d$, the Maple implementation
becomes faster.

\begin{table}[H]
\caption{Timings of {\rm GHNF}$_n$ in Magma and Maple}\label{table-n}
\begin{center}
\begin{tabular}{|c|c|c|}
\hline
d &{\rm GHNF}$_n$ in Maple 18 & {\rm GHNF}$_n$ in Magma 2.21-7  \\
\hline
40 &245.689
& 236.029\\
\hline
50 & 554.452
& 637.05  \\
\hline
\end{tabular}
\end{center}
\end{table}

\section{Conclusion}
In this paper, a polynomial-time algorithm is given to compute the \gHNFs of matrices over $\Zx$,
or equivalently, the reduced Gr\"obner basis of a $\Zx$-lattice.
The algorithm adopts the F4 strategy to compute Gr\"obner bases,
where a novel prolongation is designed so that the coefficient
matrices under consideration have smaller sizes than existing methods.
Existing efficient algorithms are used to compute the HNF for these coefficient
matrices.
Finally, nice degree and height bounds of elements of the reduced Gr\"obner basis are given.
The algorithm is implemented in Maple and Magma and is shown to be
more efficient than existing algorithms.

\section*{Acknowledgement}
We would like to thank Dr. Jianwei Li
for providing us information on the complexity
of computing Hermite normal forms.

%\bibliographystyle{siamplain}
%\bibliography{references}

\begin{thebibliography}{00}

\bibitem{aschenbrenner2004ideal}
 M.~Aschenbrenner,
 \textit{Ideal membership in polynomial rings over the integers},
J. Amer. Math. Soc.,
 17 (2004),  pp. 407-441.


\bibitem{beckermann2006normal}
 B.~Beckermann, G.~Labahn, and G.~Villard,
 \textit{Normal forms for general polynomial matrices},
 J. Symbolic Comput., 41 (2006), pp. 708-737.


\bibitem{bosch1984non}
 S.~Bosch, U. G{\"u}ntzer, and R. Remmert,
 \textit{Non-Archimedean analysis. A systematic approach to rigid analytic geometry},
 Grundlehren Math. Wiss., Berlin Heidelberg, 1984.

\bibitem{buchberger2006bruno}
 B.~Buchberger,
 \textit{Bruno buchberger's phd thesis 1965: An algorithm for finding the
  basis elements of the residue class ring of a zero dimensional polynomial
  ideal},
  J. Symbolic Comput., 41 (2006), pp. 475-511.


\bibitem{cohen1993course}
 H.~Cohen,
 \textit{A course in computational algebraic number theory},
 Springer Science \& Business Media, 138, 1993.

\bibitem{cox2005using}
 D.~Cox, J.~Little, and D.~O'shea,
 \textit{Using algebraic geometry}, Springer-Verlag, New York,
 2005.

\bibitem{cox-toric}
 D.~Cox,  J.~Little,  and H.~Schenck,
 \textit{Toric Varieties},  Springer-Verlag, New York, 2010.

\bibitem{xl1}
 N.~Courtois, A.~Klimov, J.~Patarin, and A.~Shamir,
 \textit{Efficient algorithms for solving overdefined
 systems of multivariate polynomial equations},
 Eurocrypt'2000, LNCS 1807, Springer, 2000, pp. 392-407.

\bibitem{domich1987hermite}
 P.D. Domich, R.~Kannan, and L.E.~Trotter Jr,
 \textit{Hermite normal form computation using modulo determinant arithmetic},
 Math. Oper. Res.,
 12 (1987), pp. 50-59.


\bibitem{eisenbud1995}
 D. Eisenbud,
 \textit{Commutative Algebra: with a view toward algebraic geometry},
 Springer, 1995.

\bibitem{faugere1999new}
 J.C.~Faugere,
 \textit{A new efficient algorithm for computing Gr{\"o}bner bases (F4)},
 J. Pure Appl. Algebra,
 139 (1999),  pp. 61-88.


\bibitem{gao2014binomial}
 X.S.~Gao, Z.~Huang, and C.M. Yuan,
 \textit{Binomial difference ideal and toric difference variety},
 arXiv preprint arXiv:1404.7580 v2., 2015.

\bibitem{gelfond1960}
 A.O.~Gelfond,
 \textit{Transcendental and Algebraic Numbers},
 Dover, New York, 1960.


\bibitem{kaltofen1987fast}
 E.~Kaltofen, M.S. Krishnamoorthy, and B.D. Saunders,
 \textit{Fast parallel computation of Hermite and Smith forms of polynomial
  matrices},
  SIAM Journal on Algebraic Discrete Methods,
  8 (1987), pp. 683-690.

\bibitem{kandri1988computing}
 {\"A}.~Kandri-Rody and D.~Kapur,
 \textit{Computing a Gr{\"o}bner basis of a polynomial ideal over a Euclidean domain},
  J. Symbolic Comput.,
  6 (1988), pp. 37-57.

\bibitem{kannan1979polynomial}
 R.~Kannan and A.~Bachem,
 \textit{Polynomial algorithms for computing the Smith and Hermite normal
  forms of an integer matrix},
  SIAM J. Comput.,
  8 (1979), pp. 499-507.

\bibitem{lazard1985grobner}
 D.~Lazard,
  \textit{Ideal bases and primary decomposition: case of two variables},
   J. Symbolic Comput.,
3 (1985), pp. 261-270.

\bibitem{Lichtblau2003}
 D.~Lichtblau,
  \textit{Revisiting strong Gr\"oner bases over Euclidean domains}. Wolfram Library Archive (2003).
% \textit{Gr{\"o}bner bases, gaussian elimination and resolution of systems of algebraic equations}, Computer Algebra, pp. 146-156 (1983).

\bibitem{MM}
 E.~Mayr and A.~ Meyer,
 \textit{The complexity of the word problems for commutative semigroups and polynomial ideals},
 Advance of Mathematics,
 46 (1982), pp. 305-329.

\bibitem{Micciancio2001}
D.~Micciancio and B.~Warinschi,
 \textit{A linear space algorithm for computing the hermite normal form},
 Proc. ISSAC'01, pp. 231-236, ACM Press, New York, 2001.

\bibitem{Mulders2003377}
 T.~Mulders and A.~Storjohann,
 \textit{On lattice reduction for polynomial matrices},
 J. Symbolic Comput.,
 35 (2003), pp. 377-401.


\bibitem{Storjohann2013}
 A.~Storjohann,
 \textit{Algorithms for matrix canonical forms[D]},
 PhD Thesis, Swiss Federal Institute of Technology,
  2013.

\bibitem{storjohann1996asymptotically}
 A.~Storjohann and G.~Labahn,
 \textit{Asymptotically fast computation of Hermite normal forms of integer
  matrices},
  \textit{Proc. ISSAC'96}, ACM Press,  1996, pp. 259-266.


%
\bibitem{zippel1993effective}
 R. Zippel,
  \textit{Effective Polynomial Computation},
  Kluwer Academic Publishers, Boston ,1993,.
\end{thebibliography}

\end{CJK*}
\end{document}